\definecolor{weborange}{rgb}{.8,.3,.3}
\definecolor{webblue}{rgb}{0,0,.8}
\definecolor{internallinkcolor}{rgb}{0,.5,0}
\definecolor{externallinkcolor}{rgb}{0,0,.5}
\providecommand{\remove}[1]{}
\newcommand{\Draft}[1]{\ifdefined\IsDraft \texttt{ #1} \fi}
\newcommand{\LLNCS}[1]{\ifdefined\IsLLNCS #1 \fi}
\newcommand{\TLLNCS}[2]{\ifdefined\IsLLNCS#1\else #2 \fi}
\newcommand{\sdotfill}{\textcolor[rgb]{0.8,0.8,0.8}{\dotfill}} 
\newenvironment{protocol}{\begin{proto}}{\vspace{-\topsep}\sdotfill\end{proto}}
\newenvironment{algorithm}{\begin{algo}}{\vspace{-\topsep}\sdotfill\end{algo}}
\newcommand{\ie} {i.e.,\ }
\newcommand{\eg} {e.g.,\ }
\newcommand{\wrt} {with respect to\ }
\newcommand{\wlg} {without loss of generality\xspace}
\renewcommand{\vec}[1]{\textbf{#1}}
\newcommand{\ceil}[1]{\left\lceil #1 \right\rceil}
\newcommand{\set}[1]{\ens{#1}}
\newcommand{\floor}[1]{\left \lfloor#1 \right \rfloor}
\newcommand{\eqdef}{:=}
\newcommand{\N}{{\mathbb{N}}}
\newcommand{\zo}{\set{0,1}}
\newcommand{\zn}{\zo^n}
\newcommand{\zs}{\zo^\ast}
\newcommand{\suchthat}{{\;\; : \;\;}}
\newcommand{\getsr}{\mathbin{\stackrel{\mbox{\tiny R}}{\gets}}}
\newcommand{\xor}{\oplus}
\newcommand{\eps}{\varepsilon}
\newcommand{\Exp}{\operatorname*{E}}
\newcommand{\Supp}{\operatorname{Supp}}
\newcommand{\MathFam}[1]{\mathcal{#1}}
\newcommand{\FFam}{\MathFam{F}}
\newcommand{\p}{{{\mathsf{P}}}}
\renewcommand{\cref}{\Cref}
	\newaliascnt{claiml}{theorem}
	\newtheorem{claiml}[claiml]{Claim}
	\renewenvironment{claim}{\begin{claiml}}{\end{claiml}}
	\newtheorem{theorem}{Theorem}[section]
	\newaliascnt{lemma}{theorem}
	\newtheorem{lemma}[lemma]{Lemma}
	\newaliascnt{claim}{theorem}
	\newtheorem{claim}[claim]{Claim}
	\newaliascnt{corollary}{theorem}
	\newtheorem{corollary}[corollary]{Corollary}
	\newaliascnt{proposition}{theorem}
	\newaliascnt{conjecture}{theorem}
	\newaliascnt{definition}{theorem}
	\newtheorem{definition}[definition]{Definition}
	\newaliascnt{remark}{theorem}
	\newaliascnt{example}{theorem}
\crefname{lemma}{Lemma}{Lemmas}
\crefname{figure}{Figure}{Figures}
\crefname{claim}{Claim}{Claims}
\crefname{corollary}{Corollary}{Corollaries}
\crefname{proposition}{Proposition}{Propositions}
\crefname{conjecture}{Conjecture}{Conjectures}
\crefname{definition}{Definition}{Definitions}
\crefname{remark}{Remark}{Remarks}
\crefname{exmaple}{Example}{Examples}
\newaliascnt{construction}{theorem}
\crefname{construction}{Construction}{Constructions}
\newaliascnt{fact}{theorem}
\newtheorem{fact}[fact]{Fact}
\crefname{fact}{Fact}{Facts}
\newaliascnt{notation}{theorem}
\crefname{notation}{Notation}{Notation}
\crefname{equation}{Equation}{Equations}
\newaliascnt{proto}{theorem}
\newtheorem{proto}[proto]{Protocol}
\crefname{proto}{protocol}{protocols}
\newaliascnt{algo}{theorem}
\newtheorem{algo}[algo]{Algorithm}
\crefname{algo}{algorithm}{algorithms}
\newaliascnt{expr}{theorem}
\newtheorem{expr}[expr]{Experiment}
\crefname{experiment}{experiment}{experiments}
\newcommand{\stepref}[1]{Step~\ref{#1}}
\def\FullBox{$\Box$}
\def\qed{\ifmmode\qquad\FullBox\else{\unskip\nobreak\hfil
\penalty50\hskip1em\null\nobreak\hfil\FullBox
\parfillskip=0pt\finalhyphendemerits=0\endgraf}\fi}
\def\qedsketch{\ifmmode\Box\else{\unskip\nobreak\hfil
\penalty50\hskip1em\null\nobreak\hfil$\Box$
\parfillskip=0pt\finalhyphendemerits=0\endgraf}\fi}
\newcommand{\ex}[2]{\Exp_{#1}\left[#2\right]}
\newcommand{\pr}[1]{\Pr\left[#1\right]}
\newcommand{\ppr}[2]{\Pr_{#1}\left[#2\right]}
\newcommand{\Sa}{\mathsf{S}}
\newcommand{\ens}[1]{\left\{#1\right\}}
\newcommand{\size}[1]{\left|#1\right|}
\newcommand{\out}{\operatorname{out}}
\newcommand{\trans}{{\sf trans}}
\newcommand{\prob}[1]{\mathsf{\textsc{#1}}}
\newcommand{\SD}{\prob{SD}}
\newcommand{\cs}{{\cal{S}}}
\newcommand{\cu}{{\cal{U}}}
\newcommand{\q}{{\cal{Q}}}
\newcommand{\cE}{\mathcal{E}}
\newcommand{\st}{\suchthat}
\newcommand{\Ac}{\mathsf{A}}
\newcommand{\Bc}{\mathsf{B}}
\newcommand{\Ec}{\mathsf{E}}
\newcommand{\Pc}{\operatorname{P}}
\newcommand{\hide}[1]{ }
\DeclareMathOperator*{\E}{\Exp}
\newcommand{\CC}{\mathrm{CC}}
\newcommand{\rv}[1]{\mathrm{#1}}
\DeclareMathOperator{\MI}{I}
\DeclareMathOperator{\HH}{H}
\newcommand{\nat}{\mathbb{N}}
\newcommand{\given}{\medspace | \medspace}
\newcommand{\Eve}{\mathrm{Eve}}
\newcommand{\Suc}[1]{\mathrm{AccGap}\left(#1\right)}
\newcommand{\Acc}[2]{\mathrm{Acc}_{#1}\left(#2\right)}
\newcommand{\SDC}[2]{\SD\left(\left(#1\right),\left(#2\right)\right)}
\newcommand{\SDL}[2]{\SD&(\left(#1\right),\\
	&\left(#2\right))}
\newcommand{\ISDCR}[2]{\SD\left(\left(#1\right)\times \left(#2\right),\left(#1,#2\right)\right)}
\newcommand{\SDI}{I_{SD}}
\newcommand{\ISD}[3]{\SDI\left(#1;#2|_{#3}\right)}
\newcommand{\ISDC}[2]{\SDI\left(#1;#2\right)}
\newcommand{\varcounter}{\mathsf{counter}}
\newcommand{\rvcounter}{\rv{\mathsf{Counter}}}
\newcommand{\varj}{j}
\newcommand{\x}{x}
\newcommand{\y}{y}
\newcommand{\Com}{\mathsf{Com}}
\newcommand{\Dist}{\mathsf{Dist}}
\newcommand{\Set}{\mathsf{Set}}
\newcommand{\Pitilde}{\Lambda_\Com}
\newcommand{\Atilde}{\Ac_\Com}
\newcommand{\Btilde}{\Bc_\Com}
\newcommand{\Pitag}{{\Pitilde'}}
\newcommand{\Atag}{{\Atilde'}}
\newcommand{\Btag}{{\Btilde'}}
\newcommand{\Piline}{{\Lambda_\Set}}
\newcommand{\Aline}{{\Ac_\Set}}
\newcommand{\Bline}{{\Bc_\Set}}
\newcommand{\Pz}{{\Lambda_z}}
\newcommand{\Az}{{\Ac_z}}
\newcommand{\Bz}{{\Bc_z}}
\newcommand{\Pcom}{\Lambda_\Com}
\newcommand{\Acom}{\Ac_\Com}
\newcommand{\Bcom}{\Bc_\Com}
\newcommand{\Pdist}{\Lambda_\Dist}
\newcommand{\Adist}{\Ac_\Dist}
\newcommand{\Bdist}{\Bc_\Dist}
\newcommand{\Pdisttag}{{\Pdist'}}
\newcommand{\Adisttag}{{\Adist'}}
\newcommand{\Bdisttag}{{\Bdist'}}
\newcommand{\setx}{{\cal{X}}}
\newcommand{\sety}{{\cal{Y}}}
\DeclareMathAlphabet \mathbfcal{OMS}{cmsy}{b}{n}
\newcommand{\setX}{{\rv{X}}}
\newcommand{\setY}{{\rv{Y}}}
\title{On the Communication Complexity of Key-Agreement  Protocols
	\Draft{\\\small \sc Working Draft: Please Do Not Distribute}\thanks{An extended abstract of this work appeared in ITCS 2019 \cite{haitner2019communication}.}
}
 \author{}
 \date{}
 \author{Iftach Haitner\thanks{The Blavatnik school of computer science, Tel Aviv University. E-mail: \texttt{iftachh@cs.tau.ac.il}. Member of  the Check Point Institute for Information Security.} %
 	\footnote{Supported by ERC starting grant 638121.}
 	\and Noam Mazor\footnotemark[3]~\thanks{The Blavatnik school of computer science, Tel Aviv University. E-mail: \texttt{noammaz@gmail.com}.}
 	\and Rotem Oshman \thanks{The Blavatnik school of computer science, Tel Aviv University. E-mail: \texttt{rotem.oshman@gmail.com}. Supported by the Israeli Centers of Research Excellence  program   4/11 and  BSF grant  2014256.} 
 	\and Omer Reingold \thanks{Computer Science Department, Stanford University. E-mail: \texttt{reingold@stanford.edu}. Supported by NSF grant CCF-1749750.}
 	\and Amir Yehudayoff \thanks{Department of Mathematics, Technion-Israel Institute of Technology. E-mail: \texttt{amir.yehudayoff@gmail.com}. Supported by ISF grant 1162/15.} 
 } 
\begin{document}

\begin{titlepage}

\maketitle

\begin{abstract}
	Key-agreement protocols whose security is proven in the \emph{random oracle model} are an important alternative
	to protocols based on public-key cryptography.
	In the random oracle model, the parties and the eavesdropper have access to a shared random function (an ``oracle''),
	but the parties are limited in the number of queries they can make to the oracle.
	The random oracle serves as an abstraction for black-box access to a symmetric cryptographic primitive, such as a collision resistant hash.
	Unfortunately, as shown by \citeauthor{ImpagliazzoRu89} [STOC '89]  and \citeauthor{BarakGhidary} [Crypto '09], such protocols can only guarantee  limited  secrecy:  the key of any  $\ell$-query protocol can be revealed by an   $O(\ell^2)$-query adversary. This quadratic gap between the query complexity of the honest parties and the eavesdropper matches the  gap obtained by the  \textit{Merkle's Puzzles} protocol of \citeauthor{Merkle82a}   [CACM '78].

	In this work we tackle a new aspect of key-agreement protocols in the random oracle model: their \emph{communication complexity}.
	In Merkle's Puzzles, to obtain secrecy against an eavesdropper that makes roughly $\ell^2$ queries,
	the honest parties need to exchange $\Omega(\ell)$ bits.
	We show that for protocols with certain natural properties, ones that  Merkle's Puzzle has,   such high communication is unavoidable. Specifically, this is the case  if the honest parties' queries are uniformly random, or alternatively if the protocol uses non-adaptive queries and has only two rounds.
	Our proof for the first setting uses a novel reduction from the set-disjointness problem in two-party communication complexity.
	For the second setting we prove the lower bound directly, using information-theoretic arguments.

Understanding the communication complexity of  protocols whose security is  proven in the random-oracle model is an important question in the study of practical protocols. Our results and proof techniques are a first step in this direction.
\end{abstract}

\noindent\textbf{Keywords: key agreement; random oracle; communication complexity; Merkle's puzzles}

\thispagestyle{empty}
\end{titlepage}

\section{Introduction}\label{sec:Intro}
In a key-agreement protocol~\cite{DiffieHe76}, two parties  communicating over an insecure channel want to securely agree on a shared secret key,
such that an eavesdropper observing their communication cannot find the  key.
For example, given a hash function $h : [n] \rightarrow [N]$ that is hard to invert,
the players can execute the following protocol, called \emph{Merkle's puzzles}~\cite{Merkle82a}:
we fix an arbitrary parameter $\ell \approx \sqrt{n}$, and the parties select uniformly random subsets $A = \set{a_1,\ldots,a_{\ell}},B = \set{b_1,\ldots,b_{\ell}} \subseteq [n]$ (respectively) of size $\ell$.
We choose $\ell, n$ such that with constant probability there is a unique intersection, $|A \cap B| = 1$.
The first party evaluates $h$ on every element $a \in A$,
and sends $h(a_1),\ldots,h(a_{\ell})$ to the second party,
which then looks for a unique element $b \in B$ such that $h(b) = h(a_i)$ for some $i \in [\ell]$.
If found, the second party sends the index $i$ to the first party and outputs $b$ as the secret key;
the second party outputs $a_i$ as the secret key.
Because $h$ is a ``good'' hash function and $h(b) = h(a_i)$, it is likely that $b = a_i$, so the players output the same key.
Moreover, since $h$ is hard to invert, an eavesdropper that tries to find the secret key after seeing $h(a_1),\ldots,h(a_{\ell}), i$
must essentially compute $h$ on the entire universe in order to invert $h$ and find $a_i$.
Thus, we have a quadratic gap between the work performed by the eavesdropper, which must compute $\Omega(\ell^2)$ hashes, and the work performed by the parties, which compute $\ell$ hashes each.

Ideally we would strive for an \emph{exponential} gap between the work required to break the security of the protocol and the work of the honest parties.
There are numerous candidate constructions of such key-agreement schemes, \eg  \cite{RSA76,rabin1979digitalized,AjtaiDw97,mceliece1978public}, based on assumptions implying that \emph{public-key encryption schemes} exist.
A fundamental open question is whether we can design key-agreement protocols based on the security of symmetric primitives, \eg \emph{private-key} encryption (e.g., collision resistant hash);
the security of such primitives is believed to be more robust than public-key encryption.
A very important step in this direction was made by \citet{BarakGhidary} (following \citet{ImpagliazzoRu89}): they showed that as long as the symmetric primitive is used as a black box, the quadratic gap achieved by Merkle's puzzles is the best possible.

The notion of ``black box'' is formalized by the \emph{random oracle model}: instead of a concrete hash function $h$, we assume that the parties have access to a \emph{random oracle} $\rv{F} : [n] \rightarrow [n]$, a perfectly random function. The random oracle is ``the best hash function possible'' (w.h.p.), so lower bounds proven in the random oracle model hold for any instantiation where the oracle is replaced by a one-way function.
Thus, the lower bound of~\citet{BarakGhidary} rules out any black-box key-agreement scheme from one-way functions that achieves a better than quadratic gap between the eavesdropper's work and the honest parties.

While a quadratic gap  between the $\ell$-query honest parties and  the $\ell^2$-query eavesdropper might not seem like much,
and ideally we would wish for an exponential gap,
on modern architecture it can yield a good enough advantage, assuming that security is preserved when the random oracle is replaced with a fixed hash function.
For example, 
a consumer-level CPU (Intel Core i5-6600) can compute 5 million SHA-256 hashes per second, and specialized hardware for SHA-256 computation (for example, AntMiner S9) can compute $14\times 10^{12}$ hashes per second~\cite{bernstein2009ebacs}. 
It follows that  if the honest parties spend one  second of computation on standard CPU, an attacker with specialized hardware can violate the security of Merkle's puzzles in less than a second. However, if the parties spend one second on specialized hardware, an attacker with  specialized hardware has to spend more than $200,000$ years to break the scheme.

So, are Merkle's puzzles a practical and realistic key-agreement scheme? The answer is probably not: even setting aside the question of replacing the random oracle by a concrete hash function,  in Merkle's puzzles,
the honest parties \emph{send each other $\widetilde{\Omega}(\ell)$ bits} to obtain security against an eavesdropper that makes roughly $\ell^2$ queries.
In our example above, if we instantiate Merkle's puzzles using SHA-256 for one second on specialized hardware, the first party would need to send more than 100 terabytes to the second party.
A fundamental question is whether this high communication burden is inherent to secure key-agreement, and more generally, what is the communication cost of cryptographic protocols in the random oracle model and other oracle models.
In this paper we initiate the study of the communication complexity of cryptographic protocols in the random-oracle model.

\subsection{Our Results}
We show that for random-oracle protocols with certain natural properties, the high communication incurred by Merkle's puzzles is unavoidable: in order to achieve security against an adversary that can ask $\Theta(\ell^2)$ queries, the two parties must exchange $\Omega(\ell)$ bits of communication. Specifically, we show that the bound above holds for protocols where the parties' queries are a uniformly random set, and also for two-round  protocols that make non-adaptive (but arbitrary) queries.\footnote{These are both properties of Merkle's puzzles.}

To simplify the statements of our results, we focus here  on  key-agreement protocols whose \emph{agreement} parameter,
the probability that the players output the same key, is larger by some constant than their \emph{secrecy} parameter,
the probability that an eavesdropper can find the key.

 \paragraph{Uniform-query protocols.} 
 We say that a random-oracle protocol makes \emph{uniform queries} if each party's oracle queries are a uniformly random set.
 We give the following lower bound on the communication complexity of such protocols.

\begin{theorem}[lower bound on uniform-queries protocols, informal]\label{thm:mainUniformIntro}
	Any  $\ell$-uniform-query  key-agreement protocol  achieving  non-trivial secrecy against  $o(\ell^2)$-query  adversaries has  communication complexity   $\Omega(\ell)$. 
\end{theorem}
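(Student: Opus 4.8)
The plan is to prove the contrapositive by a reduction from (sparse) two-party set-disjointness. Suppose $\Pi$ is an $\ell$-uniform-query key-agreement protocol with agreement $\alpha$, secrecy $\beta$, $\alpha-\beta=\Omega(1)$, that is secure against every $o(\ell^2)$-query eavesdropper, and suppose toward a contradiction that its communication complexity is $c=o(\ell)$. Fix a universe $[n]$ with $n=C\ell^2$ for a large constant $C$; then for uniformly random $\ell$-subsets $Q_\Sa,Q_\Ra$ of $[n]$ we have $\Pr[Q_\Sa\cap Q_\Ra=\emptyset]=\Omega(1)$, $\Pr[|Q_\Sa\cap Q_\Ra|=1]=\Omega(1)$, and $\Pr[|Q_\Sa\cap Q_\Ra|\ge 2]$ as small a constant as we like. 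The goal is to turn $\Pi$ into a randomized communication protocol that, given such a pair as input, distinguishes $Q_\Sa\cap Q_\Ra=\emptyset$ from $|Q_\Sa\cap Q_\Ra|=1$ with constant advantage and only $O(c+\log\ell)$ bits; since this is exactly the hard instance of set-disjointness for sets of size $\ell$, which requires $\Omega(\ell)$ communication (H{\aa}stad and Wigderson), this forces $c=\Omega(\ell)$. The translation from the oracle world to the communication world is the easy half: the two communication players use public coins to sample the random oracle $\rv{F}$ (free in the public-coin model), each plants its input set as the query set of the corresponding party of $\Pi$, uses private coins for $\Pi$'s private randomness, and exchanges $\Pi$'s messages --- costing exactly $c$ bits --- after which both know the full transcript $\trans$ and their own key.

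The core of the argument is a structural lemma for random-oracle key agreement, in the spirit of \citet{ImpagliazzoRu89} and \citet{BarakGhidary}: on all but an $o(1)$ fraction of executions in which the parties agree, the shared key is a deterministic function $g$ of the transcript together with the oracle's restriction to the intersection of the two (uniform) query sets, i.e.\ $k_\Sa=k_\Ra=g(\trans,\I,\rv{F}|_\I)$ where $\I=Q_\Sa\cap Q_\Ra$. The case $\I=\emptyset$ is essentially the rectangle / ``cut-and-paste'' property of two-party protocols: conditioned on the transcript and on disjoint query sets, the two parties' oracle views sit on disjoint coordinates and are therefore independent, so any value on which their outputs agree must already be a function of the transcript alone. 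Now invoke security. The eavesdropper that, from the transcript, outputs $g(\trans,\emptyset)$ --- making no oracle queries at all --- is a legitimate $o(\ell^2)$-query adversary, and on the event ``$\I=\emptyset$ and the parties agree'' it guesses the key correctly; hence $\Pr[\I=\emptyset\wedge\text{agree}]\le\beta+o(1)$, and combined with $\Pr[\text{agree}]\ge\alpha$ this gives $\Pr[\I\ne\emptyset\wedge\text{agree}]\ge\alpha-\beta-o(1)=\Omega(1)$. Shrinking $\Pr[|\I|\ge 2]$ below $(\alpha-\beta)/2$ through the choice of $C$ yields $\Pr[|\I|=1\wedge\text{agree}]=\Omega(1)$; and since conditioning on $|\I|=1$ produces exactly the uniform pair-of-sets-with-unique-intersection distribution, the honest parties still agree with constant probability on that distribution.

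This is what powers the reduction. Run the simulation above on $(Q_\Sa,Q_\Ra)$; Alice then inverts $g$ over her own set, i.e.\ computes $C_\Sa=\{w\in Q_\Sa : g(\trans,\{w\},\rv{F}(w))=k_\Sa\}$, Bob computes the analogous $C_\Ra\subseteq Q_\Ra$, Bob sends Alice one element of $C_\Ra$ ($O(\log\ell)$ bits), and they declare ``$|\I|=1$'' iff that element lies in $C_\Sa$. When $Q_\Sa\cap Q_\Ra=\emptyset$ this never fires, because $C_\Sa\subseteq Q_\Sa$ and $C_\Ra\subseteq Q_\Ra$ are disjoint; when $Q_\Sa\cap Q_\Ra=\{z\}$ and the parties agree --- a constant-probability event --- we have $z\in C_\Sa\cap C_\Ra$, so (generically over $\rv{F}$, with $C_\Sa=C_\Ra=\{z\}$) the test fires. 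That is a constant-advantage, $O(c+\log\ell)$-bit protocol for $\ell$-set-disjointness, contradicting its $\Omega(\ell)$ lower bound.

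The step I expect to be the main obstacle is making the structural lemma quantitatively robust without any assumption on how $\Pi$ uses its queries --- pinning down ``the key is essentially $g(\trans,\I,\rv{F}|_\I)$'' with an error term small enough to survive being carried through the conditional events above --- together with the distribution-mismatch bookkeeping: $\Pi$'s agreement and secrecy are promised only for the uniform query distribution, whereas the reduction needs its behaviour conditioned on the (constant- but less-than-one-probability) events $\I=\emptyset$ and $|\I|=1$, and one must verify that this conditioning costs at most a constant in both the agreement and the eavesdropper's failure probability. A secondary technicality is the uniqueness of the $g$-inversion that recovers $z$, which leans on genericity of the random oracle and may require extra care (e.g.\ first hashing the key down, or bounding $|C_\Sa|,|C_\Ra|$ by a union bound over $Q_\Sa,Q_\Ra$).
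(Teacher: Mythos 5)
Your overall skeleton --- reduce to sparse set-disjointness by letting the communication players plant their inputs as the query sets and simulate the oracle with public coins --- is the same as the paper's, and your observation that an oracle-free eavesdropper succeeds with probability at least $\Pr[\text{agree}\wedge Q_\Sa\cap Q_\Ra=\emptyset]$ (so that $\Pr[\text{agree}\wedge |Q_\Sa\cap Q_\Ra|\ge 1]\ge \alpha-\beta$) is exactly the paper's combination of \cref{fact:noITKA} with the fact that the shared-oracle and private-oracle emulations coincide on disjoint inputs. But the mechanism of your distinguisher has a genuine gap. The paper restricts attention to \emph{single-bit} keys (and for a lower bound one may as well), and for a one-bit key your set $C_\Ra=\{w\in Q_\Ra : g(\trans,\{w\},F(w))=k_\Ra\}$ will typically contain $\Theta(\ell)$ elements, not one: there is no reason the counterfactual value $g(\trans,\{w\},F(w))$ for a non-intersection $w$ should avoid the single bit $k_\Ra$, and generically it hits it for about half of $Q_\Ra$. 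So Bob's single transmitted element equals the true intersection point $z$ only with probability $O(1/\ell)$, and your test has one-sided advantage $O(1/\ell)$ rather than $\Omega(1)$. Amplifying by repetition costs $\Theta(\ell)$ runs, i.e.\ $\Theta(\ell c)$ communication, which no longer contradicts the $\Omega(\ell)$ disjointness bound when $c=o(\ell)$; sending all of $C_\Ra$ costs $\Theta(\ell\log\ell)$ bits and fails for the same reason. This is not the ``secondary technicality'' you flag at the end --- it is the step where the reduction breaks, and hashing the key down does not help since the key is already one bit.

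The paper's proof avoids locating the intersection element altogether. It runs two emulations of $\Pi$: $\Pcom$, where the oracle is the shared public randomness, and $\Pdist$, where each party answers its own queries with \emph{private} randomness. Either the agreement probability or the (oracle-free) eavesdropper's success probability must differ by $(1-\alpha-\gamma)/2$ between the two, and the secrecy-gap case is converted into an agreement-gap case by having Bob output the eavesdropper's guess. Since the two emulations are identical on disjoint inputs but differ by $\Omega(1)$ on average, there exist \emph{consecutive} intersection sizes $c-1,c$ (with $c=O(\ell^2/|\cs|)$, not necessarily $c=1$) at which the agreement probability of one emulation jumps by $\Omega(|\cs|/\ell^2)$. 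The disjointness players then pad their inputs with a common block of size $c-1$, randomly permute the universe, run the emulation $k^\ast=O(\ell^4/|\cs|^2)$ times, and decide by comparing the empirical agreement frequency to a threshold (Hoeffding). In other words, the paper's distinguisher is a purely statistical test on the one-bit agreement indicator --- which is insensitive to the multiplicity problem that kills your $g$-inversion --- at the price of having to handle an arbitrary critical intersection size $c$ and a $\mathrm{poly}(\ell)/|\cs|^2$ repetition overhead. If you want to salvage your route, you would need a test whose per-run advantage is $\Omega(1)$; the natural candidate is precisely ``did the two emulated keys agree,'' which brings you back to the paper's argument.
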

\cref{thm:mainUniformIntro} is proved by a reduction from \textit{set-disjointness},
a  problem in communication complexity that is known to require high communication.

\paragraph{Two-round non-adaptive protocols.}
An oracle protocol is said to make \emph{non-adaptive} queries
if the distribution of queries made by the players is fixed in advance, i.e.,
it is determined before the parties communicate with each other and does not depend on the oracle's answers.
We give the following lower bound on the communication complexity of such protocols.

\begin{theorem}[lower bound on  two-message non-adaptive protocols, informal]\label{thm:MainArbitraryIntro}
	Any two-message  $\ell$-query non-adaptive  key-agreement protocol  of non-trivial secrecy against $q$-query adversaries has  communication complexity   $\Omega(q/\ell)$. 
	\label{thm:lower2_informal}
\end{theorem}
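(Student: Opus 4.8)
The plan is to establish the contrapositive: given an $\ell$-query, two-message, non-adaptive key-agreement protocol whose transcript has length $C$, I will construct an eavesdropper making $O(C\ell)$ oracle queries that outputs the parties' key with probability essentially equal to their agreement parameter (and if $C\ell \geq n$, the eavesdropper simply queries the whole oracle and simulates offline). Since this contradicts non-trivial secrecy whenever $C\ell \ll q$, any protocol secret against $q$-query adversaries must have $C = \Omega(q/\ell)$. I will write $V_A=(r_A,Q_A,\rv{F}|_{Q_A})$, $V_B=(r_B,Q_B,\rv{F}|_{Q_B})$ for the two views, $M_1=M_1(V_A)$ and $M_2=M_2(V_B,M_1)$ for the two messages, and $k_A(V_A,M_2)$, $k_B(V_B,M_1)$ for the outputs (one may assume $|Q_A|=|Q_B|=\ell$). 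Two structural facts drive everything: non-adaptivity makes $(r_A,Q_A)$ and $(r_B,Q_B)$ independent of $\rv F$ and of each other and fixed before any communication; and with two messages Bob's output depends only on $(V_B,M_1)$, while $(r_B,Q_B)$ is independent of $M_1$, so Bob's posterior given the first message is just his prior.

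The core of the attack is to re-sample one of the parties from its prior. Given the transcript $(m_1,m_2)$, the eavesdropper draws a fresh $\hat r_B$ from Bob's prior, queries the \emph{real} oracle on the resulting query set $\hat Q_B$ (at most $\ell$ queries), forms the simulated view $\hat V_B$, and --- if the simulated second message $M_2(\hat V_B,m_1)$ coincides with $m_2$ --- outputs $k_B(\hat V_B,m_1)$. Correctness of a consistent simulation comes from an exchangeability argument: in the probability space augmented by the extra copy $\hat r_B$, the pair $(r_B,\hat r_B)$ is i.i.d.\ and independent of $(\rv F,r_A)$, and the event ``the simulation is consistent'', i.e.\ $M_2(\hat V_B,M_1)=M_2(V_B,M_1)$, is symmetric under $r_B\leftrightarrow\hat r_B$; swapping the two also turns the event ``$k_B(\hat V_B,M_1)=k_A(V_A,M_2)$'' into the protocol's agreement event ``$k_B(V_B,M_1)=k_A(V_A,M_2)$''. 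Hence, conditioned on consistency, the eavesdropper outputs the true key with the same probability (up to the usual $o(1)$ bookkeeping for conditioning and abort) that the honest parties agree. A mirror-image attack re-samples Alice and matches $m_1$ (here one conditions on $M_1(\hat V_A)=M_1(V_A)=m_1$, under which $M_2$ is an unchanged function of $V_B$ and $m_1$), so the eavesdropper may work from whichever side is cheaper.

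The technical heart --- and the step I expect to be the main obstacle --- is bounding the number of oracle queries by $O(C\ell)$. A one-shot rejection, waiting for a fresh Bob whose message equals $m_2$, takes about $2^{|M_2|}$ tries in the worst case, which is exponential in $C$ and far too expensive; and the usual pooling tricks do not help, since matching a $C$-bit transcript by pure rejection genuinely costs $2^{\Theta(C)}$. The intended fix is an iterative scheme that pays only $O(\ell)$ amortized queries per transcript bit: reveal the transcript gradually, maintaining both a partial oracle table built from real queries and a simulation consistent with the revealed prefix, and advance past one more bit by a rejection step \emph{against a single bit}, whose expected number of tries --- averaged over the true distribution of that bit given the prefix --- is only a constant, each try costing at most $\ell$ new queries. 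Equivalently, one runs a progress argument with a potential (roughly, the conditional entropy of the still-unmatched part of the transcript given the eavesdropper's current knowledge) that starts at $O(C)$ and, as long as the eavesdropper cannot yet reconstruct the key, drops by $\Omega(1)$ per $O(\ell)$-query batch. Making either version rigorous is where the real work lies: one must show a prefix-consistent simulation can be replenished cheaply by re-using the accumulated oracle table, control the statistical slippage incurred by conditioning on real oracle answers along the way, and verify that the final success probability stays above the secrecy parameter given that agreement exceeds secrecy by a constant. This is also exactly where the hypotheses are used: with three or more messages, or with oracle-dependent query sets, a freshly re-sampled party is no longer a faithful surrogate and the exchangeability argument collapses.

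A few routine points round out the proof: reduce to protocols whose parties make exactly $\ell$ distinct queries and ``abort'' only with controlled probability; observe that the attack's success probability is at most a constant below the agreement parameter, so for a suitably small hidden constant in $O(C\ell)$ it still beats the secrecy parameter; and note that when the target query bound exceeds $n$ the eavesdropper trivially queries the entire oracle, so one may assume $C\ell < n$ throughout.
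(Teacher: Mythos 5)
Your attack is structured very differently from the paper's, and the difference is exactly where the gap lies. The paper's eavesdropper never tries to match the transcript: she queries the $\delta$-heavy queries (those whose conditional probability of being asked by a player, before and then after $M_1$, is at least $\delta$), whose number is bounded by $2\ell/\delta$ per round by a direct counting argument (so $\delta=\Theta(\ell/q)$ gives at most $q$ queries), and then samples Bob's output from its \emph{true conditional distribution} given her view --- a step that costs no oracle queries at all once the protocol is normalized so that $\out^{\Bc}$ is a bit of Bob's last query. The substance of the proof is then purely information-theoretic: the two messages reveal only $O(\delta(|M_1|+|M_2|))$ bits (in KL, converted to statistical distance via Pinsker) about the non-heavy intersection queries, so conditioned on Eve's view Alice's view and Bob's queries are $O(\sqrt{\delta\cdot C})$-close to independent, and Eve's sampled key is almost as good as Bob's.

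Your scheme instead pays real oracle queries on every re-sampled view and conditions on transcript consistency, and two things break. First, the step you yourself flag as the main obstacle is genuinely unresolved and, as described, circular: to ``advance one bit'' you need a fresh sample of $\hat V_B$ conditioned on consistency with the already-matched prefix of $m_2$, and producing such a sample is itself a rejection against the whole prefix unless the next-bit distribution under your current simulation is already within a constant factor of the truth --- which is essentially the closeness statement the whole proof is supposed to establish. No argument is given for why the proposed potential drops by $\Omega(1)$ per $O(\ell)$ queries, and I do not see one. Second, even the one-shot version has a correctness issue: exchangeability gives $\Pr[\hat k_B=k_A\mid\mathrm{cons}]=\Pr[k_B=k_A\mid\mathrm{cons}]$, but the consistency event depends on both copies and is correlated with the real execution, so $\Pr[k_B=k_A\mid\mathrm{cons}]$ need not be close to $1-\alpha$; executions whose $m_2$ is easy to match are over-weighted. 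The paper's route avoids both problems because Eve's real queries are determined by fixed heaviness thresholds (hence trivially counted) rather than by a stochastic matching process, and her output is drawn from the exact posterior rather than from a reweighted one.
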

Once again this lower bound is nearly-tight with Merkle's puzzles, where $q = \Theta(\ell^2)$, and the communication cost is $\tilde{\Theta}(\ell)$.\footnote{This theorem is also nearly-tight for any $q$, with a version of Merkle's puzzle, in which Alice is sending $\Theta(q/l)$ answers, from a universe of size $\Theta(q)$.}

Following~\citet{BarakGhidary} and \citet{ImpagliazzoRu89},
we prove this lower bound by presenting  an eavesdropper that makes $q$ queries and prevents the parties from exploiting the advantage they gain
by their joint random oracle calls.

In \cite{BarakGhidary}, the communication cost of the protocol is not taken into account: their eavesdropper makes $O(\ell^2)$ queries and has high probability of finding \emph{all} intersection queries (i.e., all queries that were asked by both players).
In our case, if the protocol has communication cost $C$, then to prove Theorem~\ref{thm:lower2_informal},
our eavesdropper must make only $O(C \cdot \ell)$ queries (to show the trade-off that $C = \Omega(q/\ell)$).
If $C \ll \ell$,
our eavesdropper makes much fewer queries than the eavesdropper in~\cite{BarakGhidary,ImpagliazzoRu89},
and in particular it cannot discover all the intersection queries.
Instead, our eavesdropper asks only queries that the players \emph{were able to learn} are in their intersection.
If a query is in the intersection, but the players have not communicated this fact to each other,
then the eavesdropper will not necessarily ask this query (unlike~\cite{BarakGhidary,ImpagliazzoRu89}).
Finding the correct definition for what it means to ``learn'' that a given query is in the intersection,
and constructing an eavesdropper that makes only $O(C \cdot \ell)$ queries,
are the main difficulty in our proof. \footnote{A lower bound on key-agreement protocols implies a lower bound for the Set-Intersection problem. This fact suggests that the proof cannot be simple.}

\subsection{Related Work}
\citet{ImpagliazzoRu89}  showed that the key of any  $\ell$-query key-agreement  protocol in the random-oracle model can be revealed by an $\tilde{O}(\ell^6)$ query eavesdropper. \citet{BarakGhidary} improve this bound and present an  $O(\ell^2)$ query eavesdropper for this task, which shows that Merkle puzzles  is optimal in this respect.
\citet*{HaitnerOZ12} used the machinery of \cite{BarakGhidary} to relate the security of protocols that do not use a random oracle and solve tasks with no input, to the security of no-input protocols in the random-oracle model against an $O(\ell^2)$-query adversary.
Finding limitation on the usefulness of random oracles for  protocols that do take input seems to be  a more difficult question. \citet{chor1991zero} and \citet{mahmoody2012limits} made some progress in this direction. Finally, \citet*{HaitnerHRS15} gave lower bounds on the  communication complexity of  statistically hiding commitments and single-server private information retrieval in a weaker oracle model that  captures the hardness of one-way functions/permutation more closely than the random-oracle model.

\subsection{Organization}
We begin by giving a high-level overview of our proof techniques in \cref{sec:intro:Technique:Uni,sec:intro:Technique:2msg}.
Formal definitions and notation used throughout the paper are given in \cref{sec:Preliminaries}. The bound for uniform-query  protocols is formally stated and proved in \cref{sec:UnifromProtocols}, and the bound for  two-message non-adaptive  protocols is stated and proved in \cref{sec:ArbitraryProtocols}. 
\ifdefined\IsLLNCS
Missing proofs are given in \cref{sec:Appendix}.
\fi


\section{Uniform-Query Protocols: Proof Outline} \label{sec:intro:Technique:Uni}
Our lower bound for uniform-query key-agreement  protocols is proved  via  a reduction to  \textit{set disjointness},
a classical problem in two-party communication complexity.

In the set disjointness problem, we have two players, Alice and Bob.
The players receive inputs $X,Y \subseteq [n]$, respectively, of size $|X| = |Y| = \ell$,
and
the players must determine whether $X \cap Y = \emptyset$.
To do this, the players communicate with each other,
and the question is how many bits they must exchange.
It is known~\cite{razborov1992distributional} that for any sufficiently large $n \in \N$,
if the size of the sets is $\ell = n/4$, then
the players must exchange $\Omega(n)$ bits to solve set disjointness,
and this holds even for randomized protocols where the players have access to shared randomness and only need to succeed with probability $2/3$.
Here, we require high success probability \emph{on any input}, not over some specific input distribution.
We note that in the 2-party communication complexity model there is no random oracle.

The connection between set disjointness and key agreement comes from the fact that the only \emph{correlation} between
the parties' views in a key agreement protocol comes from the \emph{intersection queries}, the queries that both players ask and
Eve does not know.
Indeed, if Alice asks $A \subseteq [n]$ and Bob asks $B \subseteq [n]$,
and the random oracle is $F : [n] \rightarrow [n]$,
then $F(A \setminus (A \cap B))$ and $F(B \setminus (A \cap B))$ are \emph{independent of each other}.
In particular, if $A \cap B = \emptyset$, then $F(A)$ and $F(B)$ are independent, and intuitively, in this case the players cannot securely agree on a secret key, because they have no advantage over the eavesdropper.
On the other hand, if $A \cap B \neq \emptyset$, then the players can exploit the correlation induced by $F(A \cap B)$ to securely agree on a secret key.
Thus, any secure key agreement protocol ``behaves differently'' depending on whether $A \cap B = \emptyset$ or not, and
we can use this to solve the set disjointness problem.

Suppose that we are given a secure key-agreement protocol $\Pi$, where the players make $\ell$ uniformly-random queries to an oracle $\rv{F} : [n] \rightarrow [n]$.
For simplicity we assume that the protocol has \emph{perfect agreement}, that is, the players always output the same key,
and that the security parameter is $3/4$, that is, an eavesdropper has probability at most $3/4$ of outputting the same key as the players.
Our full proof does not make these assumptions.

Now, we want to construct from the key-agreement protocol $\Pi$, which uses a random oracle, a protocol $\Pi'$ for set disjointness, \emph{without} a random oracle (as usual in communication complexity).
To this end, we consider two possible ways of simulating $\Pi$ without an oracle:
\begin{itemize}
	\item $\Pcom$: the players use their shared randomness to simulate the oracle.
		They interpret the shared randomness as a random function $\rv{F} : [n] \rightarrow [n]$,
		and whenever $\Pi$ wants to query some element $q \in [n]$,
		the players use $\rv{F}(q)$ as the oracle's answer.
	\item $\Pdist$: the players use their \emph{private} randomness to simulate the oracle.
		Alice and Bob interpret their private randomness as random functions $\rv{F}_A, \rv{F}_B : [n] \rightarrow [n]$,
		respectively.
		Whenever $\Pi$ indicates that Alice should query an element $q \in [n]$, she uses $\rv{F}_A(q)$ as the answer,
		while Bob uses $\rv{F}_B(q)$.
\end{itemize}
The first simulation, $\Pcom$, is ``perfect'': it produces exactly the correct distribution of transcripts and outputs under our key-agreement protocol $\Pi$.
In particular, the keys produced by the players in $\Pcom$ always agree, and an eavesdropper that sees the transcript of $\Pcom$ (but not the shared randomness) can find the key with probability at most $3/4$.

On the other hand, the second simulation $\Pdist$ is ``wrong'', because the players do not use the same random function to simulate the random oracle.
In fact, it is known that without shared randomness, secure key agreement is \emph{impossible},
as an eavesdropper that sees the transcript can find the key with the same probability that the players have of agreeing with each other.
Therefore there are two possible cases:
\begin{description}
	\item[Agreement gap:] The probability that the players agree on the key in $\Pdist$ is at most $7/8$ (compared to one  in $\Pcom$), or
	
	\item[Secrecy gap:] There is an eavesdropper $\Ec$ that guesses Alice's key in $\Pdist$ with probability at least $7/8$ (compared to $3/4$  in $\Pcom$).
	
\end{description}
(Instead of $7/8$ we could have used here any constant probability in $(3/4, 1)$, but in the full proof this choice depends on the agreement and security parameters of $\Pi$.)

We divide into cases, depending on which of the two gaps we have.

\paragraph{Agreement gap.}
Assume that the players agree with probability at most $7/8$ in $\Pdist$. 
For simplicity, let us make the stronger assumption that for any intersection size $c > 0$,
the probability of agreement between the players is at most $7/8$, even \emph{conditioned} on the event that $|A \cap B| = c$.
A general key-agreement protocol might not satisfy this assumption, which complicates the full proof significantly;
see Section~\cref{sec:UnifromProtocols} for the details.

So, we assumed that whenever the intersection is non-empty, the players agree with probability at most $7/8$.
Observe, however, that when the intersection \emph{is} empty ($A \cap B = \emptyset$),
the distribution of transcript and outputs in $\Pdist$ is the same as in $\Pi$: although each player uses a different random function, they never ask the same query, so there is no inconsistency.
Therefore, conditioned on $A \cap B = \emptyset$,
in $\Pdist$ the players have perfect agreement (as in $\Pi$).
In other words, $\Pdist$ behaves very differently when $A \cap B = \emptyset$, in which case the players always agree on the key,
compared to the general case, where the players agree with probability at most $7/8$. We use this fact to \emph{check} whether $A \cap B = \emptyset$.
Thus, by checking whether or not they got the same key in $\Pdist$,
the players get an indication for whether or not $A \cap B = \emptyset$.

Our set disjointness protocol $\Pi'$ is defined as follows.
Given inputs $X,Y \subseteq [n]$, respectively,
the players simulate $\Pdist$ several times.
In each simulation, the players agree on a random permutation $\sigma : [n] \rightarrow [n]$ using their shared randomness,
and then the players simulate $\Pdist$ using their permuted inputs as the query set;
that is, Alice feeds $A = \sigma(X)$ to $\Pdist$ as her query set, and Bob feeds $B = \sigma(Y)$ to $\Pdist$ as his query set.
Note that $A,B$ are uniformly random, \emph{subject to} having an intersection of size $|X \cap Y|$.

After each simulation of $\Pdist$, the players send each other the keys output under $\Pdist$,
and check if they got the same key.
Finally, they output ``$X \cap Y = \emptyset$'' iff they got the same key in all the simulations of $\Pdist$.

Since $\Pdist$ has perfect agreement when there is no intersection,
the players always succeed when $X \cap Y = \emptyset$.
However, by assumption, whenever $X \cap Y \neq \emptyset$,
the probability of agreement in $\Pdist$ is at most $7/8$, so if we repeat $\Pdist$ sufficiently many times,
the probability that all instances output the same key will be at most $1/3$.

\paragraph{Secrecy gap.}
In this case we convert  $\Pcom$ and  $\Pdist$ into a pair of protocols with an agreement gap, and then proceed as above.

Consider the protocol $\Pdisttag$ where the parties acts as in $\Pdist$,  but at the end,
Bob executes the eavesdropper $\Ec$ on the transcript, and outputs the key that $\Ec$ outputs.
Define $\Pcom'$ analogously.

By assumption, $\Ec$ guesses Alice's output in $\Pdist$ with probability at least $7/8$, but in $\Pcom$ it succeeds with probability at most $3/4$.
Thus, in $\Pdist'$ the players agree with probability at least $7/8$, but in $\Pcom'$ they agree with probability at most $3/4$;
there is a gap of at least $1/8$ between the probability of agreement in the two protocols (although they have switched roles and now $\Pdist'$ has the higher agreement probability).
Note also that $\Pdist'$ does not have agreement probability 1, as we assumed for simplicity above, but our full proof can handle this case.

\paragraph{What about general protocols?}
It was important for our reduction to assume that the key-agreement protocol makes uniformly-random queries.
Indeed, this reduction fails in the general case:
consider the protocol where Alice and Bob always query 1, and output $\rv{F}(1)$ as their secret key.
This protocol is completely insecure, since the eavesdropper can also query 1 and output $\rv{F}(1)$.
But our reduction would not work for it, because the input distribution where both players get the set $\set{1}$ \emph{is not hard for set disjointness}
(indeed it is trivial).
We see that the ``hardness'' of secure key-agreement is not necessarily that it is hard for the players to find their intersection queries,
but that the eavesdropper should not be able to \emph{predict} the intersection queries that the players use.
Our second lower bound makes this intuition explicit and uses it to get a lower bound on two-round protocols with arbitrary (but non-adaptive) query distributions.

\section{Two-Message Non-Adaptive Protocols: Proof Outline}\label{sec:intro:Technique:2msg}

In this section we describe a lower bound on the communication cost of any key-agreement protocol that makes non-adaptive queries and uses two rounds of communication: we show that any such protocol that makes $\ell$ queries and is secure against an adversary that makes $q$ queries must send a total of $\Omega(q/\ell)$ bits. In particular, taking $q = \Theta(\ell^2)$, this shows that Merkle's puzzles is optimal in its communication cost.

In this proof, we once again relate the parties' advantage over the eavesdropper to the information they gained about the intersection of their query sets.
We show that  to produce a shared key, the parties need to learn a lot of information about this intersection. Moreover, the query sets and their intersection need to be ``unpredictable'' (have high min-entropy) given the transcript, otherwise an eavesdropper could make the same queries and output the same key. 

\paragraph{Preliminaries.}
In the proof we often need to measure differences between various distributions.
For this purpose we use \emph{$f$-divergences}:
given a convex function $f: \mathbb{R} \rightarrow \mathbb{R}$ with $f(1) = 0$,
and distributions $P, Q$,
the \emph{$f$-divergence of $P$ from $Q$} is defined as
\begin{equation*}
	\mathsf{D}_f(\rv{P} \parallel \rv{Q})=\sum_{q\in \rv{Q}}\pr{\rv{Q}=q}f\left(\frac{\pr{\rv{P}=q}}{\pr{\rv{Q}=q}}\right).
\end{equation*}
Specifically, the two $f$-divergences we use in this paper are the \emph{statistical distance}, obtained by taking $f(x) = |x-1|/2$,
and the \emph{KL divergence}, obtained by taking $f(x) = x \log x$.
Each has its own nice properties and disadvantages:
statistical distance is bounded in $[0,1]$ but it is not additive,
while KL divergence is additive but unbounded (more on this below).

We frequently need to measure the ``amount of dependence'' between two random variables.
Let $(\rv{X}, \rv{Y}) \sim P_{\rv{X}, \rv{Y}}$ be random variables jointly distributed according to $P_{\rv{X}, \rv{Y}}$, and let $P_{\rv{X}}, P_{\rv{Y}}$ be the marginal distribution of $\rv{X}$ and $\rv{Y}$, respectively.
Also, let $P_{\rv{X}} \times P_{\rv{Y}}$ be the product distribution where $\rv{X}$ and $\rv{Y}$ are sampled independently of each other, each from its marginal distribution $P_{\rv{X}}, P_{\rv{Y}}$ (respectively).
To quantify the dependence between $\rv{X}$ and $\rv{Y}$, we measure the difference between their joint distribution and the product of the marginals:
formally, we define
\begin{equation*}
	\MI_f( \rv{X} ; \rv{Y} ) = \mathsf{D}_f( P_{\rv{X},\rv{Y}} \parallel P_{\rv{X}} \times P_{\rv{Y}} ).
\end{equation*}
This generalizes the usual notion of mutual information, which is the special case of $\MI_f$ where
we use KL divergence (i.e., when $f = x \log x$).
For clarity, when we use KL divergence we omit the subscript $f$,
and when using statistical distance, we use the notation $\MI_{SD}$ (instead of $\MI_{f(x) = |x-1|/2}$).

Finally, we also need the notion of \emph{conditional mutual information},
which is simply the average mutual information between two variables $\rv{X}, \rv{Y}$, where the average
is taken over a third random variable $\rv{Z}$.
Formally, let $(\rv{X},\rv{Y},\rv{Z}) \sim P_{\rv{X},\rv{Y},\rv{Z}}$.
For any value $z$,
let $P_{\rv{X},\rv{Y}|\rv{Z} = z}, P_{\rv{X} | \rv{Z} = z}, P_{\rv{Y} | \rv{Z} = z}$
be the joint distribution of $\rv{X},\rv{Y}$ and the marginals of $\rv{X}$ and $\rv{Y}$, respectively,
all conditioned on the event $\rv{Z} = z$.
Then we define
$\MI_f( \rv{X} ; \rv{Y} | \rv{Z} ) = \E_{z \sim P_{\rv{Z}}} \left[ \mathsf{D}_f( P_{\rv{X},\rv{Y}|\rv{Z} = z} \parallel P_{\rv{X}|\rv{Z}=z} \times P_{\rv{Y}|\rv{Z}=z} )\right]$.

\paragraph{Some examples.}

Let us illustrate the ideas behind the lower bound by way of some examples.

\paragraph{Example 1:}
We already discussed the na\:ive example where both players query $1$ and output $\rv{F}(1)$,
and said that it is insecure because the eavesdropper can \emph{predict} the intersection query.
Here is another instantiation of this idea:
Alice and Bob view the domain $\ell^2$ as an $\ell \times \ell$ matrix,
so that the oracle queries are represented by pairs $(i,j) \in [\ell]^2$.
Alice chooses a row $a \in [\ell]$, and queries all the elements of the row (that is, all pairs $(a, j)$ where $j \in [\ell]$);
Bob chooses a column $b \in [\ell]$ and queries all the elements of the column (all pairs $(i,b)$ where $i \in [\ell]$).
Then, Alice sends $a$ to Bob, who responds with $\rv{F}(a,b)$.
From $\rv{F}(a,b)$, Alice can compute $b$, by finding the (w.h.p.\ unique) index $j$ such that $\rv{F}(a,b) = \rv{F}(a,j)$.
Both players output the first bit of $b$ as the key.

This protocol is slightly less na\"ive than the previous one: now there are no queries that have high prior probability of being asked,
and the index $b$ of the query that determines the key is uniformly random a-priori.
However, once Alice sends $a$ to Bob, the game is up:
Eve can also query row $a$ and find $b$ the same way Alice does.

We see that in addition to queries that have a high prior probability of being asked,
Eve also needs to ask queries that have a high \emph{posterior} probability of being asked,
after she sees $\rv{M}_1$.
It turns out that this is enough: if we were to continue for more than 2 rounds, then Eve would also need to ask queries that become likely after seeing
$\rv{M}_2$, and so on, but to prove a 2-round lower bound, Eve does not need to ask these queries.
Intuitively, if a query only becomes likely after $\rv{M}_2$ is sent, then this is ``too late'' for it to be useful to the players,
and Eve can ignore it.

\paragraph{Example 2:}
First, both players query 1. Then they carry out the protocol from Example 1, but all messages are ``encrypted''
by XOR-ing them with $\rv{F}(1)$.

From this example we see that Eve needs to be somewhat adaptive: when she decides what queries to ask after seeing $\rv{M}_1$,
she must incorporate the queries she asked before the first round (in this case, she would query 1).
Essentially, when Eve tries to understand what the players have done in round $i$, she should take into account all the queries she made up to round $i$.

Should Eve be adaptive \emph{inside} each round?
In other words, after seeing $\rv{M}_1$, should she ask all queries $\cE_1$ that became likely, then compute which new queries are now likely given
$\rv{M}_1, \cE_1$, and so on, until she reaches a fixpoint?

It turns out that for our purposes here, because we consider non-adaptive protocols, Eve does not need to do this.


\paragraph{Heavy queries.}
Our attacker Eve tries to break the security of the protocol by asking all queries that are ``somewhat likely'' to be asked by the players;
these queries are called \emph{heavy queries}.
Informally, a query $q \in \zn$ is \emph{heavy} after round $i$ if given the transcript up to round $i$ (inclusive),
and given Eve's queries up to round $i$,
the probability that $q$ is asked by one (or both) of the players exceeds some threshold $\delta$ which is fixed in advance.

More formally, the set $\cE_i$ of heavy queries after round $i$ is defined by induction on rounds, as follows:
the a-priory heavy queries, $\cE_0$, are given by
\begin{equation*}
				\cE_{0} =\set{q \in \zn \colon  \pr{q \in X \cup Y  } \geq \delta}.
\end{equation*}
These are queries that are ``somewhat likely'' to be asked before the protocol begins.
For $i > 0$,
we define
\begin{equation*}
	\cE_{i} = \cE_{i-1} \cup \set{q \in \zn \colon  \pr{q \in X \cup Y \given \medspace M_{\leq i}, \rv{F}(\cE_{i-1})} \geq \delta}.
\end{equation*}
In other words, after round $i$, Eve asks all queries $q \in \zn$ that have probability at least $\delta$ of being queried by the players,
given the messages $M_{\leq i}$ that Eve observed up to round $i$ and the heavy queries she asked before, $\cE_{i-1}$.

\paragraph{A simplified normal form for protocols.}
To simplify the proof of the lower bound, we first apply an easy transformation to the protocol:
given a key agreement protocol $\Pi$, we construct a protocol $\Pi'$, which has nearly the same communication and query complexity as $\Pi$,
the same number of rounds,
and the same agreement and nearly the same security parameters.
But $\Pi'$ also has the following properties:
first, $\Pi'$ has no a-priori heavy queries, that is, $\cE_0 = \emptyset$;
and second, the secret key output by Bob in $\Pi'$ is the first bit of Bob's last query.
This easy transformation is omitted in this overview.

\subsection{Measuring the Players' Advantage Over Eve}

As we saw in the examples above, the players' ability to produce a shared secret key is closely tied to how much information
\emph{the players} have that \emph{Eve does not have} about the intersection of the query sets, $X \cap Y$.

To quantify this advantage,
define the following random variables:
\begin{itemize}
	\item $\rv{S}_i = \left( \rv{X} \cap \rv{Y} \right) \setminus \cE_{i-1}$, the intersection queries that have not been asked by Eve.
	\item $\rv{F}(\rv{S}_i)$: the answers to the queries in $\rv{S}_i$.
	\item $V_E^i = (\cE_i', \rv{F}(\cE_i'))$: a subset of the heavy queries for the previous round, and the answers to them.
Here, $\cE_i' \subseteq \cE_i$ is a subset that will be defined later (and depends on the round number $i$).
For technical reasons,
it is convenient to use only some of the heavy queries in some contexts; (essentially, in some places in our proof, Eve uses only some of her power.
	This helps us avoid some unnecessary dependencies.
\end{itemize}

We measure the advantage gained by the players in round $i$ by an expression of the form:
\begin{equation}
	\MI_f( \rv{S}_i, \rv{F}(\rv{S}_i) ; \rv{M}_i | \rv{Z}, \rv{V}_E^{i-1}, \rv{M}_{<i} ),
	\label{eq:advantage}
\end{equation}
where $\MI_f$ is the information with respect to the $f$-divergence,
$\rv{M}_i, \rv{M}_{<i}$ are the $i$-th message and the messages of rounds $1,\ldots,i-1$,
$\rv{Z}$ is the query set of the player that sent $\rv{M}_i$ (either $\rv{X}$ or $\rv{Y}$, depending on the round number $i$).
Note that Eve uses her heavy queries from the previous rounds, $\rv{V}_E^{i-1}$, to ``try to understand'' what is going on in the current round.

Intuitively, this expression measures how much information the $i$-th message conveys about the intersection queries and their answers,
which Eve \emph{cannot guess}.
For this reason, the random variable $\rv{S}_i$ excludes intersection queries that were asked by Eve.
Notice that on the right-hand side we condition on Eve's view (or on things Eve can sample):
Eve has already seen the messages $\rv{M}_{<i}$ and asked the heavy queries $\rv{V}_E^{i-1} = (\cE_{i-1}', \rv{F}(\cE_{i-1}'))$,
and 
she can sample the queries $\rv{Z}$, either $\rv{X}$ or $\rv{Y}$,
from the correct distribution given the transcript and her queries.
Crucially, this does not require her to make any oracle queries: we do not require her to sample the answers $\rv{F}(\rv{Z})$, only the queries $\rv{Z}$.
In other words, Eve can \emph{pretend} to be whichever player the query set $\rv{Z}$
belongs to, and by conditioning on her view, we essentially neutralize all the information that Eve can extract about the intersection.
Thus, the expression in~\eqref{eq:advantage}
measures the information the players gain about the intersection but that is hidden from Eve.%
\footnote{As we said above, we use only some of Eve's heavy queries, $\cE_{i-1}' \subseteq \cE_i$, so this intuition is not completely accurate;
	specifically, when~\eqref{eq:advantage} is \emph{large},
	it does not mean that Eve cannot guess a lot about the intersection, because she could use the full set $\cE_i$.
	However, when~\eqref{eq:advantage} is \emph{small},
then indeed Eve knows almost as much about the intersection as the players do, because her view \emph{includes} $\rv{V}_E^{i-1}$ (and possibly more).}

Our proof consists of showing:
\begin{enumerate}[Step I:]
	\item After the first message $\rv{M}_1$ is sent, the advantage gained is small, only $O(\delta |\rv{M}_1|)$.
		For this part of the proof we use KL-divergence to measure the advantage.
	\item After the second message $\rv{M}_2$ is sent, the advantage is still small, only \\$O(\sqrt{ \delta(|\rv{M}_1| + |\rv{M}_2|)})$.
		Here we use statistical distance to measure the advantage, for reasons we will explain below.
	\item When the expression in~\eqref{eq:advantage} is small (i.e., the players only have a small ``advantage''), then indeed,
		Eve can break the security of the protocol, by pretending to be one of the players and sampling the secret key that this player would output.
\end{enumerate}

Next we explain in more detail how each step is carried out.

\subsection{Outline of the Proof}

\paragraph{Step III: How Eve breaks security.}
Let us start from the end: suppose that after the second round, the ``advantage'' is small:
\begin{equation*}
	\MI_f( \rv{S}_2, \rv{F}(\rv{S}_2) ; \rv{M}_2 | \rv{Y}, \rv{M}_1, \rv{V}_E^1) \leq \beta,
\end{equation*}
where $\beta = O(\sqrt{ \delta (|\rv{M}_1| + |\rv{M}_2|) }) \ll 1$.
Here, the advantage is measured in statistical distance (that is, we take $f(t) = |t - 1|/2$).
We want to show that Eve can break the security of the protocol, by guessing the secret key.

As we said, Eve's strategy is to ``pretend'' that she is Bob,
and sample Bob's output, $\out^{\Bc}$.

In a general protocol, to do this, Eve needs to sample Bob's queries $\rv{Y}$ and the answers $\rv{F}(\rv{Y})$, 
and then she can compute $\out^{\Bc} = \out^{\Bc}(\rv{Y}, \rv{F}(\rv{Y}), \rv{M}_1, \rv{M}_2)$.
However, recall that we transformed the protocol so that $\out^{\Bc}$ is a fixed function of $\rv{Y}$;
therefore, Eve in fact needs to do nothing clever, only sample $\rv{Y}$ given her view $\rv{M}_1, \rv{M}_2, \cE_1, \rv{F}(\cE_1)$
and compute $\out^{\Bc}$ from $\rv{Y}$.

We need to show that Eve's key is close to the correct distribution, the one used by the players.
In general, if too much communication is allowed, this is not true,
as shown by the following example.
\fbox{\begin{minipage}{\textwidth}
		\textbf{Example:}
In Merkle's puzzles,
Alice's message is $\rv{F}(\rv{X})$,
and Bob responds with $\rv{F}(\rv{s})$,
where $\rv{s} \in \rv{X} \cap \rv{Y}$ is some intersection query.
The original secret key (before our transformation) is the first bit $\rv{s}^1$.
After our transformation,
the secret key is $\rv{Y}_{\ell+1}^1$,
and as part of $\rv{M}_2$, Bob sends Alice the bit $\rv{b} = \rv{s}^1 \xor \rv{Y}_{\ell+1}^1$ so that she can extract $\rv{Y}_{\ell+1}^1$.

From Alice's perspective, given $\rv{X}, \rv{F}(\rv{X})$,
Bob's message $\rv{M}_2 = \rv{F}(\rv{s}), \rv{b}$
fixes $\rv{Y}_{\ell+1}^1$ to the value $\rv{b} \xor \rv{s}^1$.
(We ignore here the tiny probability that $\rv{s}$ cannot be uniquely computed from $\rv{X}, \rv{F}(\rv{X})$ and $\rv{F}(\rv{s})$,
i.e., the probability of a collision in $\rv{F}$.)
However, from Eve's perspective, because she does not know $\rv{X},\rv{F}(\rv{X})$ and she asks no queries (there are no heavy queries in Merkle's puzzles),
the intersection element $\rv{s}$ remains uniformly random.
When Eve samples $\rv{Y}_{\ell+1}^1$ given $\rv{M}_1, \rv{M}_2$ and her non-existent heavy queries,
the result is random,
and completely independent from the true secret key.
\end{minipage}}

We need to show that when the players' advantage is small,
then the example above cannot happen, and Eve's key agrees with the players' w.h.p.
To this end, we are interested in the difference between Eve's ``pretend distribution'', and the true distribution that the players use to produce the key:
if the two distributions are close, then Eve's chances of guessing the right secret key are roughly the same as Bob's.
The \emph{only difference} between these two distributions is that 
given $\rv{M}_1, \rv{M}_2$ and $\cE_1, \rv{F}(\cE_1)$ (which the players do not use),
\begin{itemize}
	\item 
The players' keys are produced according to the \emph{joint distribution} $(\out^{\Ac}, \out^{\Bc})$,
and in particular, both players have the same answers $\rv{F}(\rv{S}_2)$ to the non-heavy intersection queries $\rv{S}_2 = \left(\rv{X} \cap \rv{Y}\right) \setminus \cE_1$.
\item Eve's pretense that she is Bob is carried out \emph{independently} from Alice's view:
Eve cannot use the true intersection queries (which she does not know), only what she has learned about them from
$\rv{M}_1, \rv{M}_2, \rv{V}_E^i$.
The joint distribution of Alice and Eve's keys is therefore given by the product distribution $\out^{\Ac} \times \out^{\Bc}$.
\end{itemize}
So, we would like to bound the difference between the joint distribution and the product distribution, i.e.,
\begin{equation*}
	\MI_f( \out^{\Ac}; \out^{\Bc} | \rv{M}_1, \rv{M}_2, \cE_1, \rv{F}(\cE_1)).
\end{equation*}

Given the conditioning, 
Alice's output $\out^{\Ac}$ is a function of her view, $\rv{X}, \rv{F}(\rv{X})$.
Also, we assumed that Bob's output is a function of his queries $\rv{Y}$.
Therefore,
\begin{equation}
	\MI_f( \out^{\Ac}; \out^{\Bc} | \rv{M}_1, \rv{M}_2, \cE_1, \rv{F}(\cE_1)) \leq \MI_f( \rv{X}, \rv{F}(\rv{X}) ; \rv{Y} | \rv{M}_1, \rv{M}_2,  \cE_1, \rv{F}(\cE_1)).
	\label{eq:diff}
\end{equation}
Now we need to show that 
given Eve's view, the dependence between $\rv{X}, \rv{F}(\rv{X})$ and $\rv{Y}$
is bounded in terms of the advantage:
\begin{equation}
	\MI_f( \rv{X}, \rv{F}(\rv{X}) ; \rv{Y} | \rv{M}_1, \rv{M}_2, \cE_1, \rv{F}(\cE_1))
	\leq 
	\MI_f( \rv{S}_2, \rv{F}(\rv{S}_2) ; \rv{M}_2 | \rv{M}_1,\rv{Y}, \cE_1 \cap \rv{Y}, \rv{F}(\cE_1 \cap \rv{Y})).
	\label{eq:adv_bound}
\end{equation}

This proof is somewhat tedious; it relies on the fact that $\rv{M}_2$ is a function of $\rv{M}_1, \rv{Y}$ and $\rv{F}(\rv{Y})$,
and on the fact that $\rv{X}, \rv{F}(\rv{X})$ are independent of $\rv{Y}, \rv{F}(\rv{Y})$ given the intersection queries and answers,
$\rv{S}_2, \rv{F}(\rv{S}_2)$ and $\cE_1, \rv{F}(\cE_1)$.
Intuitively, all the dependence between $\rv{X}, \rv{F}(\rv{X})$ and $\rv{Y}$ ``flows through'' what the players learn about the intersection,
and the proof of~\eqref{eq:adv_bound} formalizes this intuition.

\paragraph{Step I: Bounding the advantage after the first round.}
For the first round, we analyze the players' advantage in terms of KL-divergence, and bound
\begin{equation*}
	\MI( \rv{S}_1, \rv{F}(\rv{S}_1) ; \rv{M}_1 | \rv{X} ).
	\label{eq:adv0}
\end{equation*}
Notice that we do not use Eve at this point, because we eliminated any a-priory heavy queries, so there is nothing Eve needs to query
in order to ``understand'' $\rv{M}_1$.
For the same reason, $\rv{S}_1 = \rv{X} \cap \rv{Y}$ (there are no heavy queries to remove from the intersection).

We claim that
\begin{equation}
	\MI( \rv{S}_1, \rv{F}(\rv{S}_1) ; \rv{M}_1 | \rv{X} )
	\leq
	\delta |\rv{M}_1|.
	\label{eq:adv1}
\end{equation}
This is not hard to see: suppose $\rv{X} = x$.
Because we got rid of the a-priori heavy queries,
every individual query $q \in x$
has probability at most $\delta$ of being asked by Bob (otherwise, $q$ would be heavy).
Therefore, for every $q \in x$,
we have $\Pr\left[ q \in \rv{S}_1 | \rv{X} = x \right] \leq \delta$.
Because $\rv{M}_1$ is generated by Alice without knowing $\rv{S}_1$,
and every query is in $\rv{S}_1$ only w.p.\ at most $\delta$,
intuitively, the information in $\rv{M}_1$ ``only applies'' to the queries in $\rv{S}_1$ with probability $\delta$.
Therefore the information that $\rv{M}_1$ gives about $\rv{S}, \rv{F}(\rv{S}_1)$ is at most $\delta|\rv{M}_1|$.

The actual proof involves a Shearer-like argument for mutual information, similar to the ones used in \cite{ganor2015exponential, rao2015simplified}.

\paragraph{Step II: Bounding the advantage after the second round.}

Now we must bound the advantage the players gain after the second round,
and show that
\begin{equation}
	\MI_{SD}( \rv{S}_2, \rv{F}(\rv{S}_2) ; \rv{M}_2 | \rv{Y}, \rv{M}_1, \cE_1 \cap \rv{Y}, \rv{F}(\cE_1 \cap \rv{Y}))
	=
	O( \sqrt{ |\rv{M}_1| + |\rv{M}_2|} ).
	\label{eq:adv2}
\end{equation}
As we said, we switch here to using statistical distance, and we will see why below.

Following the first round, we know that not much is known about the intersection, because Alice's message $\rv{M}_1$
did not convey a lot of information about it.
So, our proof here proceeds in two steps:
first, we ``pretend'' that \emph{nothing} is known about the intersection,
and consider the distribution $\mu'$ where given $\rv{M}_1$ the distribution of $\rv{Y}, \rv{F}(\rv{Y})$ is completely independent from $\rv{X}$.
We show that under $\mu'$,
Bob's message $\rv{M}_2$ would only convey $\delta|\rv{M}_2|$ bits of information about the intersection.
This is very similar to the analysis of the first round, and it is also carried out using KL-divergence.
Formally, we show that for the distribution $\mu'$ where $\rv{Y}, \rv{F}(\rv{Y})$ are drawn independently of of $\rv{X}$,
we have
\begin{equation}
	\MI^{\mu'}( \rv{S}_2, \rv{F}(\rv{S}_2) ; \rv{M}_2 | \rv{Y}, \rv{M}_1, \cE_1 \cap \rv{Y}, \rv{F}(\cE_1 \cap \rv{Y})) \leq \delta|\rv{M}_2|.
	\label{eq:info1}
\end{equation}
The proof relies on the fact that we excluded heavy queries from $\rv{S}_2$
(recall that $\rv{S}_2 = \left( \rv{X} \cap \rv{Y} \right) \setminus \cE_1$),
so given the conditioning, any query in $\rv{Y}$ can only belong to $\rv{S}_2$ with probability at most $\delta$.

However, $\mu'$ is not the real distribution: given $\rv{M}_1$, we do know a little about the intersection,
so $\rv{Y}, \rv{F}(\rv{Y})$ are not completely independent from $\rv{X}$.
Our next step is to switch to statistical distance, and show that the real distribution $\mu$ (where $\rv{X}, \rv{Y}$ are not independent)
and $\mu'$ (where they are) are close to each other.
Therefore, what we showed for $\mu'$ is also true for $\mu$, with the addition of a small penalty corresponding to the distance between $\mu$ and $\mu'$.

Formally, we prove that 		
\begin{align}
\begin{aligned}
&\MI_{SD}^{\mu}( \rv{S}_2, \rv{F}(\rv{S}_2) ; \rv{M}_2 | \rv{Y}, \rv{M}_1, \cE_1 \cap \rv{Y}, \rv{F}(\cE_1 \cap \rv{Y}))\\
&\leq
O\left(
\MI_{SD}^{\mu'}( \rv{S}_2, \rv{F}(\rv{S}_2) ; \rv{M}_2 | \rv{Y}, \rv{M}_1, \cE_1 \cap \rv{Y}, \rv{F}(\cE_1 \cap \rv{Y}))
+ \mathsf{D}_{SD}( \mu' \parallel \mu)
\right)\end{aligned}\label{eq:SDprop}
\end{align}


Under $\mu'$, by~\eqref{eq:info1} and Pinsker's inequality, we have:
\begin{equation}
	\MI^{\mu'}_{SD}( \rv{S}_2, \rv{F}(\rv{S}_2) ; \rv{M}_2 | \rv{Y}, \rv{M}_1, \cE_1 \cap \rv{Y}, \rv{F}(\cE_1 \cap \rv{Y})) \leq \sqrt{\delta|\rv{M}_2|}.
	\label{eq:info2}
\end{equation}
So, under $\mu'$ the expected amount of information revealed is small.

Next, we bound the difference between $\mu$ and $\mu'$.
We show that:
	\begin{equation*}
		\MI( \rv{Y}, \rv{F}(\rv{Y}) ; \rv{X} | \rv{M}_1 ) \leq 
		\MI( \rv{S}_1, \rv{F}(\rv{S}_1) ; \rv{M}_1 | \rv{X} ).
	\end{equation*}
	This is quite similar to the proof of Step III above --- here we do use standard mutual information,
	so the proof uses the chain rule, just as we did above.
	Since we have shown in Step I that 
	$\MI( \rv{S}_1, \rv{F}(\rv{S}_1) ; \rv{M}_1 | \rv{X} ) \leq \delta|\rv{M}_1|$,
	we conclude using Pinsker's inequality that
\begin{equation}
	\mathsf{D}_{SD}( \mu' \parallel \mu) \leq \sqrt{ \mathsf{D}_{KL}( \mu' \parallel \mu) } \leq \sqrt{ \delta |\rv{M}_1| }.
	\label{eq:mu_mu'}
\end{equation}

Together,~\eqref{eq:info2} and~\eqref{eq:mu_mu'} are the ingredients we need to apply~\eqref{eq:SDprop},
and obtain:
		\begin{align*}
			&\MI_{SD}^{\mu}( \rv{S}_2, \rv{F}(\rv{S}_2) ; \rv{M}_2 | \rv{Y}, \rv{M}_1, \cE_1 \cap \rv{Y}, \rv{F}(\cE_1 \cap \rv{Y}))
			\\
			&\leq
			O\left(
			\MI_{SD}^{\mu'}( \rv{S}_2, \rv{F}(\rv{S}_2) ; \rv{M}_2 | \rv{Y}, \rv{M}_1, \cE_1 \cap \rv{Y}, \rv{F}(\cE_1 \cap \rv{Y}))
			+ \mathsf{D}_{SD}( \mu' \parallel \mu)
			\right)
			\\
			&\leq
			O(\sqrt{\delta(|\rv{M}_1|+|\rv{M}_2|)}).
		\end{align*}

\section{Preliminaries}\label{sec:Preliminaries}

\subsection{Notations}
We use calligraphic letters to denote sets, uppercase for random variables and lowercase for values. 
For $m\in \N$, let $[m] = \set{1, \dots, m}$. For a random variable $\rv{X}$, let $x\getsr \rv{X}$ to denote that $x$ is chosen according to
$\rv{X}$. Similarly, for a set $\Sa$ let $s\getsr \Sa$ to denote that $s$ is chosen according to the uniform distribution over
$\Sa$. The support of the distribution $D$, denoted $\Supp(D)$, is defined as $\set{u \in \cu : \ppr{D}{u} > 0}$. The
statistical distance between two distributions $P$ and $Q$ over a finite set $\cu$, denoted $\SD(P, Q)$, is defined as $\frac{1}{2}\sum_{u\in \cu}|\ppr{P}{u}-\ppr{Q}{u}|$, which is equal to $\max_{\cs\subset \cu} (\ppr {P}\cs - \ppr{Q}\cs)$.

For a vector $\vec{X}=X_1,...,X_n$ and an index $i \in [n]$, let $X_{<i}$ denote the vector $X_1,...,X_{i-1}$ and $X_{\le i}$ denote the vector $X_1,...,X_{i}$. For a set of indexes $T=\set{i_1,\dots,i_k} \subseteq [n]$ such that $i_1 < i_2<\dots<i_k$, let $X_T$ denote the vector $X_{i_1},\dots,X_{i_k}$. Similarly,  $X_{T,<i}$ denotes the vector  $X_{T \cap \set{1,\dots,i-1}}$. For a function $f$, let $f(\vec{X})=(f(X_1),...,f(X_n))$.

For random variables $\rv{A}$ and $\rv{B}$ we use $\rv{A}|_{\rv{B}=b}$ to denote the distribution of $\rv{A}$ condition on the event $\rv{B}=b$, and $\rv{A}\times\rv{B}$ to denote the product between the marginal distributions of $\rv{A}$ and $\rv{B}$. When $\rv{A}$ is independent from $\rv{B}$ we write $\rv{A}\bot\rv{B}$ to emphasize that this is the case.

\subsection{Interactive Protocols}
A two-party protocol $\Pi = (\Ac, \Bc)$ is a pair of probabilistic interactive Turing
machines. The communication between the Turing machines $\Ac$ and $\Bc$ is carried out in rounds,
where in each round one of the parties is active and the other party is idle. In the $j$-th round of
the protocol, the currently active party $\Pc$ acts according to its partial view, writing some value on
its output tape, and then sending a message to the other party (i.e., writing the message on the
common tape).
The communication transcript (henceforth, the transcript) of a given execution of the protocol
$\Pi = (\Ac, \Bc)$, is the list of messages $m$ exchanged between the parties in an execution of the protocol,
where $m_{1,...,j}$ denotes the first $j$ messages in $m$. A view of a party contains its input, its random
tape and the messages exchanged by the parties during the execution. Specifically, $\Ac$'s view is a
tuple $v_\Ac = (i_\Ac, r_\Ac,m)$, where $i_\Ac$ is $\Ac$'s input, $r_\Ac$ are $\Ac$'s random coins, and $m$ is the transcript of the
execution. Let $\out^\Ac$ denote the output of $\Ac$ in the end of the protocol, and $\out^\Bc$ $\Bc$'s output. Notice that given a protocol, the transcript and the outputs are deterministic function of the joint view $(i_\Ac,r_\Ac,i_\Bc,r_\Bc)$. For a joint view $v$, let $\trans(v)$, $\out^\Ac(v)$ and $\out^\Bc(v)$ be the transcript of the protocol and the parties' outputs determined by $v$.
For a distribution $D$ we denote the distribution over the parties' joint view in a random execution of $\Pi$, with inputs drawn from $D$ by $\Pi(D)$.

A protocol $\Pi$ has $r$ rounds, if for every possible random tapes for the parties, the number of
rounds is exactly $r$. The Communication Complexity of a protocol $\Pi$, denoted as $\CC(\Pi)$ is the length of the transcript of the protocol in  the worst case.

\subsection{Oracle-Aided Protocols}
An oracle-aided two-party protocol $\Pi = (\Ac, \Bc)$ is a pair of interactive Turing machines, where each
party has an additional tape called the oracle tape; the Turing machine can make a query to the oracle by writing a string $q$ on its tape. It then receives a string $ans$ (denoting the answer for this
query) on the oracle tape. An oracle-aided protocol is $\ell$-queries protocol if each party makes at most $\ell$ queries during each run of the protocol. In a  \emph {non-adaptive} oracle-aided protocol, the parties choose their  queries before the protocol starts and before querying the oracle. A  \emph {uniform query} oracle-aided protocol, is a non-adaptive protocol in which the parties queries are chosen uniformly form a predetermined set.

\subsection{Key-Agreement Protocols}
Since we are giving lower bounds, we focus on single bit protocols.
\begin{definition} [key-agreement protocol]\label{def:KAProtocol} Let $0 \leq \gamma$, $\alpha \leq 1$ and $q \in N$. A two-party  boolean output protocol $\Pi = (\Ac, \Bc)$ is a $(q, \alpha, \gamma)$-key-agreement relative to a function family $\FFam$,  if the
	following hold:
	\begin{description}
		\item[Accuracy:] {\sf $\Pi$ has $(1 - \alpha)$-accuracy.}  For every $f \in \FFam$:
		$$\ppr{v\getsr \Pi^f}{\out^\Ac(v)=\out^\Bc(v)} \geq 1-\alpha.$$
		
		\item[ Secrecy:] {\sf $\Pi$ has $(q,\gamma)$-secrecy.}     For every  $q$-query oracle-aided algorithm $\Ec$:
		$$\ppr{f \getsr \FFam, v\getsr \Pi^f}{\Ec^f(\trans(v))=\out^{\Ac}(v)}\leq \gamma.$$
			
	\end{description}

\end{definition}

If $\FFam$ is a  trivial function family (\eg $\FFam$ contains only the identity function), then all correlation between the parties' view is implied by the transcript. Hence, an adversary that on a given transcript $\tau$ samples a random view for $\Ac$ that is
consistent with $\tau$, and outputs whatever $\Ac$ would upon this view, agrees with $\Bc$ with the same
probability as does $\Ac$. This simple argument yields the following fact.

\begin{fact} \label{fact:noITKA}
	For every $0 \leq \alpha \leq 1$ and $0 \leq\gamma < 1-\alpha$, there exists no $(q,\alpha,\gamma)$-key-agreement protocol relative to the trivial family.
\end{fact}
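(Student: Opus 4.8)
\textbf{Proof plan for \cref{fact:noITKA}.} The plan is to show that against a trivial family there is an adversary $\Ec$ that makes \emph{no} oracle queries at all and nonetheless outputs $\out^\Ac$ with probability at least $1-\alpha > \gamma$, contradicting $(q,\gamma)$-secrecy for every $q \ge 0$. Since $\FFam$ is trivial we may assume it consists of a single fixed function $f$ that is known to everyone; in particular the transcript and the two outputs are deterministic functions of the parties' random tapes $(r_\Ac,r_\Bc)$ alone, and $\Ec$ can evaluate either party's program without ever touching the oracle. The adversary will simply ``pretend to be $\Bc$'': given the transcript, it resamples fresh Bob-coins consistent with that transcript and outputs the value $\Bc$ would.

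The first ingredient is the standard \emph{rectangle property} of two-party protocols: for every transcript value $\tau$, the set $\set{(r_\Ac,r_\Bc) : \trans(v)=\tau}$ of coin pairs producing $\tau$ is a product set $\mathcal R_\Ac^{\tau}\times\mathcal R_\Bc^{\tau}$. This I would prove by induction on the rounds — the $j$-th message is a function of the active party's view and of $m_{1,\dots,j-1}$, so fixing the full transcript imposes only ``one-sided'' constraints on the coins, and the consistent set is an intersection of such one-sided constraints, hence a rectangle. As an immediate corollary, conditioned on $\trans(v)=\tau$ the coins $r_\Ac$ and $r_\Bc$ are independent (uniform on $\mathcal R_\Ac^{\tau}$ and $\mathcal R_\Bc^{\tau}$ respectively), and therefore $\out^\Ac(v)$ and $\out^\Bc(v)$ — each a function of one party's coins together with $\tau$ — are independent given $\trans(v)=\tau$.

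Next I would define $\Ec$ formally: on input a transcript $\tau$, sample $r'_\Bc\getsr\mathcal R_\Bc^{\tau}$ uniformly, run $\Bc$ on coins $r'_\Bc$ with transcript $\tau$, and output the resulting value; no oracle query is made. Fix $\tau$ in the support of the transcript. By construction $\Ec$'s guess has the same distribution as $\out^\Bc\mid_{\trans(v)=\tau}$ and, being generated from fresh independent randomness, is independent of the true joint view $v$ and in particular of $\out^\Ac(v)$. Since, by the previous paragraph, $\out^\Bc(v)$ is \emph{also} independent of $\out^\Ac(v)$ given $\trans(v)=\tau$ and has the same conditional distribution as $\Ec$'s guess, the two events $\set{\Ec(\tau)=\out^\Ac(v)}$ and $\set{\out^\Bc(v)=\out^\Ac(v)}$ have equal conditional probability, namely $\sum_b \pr{\out^\Bc(v)=b\mid\tau}\cdot\pr{\out^\Ac(v)=b\mid\tau}$. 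Averaging over $\tau$ and using $(1-\alpha)$-accuracy gives $\pr{\Ec(\trans(v))=\out^\Ac(v)} = \pr{\out^\Ac(v)=\out^\Bc(v)}\ge 1-\alpha$. As $\gamma < 1-\alpha$, this contradicts $(q,\gamma)$-secrecy, so no such protocol exists.

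There is essentially no hard step: the only place that deserves care is the rectangle property and the precise statement that it yields conditional independence of $\out^\Ac$ and $\out^\Bc$ given the transcript; everything after that is bookkeeping. I would also note explicitly that $\Ec$ need not be efficient — this is harmless since \cref{def:KAProtocol} only bounds $\Ec$'s number of oracle queries, not its running time — and that the argument is insensitive to the roles of $\Ac$ and $\Bc$ (one could equally well let $\Ec$ simulate $\Ac$ and match $\out^\Bc$), since the key fact is only that, given the transcript, the two outputs are independent with no shared randomness beyond the transcript itself.
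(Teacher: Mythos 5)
Your proposal is correct and matches the paper's argument: the paper likewise observes that for a trivial oracle all correlation between the parties flows through the transcript, so an adversary that resamples one party's view consistent with the transcript and outputs that party's key succeeds with the agreement probability $1-\alpha>\gamma$. Your write-up merely makes the implicit rectangle/conditional-independence step explicit (and swaps which party is simulated, which, as you note, is immaterial).
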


\subsection{Entropy and Information}
The Shannon Entropy of a random variable $\rv{A}$ is defined as $\HH(\rv{A})=\sum_{a\in \Supp(\rv{A})}\ppr{\rv{A}}{a}\log\frac{1}{\ppr{\rv{A}}{a}}$. The conditional entropy of a random variable $\rv{A}$ given $\rv{B}$ is defined as $\HH(\rv{A}|\rv{B})= \mathop{E}_{b \getsr \rv{B}}[\HH(\rv{A}|_{\rv{B}=b)}]$. The following fact is called the chain rule of Shannon Entropy:

\begin{fact}[Chain rule for entropy]\label{fact:HChainRule}
	For a random variable $\vec{A}=\rv{A}_1,...,\rv{A}_n$ the following holds:
	$$\HH(\rv{A}_1,...,\rv{A}_n)=\sum_{i=1}^n\HH(\rv{A}_i|\rv{A}_1,...\rv{A}_{i-1}).$$
\end{fact}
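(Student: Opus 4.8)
The plan is to reduce the general $n$-variable statement to the two-variable case and then induct on $n$. So the first step is to establish the base identity $\HH(\rv{A},\rv{B}) = \HH(\rv{A}) + \HH(\rv{B} \mid \rv{A})$ for any pair of jointly distributed random variables. I would do this by a direct manipulation of the definition: expand $\HH(\rv{A},\rv{B}) = \sum_{a,b} \ppr{}{\rv{A}=a,\rv{B}=b}\log\frac{1}{\ppr{}{\rv{A}=a,\rv{B}=b}}$, write $\ppr{}{\rv{A}=a,\rv{B}=b} = \ppr{}{\rv{A}=a}\cdot\ppr{}{\rv{B}=b \mid \rv{A}=a}$ inside the logarithm (the sum ranges only over $(a,b)$ in the support, so all these probabilities are positive and no division-by-zero issue arises), use $\log\frac{1}{xy} = \log\frac1x + \log\frac1y$ to split into two sums, and recognize the first as $\HH(\rv{A})$ (after summing out $b$) and the second as $\sum_a \ppr{}{\rv{A}=a}\,\HH(\rv{A}_{\rv{B}\mid\rv{A}=a})$, which is exactly $\HH(\rv{B}\mid\rv{A})$ by the definition of conditional entropy given in the text.

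The second step is the induction. The case $n=1$ is immediate, and $n=2$ is the base identity just proved. For the inductive step, assume the claim for $n-1$ variables and apply the two-variable identity with the ``first'' variable taken to be the tuple $\rv{A}' = (\rv{A}_1,\dots,\rv{A}_{n-1})$ and the ``second'' variable taken to be $\rv{A}_n$; this gives
\begin{equation*}
\HH(\rv{A}_1,\dots,\rv{A}_n) = \HH(\rv{A}_1,\dots,\rv{A}_{n-1}) + \HH(\rv{A}_n \mid \rv{A}_1,\dots,\rv{A}_{n-1}).
\end{equation*}
Now apply the inductive hypothesis to the first term on the right to replace it by $\sum_{i=1}^{n-1}\HH(\rv{A}_i \mid \rv{A}_1,\dots,\rv{A}_{i-1})$, and the displayed sum over $i=1,\dots,n$ results.

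There is essentially no hard step here; the only things to be careful about are purely bookkeeping: making sure the sums are taken over supports so that every conditional probability appearing in a logarithm is strictly positive, and checking that the definition of $\HH(\rv{B}\mid\rv{A})$ used in the paper (an average of $\HH$ of the conditional distributions) matches what the algebra produces. If one wanted to avoid even the two-variable lemma as a separate step, one could instead expand $\HH(\rv{A}_1,\dots,\rv{A}_n)$ directly and telescope $\ppr{}{\rv{A}_1=a_1,\dots,\rv{A}_n=a_n} = \prod_{i=1}^n \ppr{}{\rv{A}_i = a_i \mid \rv{A}_{<i} = a_{<i}}$ inside the logarithm, but the two-step route above is cleaner to write.
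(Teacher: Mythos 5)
Your proof is correct and is the standard argument; the paper states this as a known fact without providing any proof, so there is nothing to compare against. Both your two-step route (two-variable identity plus induction) and the direct telescoping alternative you mention are valid, and your attention to restricting sums to the support so that the conditional probabilities inside the logarithms are positive is exactly the right bookkeeping.
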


For a function $f$, the $f$-divergence between random variables $\rv{A}$ and $\rv{B}$, denoted as $D_f(\rv{A}, \rv{B})$, is defined as $D_f(\rv{A}, \rv{B})=\sum_{b\in \rv{B}}\pr{\rv{B}=b}f(\frac{\pr{\rv{A}=b}}{\pr{\rv{B}=b}})$.
We use $\MI_f$ as ``mutual information with respect to the $f$-divergence'', $\MI_f(\rv{A};\rv{B}) = D_f( (A, B), (A \times B))$.
The conditional mutual information, $\MI_f(\rv{A};\rv{B}|\rv{C})$ is defined as $\ex{c\getsr  \rv{C}}{\MI_f(\rv{A}|_{\rv{C}=c};\rv{B}|_{\rv{C}=c})}$

For $f(t)=1/2|t-1|$, $\MI_f=\MI_{SD}$ is the statistical distance between the joint distribution to the product, that is, $\MI_f(\rv{A};\rv{B}) = \SD((\rv{A}\rv{B}),(\rv{A}\times\rv{B}))$.

For $f(t)=t\log t$, the f-divergence is called the KL-divergence, and $\MI_f$ (from here denoted as $\MI_{KL}$ or simply $\MI$) is the mutual information
 $\MI(\rv{A};\rv{B})=\HH(\rv{A})-\HH(\rv{A}|\rv{B})$. The mutual information is known to be symmetric, and the following facts are known: 
\begin{fact} [Chain rule for information] \label{fact:IChainRule}
	For random variables $\vec{A}=\rv{A}_1,...,\rv{A}_n$ and $\rv{B}$, 
	$$\MI(\rv{A};\rv{B})=\sum_{i=1}^n\MI(\rv{A}_i;\rv{B}|\rv{A}_1,...,\rv{A}_{i-1}).$$
\end{fact}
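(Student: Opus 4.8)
The plan is to derive the chain rule for mutual information from two facts already at hand: the identity $\MI(\rv{A};\rv{B})=\HH(\rv{A})-\HH(\rv{A}\mid\rv{B})$ stated just above, and the chain rule for Shannon entropy (Fact~\ref{fact:HChainRule}). Throughout write $\rv{A}_{<i}=\rv{A}_1,\dots,\rv{A}_{i-1}$. First I would rewrite both entropy terms in
\[
\MI(\vec{A};\rv{B})=\HH(\rv{A}_1,\dots,\rv{A}_n)-\HH(\rv{A}_1,\dots,\rv{A}_n\mid\rv{B})
\]
via the entropy chain rule. The first term is immediately $\HH(\rv{A}_1,\dots,\rv{A}_n)=\sum_{i=1}^n\HH(\rv{A}_i\mid\rv{A}_{<i})$ by Fact~\ref{fact:HChainRule}. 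For the conditional term I would first establish the conditioned-on-$\rv{B}$ version of the entropy chain rule: for each value $b$ of $\rv{B}$, apply Fact~\ref{fact:HChainRule} to the distribution of $\rv{A}_1,\dots,\rv{A}_n$ conditioned on $\rv{B}=b$, then average over $b\getsr\rv{B}$ and use the definition of conditional entropy; this gives $\HH(\rv{A}_1,\dots,\rv{A}_n\mid\rv{B})=\sum_{i=1}^n\HH(\rv{A}_i\mid\rv{A}_{<i},\rv{B})$.

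Subtracting the two expansions termwise yields
\[
\MI(\vec{A};\rv{B})=\sum_{i=1}^n\bigl(\HH(\rv{A}_i\mid\rv{A}_{<i})-\HH(\rv{A}_i\mid\rv{A}_{<i},\rv{B})\bigr),
\]
and it remains to recognize the $i$-th summand as $\MI(\rv{A}_i;\rv{B}\mid\rv{A}_{<i})$. By the definition of conditional mutual information as an average, $\MI(\rv{A}_i;\rv{B}\mid\rv{A}_{<i})=\ex{a\getsr\rv{A}_{<i}}{\MI\bigl(\rv{A}_i|_{\rv{A}_{<i}=a};\,\rv{B}|_{\rv{A}_{<i}=a}\bigr)}$; applying $\MI(\rv{U};\rv{V})=\HH(\rv{U})-\HH(\rv{U}\mid\rv{V})$ inside the expectation and then using the definition of conditional entropy (as the average of $\HH(\cdot\mid\rv{A}_{<i}=a)$ over $a$) collapses the right-hand side to exactly $\HH(\rv{A}_i\mid\rv{A}_{<i})-\HH(\rv{A}_i\mid\rv{A}_{<i},\rv{B})$, completing the proof.

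The argument is entirely routine and I do not anticipate a genuine obstacle; the only point demanding a little care is the bookkeeping in lifting the unconditional identities to their conditioned counterparts --- conditioning on $\rv{B}$ in the entropy chain rule step, and on the prefix $\rv{A}_{<i}$ in the final step --- which in each case amounts to simply averaging the corresponding unconditional statement over the conditioning variable. (One could additionally remark that the symmetric version, with the roles of $\vec{A}$ and $\rv{B}$ exchanged, follows in exactly the same way, though it is not needed here.)
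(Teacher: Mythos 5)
Your argument is correct: it is the standard derivation of the chain rule for mutual information from the identity $\MI(\rv{A};\rv{B})=\HH(\rv{A})-\HH(\rv{A}\mid\rv{B})$ together with the entropy chain rule (\cref{fact:HChainRule}), and your care in checking that the paper's definition of conditional mutual information (as an average of unconditional mutual informations over the conditioning variable) collapses to the difference of conditional entropies is exactly the right bookkeeping. The paper states this as a known fact and supplies no proof of its own, so there is nothing to compare against; your proposal fills that gap in the expected way.
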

\begin{fact}\label{fact:IBounds}
	For every random variables $\rv{A}$ and $\rv{B}$, $0 \leq \MI(\rv{A};\rv{B})\leq \HH(\rv{A}) \leq |\rv{A}|.$
\end{fact}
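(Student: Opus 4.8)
The statement is the conjunction of three elementary inequalities, $\MI(\rv{A};\rv{B})\ge 0$, $\MI(\rv{A};\rv{B})\le\HH(\rv{A})$, and $\HH(\rv{A})\le|\rv{A}|$, and the plan is to establish them left to right, each directly from the definitions of $\HH$, $\HH(\cdot|\cdot)$ and $\MI$ recorded above; all three ultimately come down to Jensen's inequality applied either to $\log$ or to the convex function $f(t)=t\log t$.

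First I would show $\MI(\rv{A};\rv{B})\ge 0$. Writing $\MI(\rv{A};\rv{B})=D_f\bigl((\rv{A},\rv{B}),(\rv{A}\times\rv{B})\bigr)$ with $f(t)=t\log t$, Jensen's inequality applied to the convex $f$ with the product $\rv{A}\times\rv{B}$ as the reference distribution — together with $f(1)=0$ and the fact that $\rv{A}\times\rv{B}$ dominates the support of the joint distribution $(\rv{A},\rv{B})$ — yields $D_f\bigl((\rv{A},\rv{B}),(\rv{A}\times\rv{B})\bigr)\ge f(1)=0$. Equivalently, via $\MI(\rv{A};\rv{B})=\HH(\rv{A})-\HH(\rv{A}|\rv{B})$, this is exactly the familiar monotonicity ``conditioning does not increase Shannon entropy'', i.e.\ $\HH(\rv{A}|\rv{B})\le\HH(\rv{A})$, and I would be content to quote it in that form.

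Next, $\MI(\rv{A};\rv{B})\le\HH(\rv{A})$ follows immediately from $\MI(\rv{A};\rv{B})=\HH(\rv{A})-\HH(\rv{A}|\rv{B})$ once one notes $\HH(\rv{A}|\rv{B})\ge 0$; and indeed $\HH(\rv{A}|\rv{B})=\Exp_{b\getsr\rv{B}}\left[\HH(\rv{A}|_{\rv{B}=b})\right]$ is an average of terms each of which is non-negative, since the Shannon entropy of any distribution is a sum of summands of the form $p\log\frac1p\ge 0$. Finally, for $\HH(\rv{A})\le|\rv{A}|$: Jensen applied to the concave $\log$ gives $\HH(\rv{A})=\Exp_{a\getsr\rv{A}}\left[\log\frac{1}{\ppr{\rv{A}}{a}}\right]\le\log\Exp_{a\getsr\rv{A}}\left[\frac{1}{\ppr{\rv{A}}{a}}\right]=\log|\Supp(\rv{A})|$, and since $\rv{A}$ takes values among strings of length $|\rv{A}|$ we have $|\Supp(\rv{A})|\le 2^{|\rv{A}|}$, hence $\log|\Supp(\rv{A})|\le|\rv{A}|$.

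I do not expect a genuine obstacle here: the only ingredient with real content is the non-negativity of the $f$-divergence — equivalently, of mutual information, equivalently the monotonicity of entropy under conditioning — which is a single application of Jensen's inequality, while the remaining two inequalities are bookkeeping with the definitions from the Preliminaries. The one thing to be careful about is the support condition $\Supp\bigl((\rv{A},\rv{B})\bigr)\subseteq\Supp(\rv{A}\times\rv{B})$, which is what makes $D_f$ finite and legitimizes the Jensen argument used for the first inequality.
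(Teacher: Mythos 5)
Your proof is correct and is the standard argument. The paper states \cref{fact:IBounds} as a known fact from information theory and does not prove it, so there is nothing to compare against; all three steps in your proposal (non-negativity of the KL divergence via Jensen, $\HH(\rv{A}|\rv{B})\geq 0$, and the $\log|\Supp(\rv{A})|\leq|\rv{A}|$ bound) are the textbook derivations and are carried out without error.
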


\begin{fact} [Data processing inequality]\label{fact:DataProcessing}
	Let $\rv{A},\rv{B}$ be random variables, and $f$ a function. Then:
	$\MI(f(\rv{A});\rv{B})\leq \MI(\rv{A};\rv{B})$ and $\HH(f(\rv{A}))\leq \HH(\rv{A})$.
\end{fact}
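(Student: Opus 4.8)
The plan is to obtain both inequalities from the two chain rules (Facts~\ref{fact:HChainRule} and~\ref{fact:IChainRule}) together with the nonnegativity of (conditional) entropy and of (conditional) mutual information from Fact~\ref{fact:IBounds}. The one structural observation that does all the work is that $f(\rv{A})$ is a \emph{deterministic} function of $\rv{A}$: conditioned on $\rv{A}=a$, the variable $f(\rv{A})$ equals the constant $f(a)$, so both $\HH(f(\rv{A})|\rv{A})$ and $\MI(f(\rv{A});\rv{B}|\rv{A})$ vanish, each being an average over $a$ of a quantity attached to a point-mass distribution, which is $0$.

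For the entropy bound I would expand $\HH(\rv{A},f(\rv{A}))$ by the chain rule in the two possible orders of the pair. Listing the coordinates as $(\rv{A},f(\rv{A}))$ gives $\HH(\rv{A},f(\rv{A})) = \HH(\rv{A}) + \HH(f(\rv{A})|\rv{A}) = \HH(\rv{A})$, while listing them as $(f(\rv{A}),\rv{A})$ gives $\HH(\rv{A},f(\rv{A})) = \HH(f(\rv{A})) + \HH(\rv{A}|f(\rv{A})) \ge \HH(f(\rv{A}))$, since $\HH(\rv{A}|f(\rv{A}))$ is an expectation of nonnegative entropies. Both expansions are legitimate because $\HH(\rv{A},f(\rv{A}))$ depends only on the joint law of the pair, not on the order in which we list its coordinates; chaining the two lines yields $\HH(f(\rv{A})) \le \HH(\rv{A})$.

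For the information bound I would run the same argument one level up, on $\MI(\rv{A},f(\rv{A});\rv{B})$. The chain rule for information with the pair ordered $(\rv{A},f(\rv{A}))$ gives $\MI(\rv{A},f(\rv{A});\rv{B}) = \MI(\rv{A};\rv{B}) + \MI(f(\rv{A});\rv{B}|\rv{A}) = \MI(\rv{A};\rv{B})$, the last term vanishing because, conditioned on $\rv{A}=a$, $f(\rv{A})$ is constant and hence independent of $\rv{B}|_{\rv{A}=a}$. Ordered the other way, $(f(\rv{A}),\rv{A})$, the chain rule gives $\MI(\rv{A},f(\rv{A});\rv{B}) = \MI(f(\rv{A});\rv{B}) + \MI(\rv{A};\rv{B}|f(\rv{A})) \ge \MI(f(\rv{A});\rv{B})$, using that $\MI(\rv{A};\rv{B}|f(\rv{A}))$ is an average of mutual informations, each nonnegative by Fact~\ref{fact:IBounds}. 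Combining the two displays, $\MI(f(\rv{A});\rv{B}) \le \MI(\rv{A};\rv{B})$.

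There is no genuine obstacle here; the only points worth a word of care are that $\HH(\rv{A},f(\rv{A}))$ and $\MI(\rv{A},f(\rv{A});\rv{B})$ are unchanged under swapping the two coordinates (so that both applications of each chain rule are valid), and that conditional entropy and conditional mutual information are, by definition, expectations of nonnegative quantities and hence nonnegative. As an alternative that avoids even invoking nonnegativity of conditional entropy, once the information bound is in hand the entropy bound follows from $\HH(f(\rv{A})) = \MI(f(\rv{A});f(\rv{A})) \le \MI(\rv{A};f(\rv{A})) \le \HH(\rv{A})$, the middle step being the information inequality just proved and the last step Fact~\ref{fact:IBounds}.
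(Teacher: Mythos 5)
Your proof is correct: the paper states this as a known fact and gives no proof of its own, and your double chain-rule argument (expanding $\HH(\rv{A},f(\rv{A}))$ and $\MI(\rv{A},f(\rv{A});\rv{B})$ in both coordinate orders, using that $f(\rv{A})$ is deterministic given $\rv{A}$ and that conditional entropy and conditional mutual information are nonnegative) is the standard textbook derivation. Nothing is missing.
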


Lastly, a connection between mutual information and statistical distance is known:
\begin{fact}[Pinsker's inequality]\label{fact:Pinsker}
	$$\ISDC{\rv{A}}{\rv{B}} \leq 2\sqrt{\MI(\rv{A};\rv{B})}.$$
\end{fact}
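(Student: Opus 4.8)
The plan is to obtain this from the classical Pinsker inequality; if a self-contained argument is preferred, I would reprove it by the standard two-step route --- reduce to a two-point space, then a one-variable calculus estimate. Write $P = P_{\rv{A},\rv{B}}$ for the joint law of $(\rv{A},\rv{B})$ and $Q = P_{\rv{A}} \times P_{\rv{B}}$ for the product of the marginals. By the definitions of $\MI_{SD}$ and $\MI$ recalled above, $\ISDC{\rv{A}}{\rv{B}} = \SD(P,Q)$ and $\MI(\rv{A};\rv{B}) = \mathsf{D}_{KL}(P \parallel Q)$, so it suffices to prove the generic statement $\SD(P,Q) \le \sqrt{\mathsf{D}_{KL}(P \parallel Q)}$ for arbitrary distributions $P,Q$ on a finite set $\cu$; the factor $2$ in the claim is deliberate slack.

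First I would \emph{reduce to the binary case}. Let $\cs = \set{u \in \cu : P(u) \ge Q(u)}$, so that $\SD(P,Q) = P(\cs) - Q(\cs)$ (this is where the maximum in the variational formula for $\SD$ from the preliminaries is attained). Set $p = P(\cs)$ and $q = Q(\cs)$. Applying the log-sum inequality separately to the terms of $\mathsf{D}_{KL}(P \parallel Q) = \sum_u P(u) \log \frac{P(u)}{Q(u)}$ indexed by $\cs$ and by $\cu \setminus \cs$, each group collapses to a single term and we get $\mathsf{D}_{KL}(P \parallel Q) \ge d(p \parallel q)$, where $d(p \parallel q) = p \log \frac{p}{q} + (1-p) \log \frac{1-p}{1-q}$ is the binary relative entropy (this is exactly the data-processing inequality for the deterministic map $u \mapsto \mathbf{1}[u \in \cs]$). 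It then remains to prove the scalar inequality $d(p \parallel q) \ge (p-q)^2$ for $0 \le q \le p \le 1$.

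For the \emph{scalar step}, fix $q$ and put $g(p) = d(p \parallel q) - (p-q)^2$ on $[q,1]$. Then $g(q) = 0$ and $g'(q) = 0$, while a direct computation gives $g''(p) = \frac{1}{\ln 2} \cdot \frac{1}{p(1-p)} - 2 \ge \frac{4}{\ln 2} - 2 > 0$ since $p(1-p) \le 1/4$; hence $g \ge 0$ on $[q,1]$. Chaining the two steps, $\SD(P,Q)^2 = (p-q)^2 \le d(p \parallel q) \le \mathsf{D}_{KL}(P \parallel Q)$, which gives $\ISDC{\rv{A}}{\rv{B}} = \SD(P,Q) \le \sqrt{\mathsf{D}_{KL}(P \parallel Q)} \le 2\sqrt{\MI(\rv{A};\rv{B})}$.

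There is no real obstacle here --- every computation is routine --- so the only point needing care is tracking the logarithm base: since entropies and $\MI$ are measured in bits, $d(\cdot \parallel \cdot)$ carries a $1/\ln 2$ factor in its derivatives, and because $1/\ln 2 > 1$ this only strengthens the estimate, which is why the stated constant $2$ is far from tight. The shortcut is simply to quote the textbook Pinsker bound $\SD(P,Q) \le \sqrt{\tfrac12 \mathsf{D}_{KL}(P \parallel Q)}$ (in nats, equivalently with a further $1/\ln 2$ in bits), after which the claim is immediate from the crude observation that the resulting constant stays well below $4$.
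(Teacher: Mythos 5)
Your proof is correct. Note first that the paper itself offers no proof of this statement: it is labelled a \emph{Fact} and invoked as a known result (the sentence ``The proofs are in the appendix'' refers to the lemmas that follow it, not to Pinsker's inequality), so there is nothing to compare against line by line; what you have supplied is the standard textbook derivation. Both steps check out. The reduction to the two-point space via the set $\cs = \set{u : P(u) \ge Q(u)}$ correctly uses the variational formula $\SD(P,Q) = P(\cs) - Q(\cs)$ from the preliminaries together with data processing for KL divergence, and absolute continuity is automatic here since $Q = P_{\rv{A}} \times P_{\rv{B}}$ vanishes nowhere on the support of the joint law. The scalar step is also right: with logarithms in base $2$ one gets $g''(p) = \tfrac{1}{\ln 2}\cdot\tfrac{1}{p(1-p)} - 2 \ge \tfrac{4}{\ln 2} - 2 > 0$, and together with $g(q) = g'(q) = 0$ this gives $d(p \parallel q) \ge (p-q)^2$, hence $\SD(P,Q) \le \sqrt{\mathsf{D}_{KL}(P \parallel Q)}$ in bits, which is strictly stronger than the stated bound with constant $2$. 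Your closing remark about the base of the logarithm is the one bookkeeping point that actually matters, and you have it right: the tight form $\SD \le \sqrt{\tfrac{\ln 2}{2}\,\mathsf{D}_{KL}}$ (bits) shows the paper's constant $2$ is deliberately generous, so the Fact as stated follows with room to spare.
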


We will also use the next general lemmas in our proof. The proofs are in \cref{sec:Appendix}. 
\begin{lemma}\label{lemma:ICondAdd}  For every random variables $\rv{A},\rv{B},\rv{C}$ and $\rv{D}$ it holds that 
	$$-\MI(\rv{A};\rv{D}|\rv{C}) \leq \MI(\rv{A};\rv{B}|\rv{C},\rv{D
	})-\MI(\rv{A};\rv{B}|\rv{C})\leq \MI(\rv{A};\rv{D}|\rv{C},\rv{B})$$.
\end{lemma}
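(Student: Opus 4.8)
The plan is to establish both inequalities simultaneously by expressing the quantity $\MI(\rv{A};\rv{B}|\rv{C},\rv{D}) - \MI(\rv{A};\rv{B}|\rv{C})$ as a difference of two nonnegative conditional mutual informations, using the chain rule twice in two different orders. Concretely, I would start from the three-variable expansion of $\MI(\rv{A};\rv{B},\rv{D}|\rv{C})$. By the chain rule (Fact~\ref{fact:IChainRule}), applied conditioned on $\rv{C}$, we have both
\begin{equation*}
	\MI(\rv{A};\rv{B},\rv{D}|\rv{C}) = \MI(\rv{A};\rv{B}|\rv{C}) + \MI(\rv{A};\rv{D}|\rv{C},\rv{B})
\end{equation*}
and
\begin{equation*}
	\MI(\rv{A};\rv{B},\rv{D}|\rv{C}) = \MI(\rv{A};\rv{D}|\rv{C}) + \MI(\rv{A};\rv{B}|\rv{C},\rv{D}).
\end{equation*}

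Equating the two right-hand sides and rearranging gives the exact identity
\begin{equation*}
	\MI(\rv{A};\rv{B}|\rv{C},\rv{D}) - \MI(\rv{A};\rv{B}|\rv{C}) = \MI(\rv{A};\rv{D}|\rv{C},\rv{B}) - \MI(\rv{A};\rv{D}|\rv{C}).
\end{equation*}
Now both inequalities are immediate from the nonnegativity of conditional mutual information (Fact~\ref{fact:IBounds}): the right-hand side is at most $\MI(\rv{A};\rv{D}|\rv{C},\rv{B})$ since we are subtracting the nonnegative quantity $\MI(\rv{A};\rv{D}|\rv{C})$, which gives the upper bound; and the right-hand side is at least $-\MI(\rv{A};\rv{D}|\rv{C})$ since $\MI(\rv{A};\rv{D}|\rv{C},\rv{B}) \geq 0$, which gives the lower bound.

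There is no real obstacle here — the lemma is a routine consequence of the chain rule, and the only thing to be slightly careful about is that the chain rule as stated (Fact~\ref{fact:IChainRule}) is phrased for unconditional mutual information, so one should either invoke its conditional version or note that conditioning on $\rv{C}$ throughout (i.e.\ averaging over $c \getsr \rv{C}$) preserves the identity. I would spell out the conditional chain rule explicitly in one line to keep the argument self-contained, then present the displayed identity above and read off both bounds.
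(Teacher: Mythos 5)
Your proposal is correct and matches the paper's proof in essence: both reduce the claim to the exact identity $\MI(\rv{A};\rv{B}|\rv{C},\rv{D}) - \MI(\rv{A};\rv{B}|\rv{C}) = \MI(\rv{A};\rv{D}|\rv{C},\rv{B}) - \MI(\rv{A};\rv{D}|\rv{C})$ and then invoke nonnegativity of conditional mutual information. The only cosmetic difference is that the paper derives the identity by expanding everything into conditional entropies and regrouping, whereas you obtain it by applying the chain rule to $\MI(\rv{A};\rv{B},\rv{D}|\rv{C})$ in two orders; these are the same computation packaged differently.
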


The next two lemmas are useful in bounding information by using Bernoulli random variables:
\begin{lemma}\label{lemma:RVToIndicator}
	Let $\rv{J}$ be a Bernoulli random variable, s.t. $\pr{\rv{J}=1}\leq 1/2$. Then $$\HH(\rv{J})\leq \pr{\rv{J}=1}(\log{\frac{1}{\pr{\rv{J}=1}}}+4).$$ 
\end{lemma}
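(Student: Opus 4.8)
Let $\rv{J}$ be a Bernoulli random variable with $\pr{\rv{J}=1}\leq 1/2$. Then $\HH(\rv{J})\leq \pr{\rv{J}=1}(\log(1/\pr{\rv{J}=1})+4)$.

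This is an elementary estimate on the binary entropy function, so the plan is a direct calculation. Write $p=\pr{\rv{J}=1}$, so $\HH(\rv{J})=p\log(1/p)+(1-p)\log(1/(1-p))$. The first term is exactly the ``main term'' on the right-hand side, so it suffices to show $(1-p)\log(1/(1-p))\leq 4p$ for all $p\in(0,1/2]$ (the case $p=0$ being trivial since both sides vanish). Since $1-p<1$, we have $\log(1/(1-p))\geq 0$, and dropping the factor $(1-p)\leq 1$ it is enough to prove $\log(1/(1-p))\leq 4p$, i.e. $-\log(1-p)\leq 4p$.

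For this I would use the standard inequality $\ln(1-p)\geq -p/(1-p)$, equivalently $-\ln(1-p)\leq p/(1-p)$, valid for $p\in[0,1)$ (it follows from $\ln(1+t)\leq t$ applied with $t=p/(1-p)$, since $1+p/(1-p)=1/(1-p)$). Converting to base-$2$ logarithms, $-\log(1-p)=-\ln(1-p)/\ln 2\leq p/((1-p)\ln 2)$. For $p\leq 1/2$ we have $1-p\geq 1/2$, so $p/((1-p)\ln 2)\leq 2p/\ln 2\leq 2p/0.693 < 3p \leq 4p$. Chaining these bounds gives $(1-p)\log(1/(1-p))\leq \log(1/(1-p))\leq 4p$, and adding the term $p\log(1/p)$ to both sides yields $\HH(\rv{J})=p\log(1/p)+(1-p)\log(1/(1-p))\leq p\log(1/p)+4p=p(\log(1/p)+4)$, as claimed.

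There is essentially no obstacle here; the only mild subtlety is choosing a clean way to bound $-\log(1-p)$ by a linear function of $p$ on $(0,1/2]$ with a constant that fits under $4$, and the bound $-\ln(1-p)\leq p/(1-p)$ together with $1-p\geq 1/2$ does this comfortably (giving constant $2/\ln 2<2.89$, well below $4$). One should also note explicitly that the edge case $p=0$ is handled by the convention $0\log 0=0$, so that the inequality holds on all of $[0,1/2]$.
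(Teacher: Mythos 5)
Your proof is correct and follows essentially the same route as the paper: both arguments split $\HH(\rv{J})$ into the two terms of the binary entropy, drop the factor $(1-p)\leq 1$, and reduce to the elementary inequality $\log(1/(1-p))\leq 4p$ on $[0,1/2]$. The only (immaterial) difference is that the paper establishes this last inequality by checking $f(0)=0$ and $f'(x)\leq 0$ for $f(x)=\log(1/(1-x))-4x$, whereas you use the standard bound $-\ln(1-p)\leq p/(1-p)$ together with $1-p\geq 1/2$.
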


\begin{lemma}\label{lemma:IChainRuleIndi}
	Let $\rv{A},\rv{B},\rv{M}$ and for each $m\in M$ $\rv{E}_m$ be random variables. Let $\rv{J}_m$ be the indicator for the event $\rv{M}=m$, then
	$$\MI(\rv{A};\rv{B}|\rv{M},\rv{E}_M) \leq \sum_{m \in M}\big[I(\rv{A};\rv{B}|\rv{E}_m)+I(\rv{J}_m;\rv{B}|\rv{E}_m,\rv{A})\big].$$
	
\end{lemma}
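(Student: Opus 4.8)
The plan is to reduce the statement to two applications of the chain rule for information (\cref{fact:IChainRule}) together with nonnegativity of mutual information (\cref{fact:IBounds}); this is a purely information-theoretic manipulation, with no reference to protocols.

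First I would unfold the left-hand side. By the definition of conditional mutual information, $\MI(\rv{A};\rv{B}|\rv{M},\rv{E}_M)=\sum_{m\in M}\pr{\rv{M}=m}\cdot\MI(\rv{A};\rv{B}|\rv{M}=m,\rv{E}_m)$, where in the $m$-th term the variable $\rv{E}_m$ carries its conditional-on-$\{\rv{M}=m\}$ distribution --- this is exactly what the bundled notation $\rv{E}_M$ encodes. Since the events $\{\rv{M}=m\}$ and $\{\rv{J}_m=1\}$ are the same, the $m$-th summand equals $\pr{\rv{J}_m=1}\cdot\MI(\rv{A};\rv{B}|\rv{E}_m,\rv{J}_m=1)$. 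So it suffices to prove, for each fixed $m$, the bound $\pr{\rv{J}_m=1}\cdot\MI(\rv{A};\rv{B}|\rv{E}_m,\rv{J}_m=1)\le\MI(\rv{A};\rv{B}|\rv{E}_m)+\MI(\rv{J}_m;\rv{B}|\rv{E}_m,\rv{A})$, and then sum over $m$.

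For the per-$m$ bound I would proceed in two moves. First, pass from conditioning on the event $\{\rv{J}_m=1\}$ to conditioning on the random variable $\rv{J}_m$: expanding $\MI(\rv{A};\rv{B}|\rv{E}_m,\rv{J}_m)$ as the average over $\rv{J}_m\in\{0,1\}$ and discarding the nonnegative $\{\rv{J}_m=0\}$ contribution gives $\pr{\rv{J}_m=1}\cdot\MI(\rv{A};\rv{B}|\rv{E}_m,\rv{J}_m=1)\le\MI(\rv{A};\rv{B}|\rv{E}_m,\rv{J}_m)$. Second, expand $\MI(\rv{A},\rv{J}_m;\rv{B}|\rv{E}_m)$ by the chain rule in both orders --- once as $\MI(\rv{A};\rv{B}|\rv{E}_m)+\MI(\rv{J}_m;\rv{B}|\rv{E}_m,\rv{A})$ and once as $\MI(\rv{J}_m;\rv{B}|\rv{E}_m)+\MI(\rv{A};\rv{B}|\rv{E}_m,\rv{J}_m)$ --- equate the two expressions, and drop the nonnegative term $\MI(\rv{J}_m;\rv{B}|\rv{E}_m)$ to obtain $\MI(\rv{A};\rv{B}|\rv{E}_m,\rv{J}_m)\le\MI(\rv{A};\rv{B}|\rv{E}_m)+\MI(\rv{J}_m;\rv{B}|\rv{E}_m,\rv{A})$. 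Composing the two inequalities and summing over $m$ completes the argument. (One could instead extract the inequality $\MI(\rv{A};\rv{B}|\rv{E}_m,\rv{J}_m)-\MI(\rv{A};\rv{B}|\rv{E}_m)\le\MI(\rv{A};\rv{J}_m|\rv{E}_m,\rv{B})$ from \cref{lemma:ICondAdd}, but that yields the term $\MI(\rv{A};\rv{J}_m|\rv{E}_m,\rv{B})$ rather than the $\MI(\rv{J}_m;\rv{B}|\rv{E}_m,\rv{A})$ we want, so the two-fold chain rule is the route to take.)

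The mathematical content here is entirely elementary; the only place that needs care is the bookkeeping in the first step --- keeping straight what the selector variable $\rv{E}_M$ denotes, and correctly identifying conditioning on the event $\{\rv{M}=m\}$ with conditioning on $\{\rv{J}_m=1\}$. Once that identification is in place, I expect no real obstacle.
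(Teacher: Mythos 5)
Your proposal is correct and matches the paper's proof essentially step for step: unfold the conditional mutual information over $m$, identify conditioning on $\{\rv{M}=m\}$ with $\{\rv{J}_m=1\}$, add the nonnegative $\{\rv{J}_m=0\}$ contribution to pass to $\MI(\rv{A};\rv{B}|\rv{E}_m,\rv{J}_m)$, and then apply the chain rule to $\MI(\rv{A},\rv{J}_m;\rv{B}|\rv{E}_m)$ in both orders. No gaps.
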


\subsubsection{Some Useful Facts }

\begin{fact} [Data processing inequality for statistical distance]\label{fact:DataProcessingSD}
	Let $\rv{A},\rv{B}$ be random variables, and $f$ a function. Then:
	$\SD(f(\rv{A}),f(\rv{B}))\leq \SD(\rv{A},\rv{B})$.
\end{fact}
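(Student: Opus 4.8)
The plan is to prove the inequality directly from the definition of statistical distance as a (half) $\ell_1$ distance, exploiting the fact that applying $f$ only \emph{merges} atoms of the underlying space and can never split them. View $\rv{A},\rv{B}$ as distributions over a finite set $\cu$ (we may restrict to $\Supp(\rv{A})\cup\Supp(\rv{B})$), and for each value $v$ in the range of $f$ write $f^{-1}(v)=\set{u\in\cu:f(u)=v}$, so that $\Pr[f(\rv{A})=v]=\sum_{u\in f^{-1}(v)}\Pr[\rv{A}=u]$, and likewise for $\rv{B}$.

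First I would expand $\SD(f(\rv{A}),f(\rv{B}))=\tfrac12\sum_v\abs{\Pr[f(\rv{A})=v]-\Pr[f(\rv{B})=v]}$ and substitute the preimage sums to get $\tfrac12\sum_v\abs{\sum_{u\in f^{-1}(v)}\big(\Pr[\rv{A}=u]-\Pr[\rv{B}=u]\big)}$. Next I would apply the triangle inequality inside each outer term to push the absolute value into the inner sum, bounding this by $\tfrac12\sum_v\sum_{u\in f^{-1}(v)}\abs{\Pr[\rv{A}=u]-\Pr[\rv{B}=u]}$. Finally, since the sets $\{f^{-1}(v)\}_v$ partition $\Supp(\rv{A})\cup\Supp(\rv{B})$, the double sum reassembles into $\tfrac12\sum_{u}\abs{\Pr[\rv{A}=u]-\Pr[\rv{B}=u]}=\SD(\rv{A},\rv{B})$, which is the claim.

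Alternatively — and perhaps cleaner given that the preliminaries already record $\SD(P,Q)=\max_{\cs}\big(\Pr_P[\cs]-\Pr_Q[\cs]\big)$ — one can argue set-wise: for any event $\Tset$ in the range of $f$, $\Pr[f(\rv{A})\in\Tset]-\Pr[f(\rv{B})\in\Tset]=\Pr[\rv{A}\in f^{-1}(\Tset)]-\Pr[\rv{B}\in f^{-1}(\Tset)]\le\SD(\rv{A},\rv{B})$, and taking the maximum over $\Tset$ yields $\SD(f(\rv{A}),f(\rv{B}))\le\SD(\rv{A},\rv{B})$.

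There is no genuine obstacle here; the only point requiring care in the first approach is the bookkeeping — verifying that the preimages $f^{-1}(v)$ really do partition the support, so that re-summing the inner terms recovers $\SD(\rv{A},\rv{B})$ exactly rather than an over- or under-count — and, if one wants full rigor beyond the finite case, noting that the triangle-inequality step is an interchange in an absolutely convergent sum. Either argument fits in a few lines.
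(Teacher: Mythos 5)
Your proof is correct. Note that the paper states this as one of its unproved ``useful facts'' and gives no argument for it, so there is nothing to compare against; both of your routes --- the $\ell_1$ expansion with the triangle inequality over the preimage partition, and the set-wise argument via the $\max_{\cs}$ characterization of $\SD$ already recorded in the preliminaries --- are standard and complete as written.
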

\begin{fact} \label{fact:SDEx}
	Let $\rv{A},\rv{B},\rv{C}$ be random variables. Then:
	\begin{equation*}\SD((\rv{A},\rv{B}),(\rv{A},\rv{C}))=
	 \ex{a\getsr\rv{A}}{\SD(\rv{B}|_{\rv{A}=a},\rv{C}|_{\rv{A}=a})}.
	 \end{equation*}
\end{fact}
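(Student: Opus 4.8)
\textbf{Proof proposal for Fact~\ref{fact:SDEx}.}
The plan is to simply unfold the definition of statistical distance and factor out the common marginal of $\rv{A}$. First I would write
\begin{equation*}
\SD\big((\rv{A},\rv{B}),(\rv{A},\rv{C})\big)
= \frac12\sum_{a,v}\Big|\pr{\rv{A}=a,\rv{B}=v}-\pr{\rv{A}=a,\rv{C}=v}\Big|,
\end{equation*}
where the sum ranges over $a$ in the (common) support of $\rv{A}$ and over values $v$ in the union of the supports of $\rv{B}$ and $\rv{C}$. Using the chain rule for probabilities, $\pr{\rv{A}=a,\rv{B}=v}=\pr{\rv{A}=a}\cdot\pr{\rv{B}=v\mid \rv{A}=a}$ and likewise for $\rv{C}$, so each summand equals $\pr{\rv{A}=a}\cdot\big|\pr{\rv{B}=v\mid\rv{A}=a}-\pr{\rv{C}=v\mid\rv{A}=a}\big|$ (note the marginal $\pr{\rv{A}=a}$ is the same in both terms, which is exactly what makes the expression well posed).

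Next I would pull $\pr{\rv{A}=a}$ out of the inner absolute value and reorganize the double sum as $\sum_a \pr{\rv{A}=a}\cdot\tfrac12\sum_v\big|\pr{\rv{B}=v\mid\rv{A}=a}-\pr{\rv{C}=v\mid\rv{A}=a}\big|$. The inner quantity is by definition $\SD(\rv{B}|_{\rv{A}=a},\rv{C}|_{\rv{A}=a})$, and the outer sum is precisely $\ex{a\getsr\rv{A}}{\SD(\rv{B}|_{\rv{A}=a},\rv{C}|_{\rv{A}=a})}$, giving the claim. Terms with $\pr{\rv{A}=a}=0$ contribute nothing and may be dropped, so the conditional distributions only need to be defined on the support of $\rv{A}$.

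There is no real obstacle here; the only point requiring a word of care is the implicit assumption that $\rv{A}$ has the same distribution in both joint distributions (otherwise the factorization step fails), which I would note explicitly. The whole argument is a one-line manipulation and needs no auxiliary lemmas.
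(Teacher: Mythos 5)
Your proof is correct and is the standard argument: expand the definition of statistical distance over the joint space, factor out $\pr{\rv{A}=a}$ (which is legitimate because $\rv{A},\rv{B},\rv{C}$ are jointly distributed, so both pairs share the same marginal on $\rv{A}$), and regroup the sum as an expectation. The paper states this as a fact without proof, so there is nothing to compare against; your one-line derivation is exactly what is intended.
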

\begin{fact} \label{fact:SDRemoveProduct}
	Let $\rv{A},\rv{B},\rv{C}$ be random variables. Then:
	$\SD((\rv{A}\times\rv{B}),(\rv{A}\times\rv{C}))=
	\SD(\rv{B},\rv{C})$.
\end{fact}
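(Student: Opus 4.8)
The plan is to deduce this from Fact~\ref{fact:SDEx}. First I would rewrite the two product distributions as joint distributions: $\rv{A}\times\rv{B}$ is the law of a pair $(\rv{A},\rv{B}')$ in which $\rv{B}'$ has the marginal distribution of $\rv{B}$ and $\rv{B}'\bot\rv{A}$, and similarly $\rv{A}\times\rv{C}=(\rv{A},\rv{C}')$ with $\rv{C}'\sim\rv{C}$ and $\rv{C}'\bot\rv{A}$. Applying Fact~\ref{fact:SDEx} to the pairs $(\rv{A},\rv{B}')$ and $(\rv{A},\rv{C}')$ gives
$$\SD\big((\rv{A}\times\rv{B}),(\rv{A}\times\rv{C})\big)=\ex{a\getsr\rv{A}}{\SD\big(\rv{B}'|_{\rv{A}=a},\,\rv{C}'|_{\rv{A}=a}\big)}.$$
By the independence of $\rv{B}'$ and $\rv{A}$ (resp.\ of $\rv{C}'$ and $\rv{A}$), for every $a\in\Supp(\rv{A})$ the conditional distribution $\rv{B}'|_{\rv{A}=a}$ equals the unconditional distribution of $\rv{B}$, and $\rv{C}'|_{\rv{A}=a}$ equals that of $\rv{C}$; hence the summand is the constant $\SD(\rv{B},\rv{C})$ and the expectation collapses to $\SD(\rv{B},\rv{C})$.

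If one prefers a self-contained argument not relying on Fact~\ref{fact:SDEx}, the same computation can be done straight from the definition of statistical distance: expand $\SD\big((\rv{A}\times\rv{B}),(\rv{A}\times\rv{C})\big)=\tfrac12\sum_{a,u}\big|\pr{\rv{A}=a}\pr{\rv{B}=u}-\pr{\rv{A}=a}\pr{\rv{C}=u}\big|$, pull the factor $\pr{\rv{A}=a}$ out of the absolute value, and use $\sum_a\pr{\rv{A}=a}=1$. Either way there is no genuine obstacle here; the only point requiring care is the bookkeeping of notation — $\rv{A}\times\rv{B}$ denotes the joint law in which the two coordinates are independent with the prescribed marginals, so conditioning the second coordinate on a value of the first leaves its distribution unchanged, which is exactly what trivializes the expectation coming from Fact~\ref{fact:SDEx}.
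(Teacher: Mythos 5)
Your proof is correct; the paper states \cref{fact:SDRemoveProduct} without proof, treating it as a basic property of statistical distance, and your direct computation (factoring $\pr{\rv{A}=a}$ out of the absolute value and summing over $a$) is exactly the canonical argument one would supply. The reduction to \cref{fact:SDEx} is also valid, though it is heavier machinery than needed for a one-line identity.
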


\begin{fact}[Hoeffding's inequality\cite{Hoeffding63}]\label{fact:Hoeffding}
	Let $\rv{A}_1, ..., \rv{A}_n$ be independent random variables s.t. $\rv{A}_i \in [0, 1]$ and let $\widehat{\rv{A}} = \frac{1}{n}\Sigma_{i=1}^n \rv{A}_i$.
	It holds that:
	\begin{align*}
	\pr{\widehat{\rv{A}}-\ex{}{\widehat{\rv{A}}}\geq t}\leq e^{-2nt^2}.
	\end{align*}
\end{fact}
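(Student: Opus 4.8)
\medskip
\noindent\textbf{Proof proposal.}
The plan is to use the standard exponential-moment (Chernoff--Cram\'er) method. First I would fix a parameter $s>0$ and apply Markov's inequality to the nonnegative random variable $e^{s(\widehat{\rv{A}}-\ex{}{\widehat{\rv{A}}})}$, which gives
\[
	\pr{\widehat{\rv{A}}-\ex{}{\widehat{\rv{A}}}\geq t}\;\leq\; e^{-st}\cdot\ex{}{e^{s(\widehat{\rv{A}}-\ex{}{\widehat{\rv{A}}})}}.
\]
Writing $\widehat{\rv{A}}-\ex{}{\widehat{\rv{A}}}=\frac1n\sum_{i=1}^n(\rv{A}_i-\ex{}{\rv{A}_i})$ and using that the $\rv{A}_i$ are independent, the expectation on the right factors as $\prod_{i=1}^n \ex{}{e^{(s/n)(\rv{A}_i-\ex{}{\rv{A}_i})}}$, so it suffices to bound each factor.

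The key ingredient is the moment-generating-function bound for a bounded, centered random variable (Hoeffding's lemma): if $\rv{B}$ takes values in an interval of length $1$ and $\ex{}{\rv{B}}=0$, then $\ex{}{e^{u\rv{B}}}\leq e^{u^2/8}$ for every real $u$. I would prove this from convexity of $x\mapsto e^{ux}$: on the interval $[a,a+1]$ containing the support one has $e^{ux}\leq (a+1-x)e^{ua}+(x-a)e^{u(a+1)}$, and taking expectations with $\ex{}{\rv{B}}=0$ yields $\ex{}{e^{u\rv{B}}}\leq e^{g(u)}$ for an explicit function $g(u)=-up+\ln\bigl((1-p)+pe^{u}\bigr)$ with $p=-a\in[0,1]$. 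One then checks $g(0)=g'(0)=0$ and that $g''(u)$ has the form $v(1-v)$ for some $v\in[0,1]$, hence $g''(u)\leq 1/4$ everywhere, so Taylor's theorem gives $g(u)\leq u^2/8$. Applying this with $u=s/n$ and $\rv{B}=\rv{A}_i-\ex{}{\rv{A}_i}$ (whose range has length $1$ since $\rv{A}_i\in[0,1]$) bounds each factor by $e^{s^2/(8n^2)}$, hence $\ex{}{e^{s(\widehat{\rv{A}}-\ex{}{\widehat{\rv{A}}})}}\leq e^{s^2/(8n)}$.

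Combining the two displays gives $\pr{\widehat{\rv{A}}-\ex{}{\widehat{\rv{A}}}\geq t}\leq e^{-st+s^2/(8n)}$ for every $s>0$; I would then optimize by setting $s=4nt$, which makes the exponent equal to $-2nt^2$ and yields the claim. The only step that requires genuine care is Hoeffding's lemma, and within it the verification that the second derivative of the log-MGF bound never exceeds $1/4$; everything else is routine bookkeeping around Markov's inequality and independence.
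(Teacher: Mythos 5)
Your proof is correct: this is the standard Chernoff--Cram\'er argument (Markov's inequality applied to the exponential moment, factorization by independence, Hoeffding's lemma via convexity and the bound $g''\leq 1/4$, then optimization at $s=4nt$), and all the computations check out. The paper itself does not prove this statement --- it is imported as a known fact with a citation to Hoeffding's 1963 paper --- so there is nothing to compare against beyond noting that your argument is the canonical one.
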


\begin{fact}[Jensen's inequality]\label{fact:Jensen}
	Let $f$ be some convex function, and $x_1,...,x_n$ some numbers in $f$'s domain. And let $w_1,...,w_n$ be positive weights such that $\Sigma w_i =1$. Then:
	\begin{align*}
	f(\Sigma w_ix_i) \geq \Sigma w_i f(x_i).
	\end{align*}
	
\end{fact}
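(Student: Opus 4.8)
The inequality as written is Jensen's inequality in its \emph{concave} form --- equivalently, it is the familiar convex Jensen inequality applied to $-f$ --- so I would prove the statement exactly as worded by taking $f$ to be a concave function on an interval (this is the reading under which the paper uses it, with $f$ a logarithm, a square root, or $t\mapsto -t\log t$). The plan is a direct induction on $n$. The base case $n=2$ is nothing but the definition of concavity: for $w_1,w_2>0$ with $w_1+w_2=1$ one has $f(w_1x_1+w_2x_2)\ge w_1f(x_1)+w_2f(x_2)$, and since every point between $x_1$ and $x_2$ lies in the domain this is well-defined. The case $n=1$ is trivial.

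For the inductive step, assume the bound for any collection of $n-1$ points. Given $x_1,\dots,x_n$ in the domain and positive weights $w_1,\dots,w_n$ with $\sum_i w_i=1$, set $W=\sum_{i=2}^n w_i=1-w_1$ and $x'=\tfrac1W\sum_{i=2}^n w_ix_i$, which is a convex combination of $x_2,\dots,x_n$ and hence again lies in the (convex) domain of $f$. Writing $\sum_{i=1}^n w_ix_i=w_1x_1+Wx'$ and applying the two-point case gives $f\!\big(\sum_{i=1}^n w_ix_i\big)\ge w_1f(x_1)+Wf(x')$; applying the induction hypothesis to $x_2,\dots,x_n$ with the renormalized weights $w_i/W$ (which are positive and sum to $1$) gives $f(x')\ge\sum_{i=2}^n\tfrac{w_i}{W}f(x_i)$. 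Combining the two, $f\!\big(\sum_{i} w_ix_i\big)\ge w_1f(x_1)+\sum_{i=2}^n w_if(x_i)=\sum_{i=1}^n w_if(x_i)$, which closes the induction.

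An alternative one-line argument uses a supporting line: let $\bar x=\sum_i w_ix_i$; a concave $f$ admits an affine majorant $\ell(t)=at+b$ with $\ell(\bar x)=f(\bar x)$ and $\ell\ge f$ on the domain (take the slope $a$ between the one-sided derivatives of $f$ at $\bar x$, which exist at interior points of an interval for a concave function). Then $\sum_i w_if(x_i)\le\sum_i w_i\ell(x_i)=\ell\!\big(\sum_i w_ix_i\big)=\ell(\bar x)=f(\bar x)$, using $\sum_i w_i=1$ and the affinity of $\ell$. The only thing needing the slightest care in either route is domain bookkeeping --- the intermediate point $x'$ staying in the domain in the induction, or the existence of the supporting affine function in the second argument --- and both are automatic once $f$ is a concave function on an interval. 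I expect no genuine obstacle; I would write up the induction as the cleanest fully self-contained option and merely remark on the supporting-line viewpoint.
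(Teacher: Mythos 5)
Your proof is correct, and in fact the paper offers no proof of this statement at all: \cref{fact:Jensen} is quoted as a standard textbook fact, so there is nothing to compare against beyond checking that your argument is sound, which it is (the two-point base case from the definition, the renormalized-weights induction, and the supporting-line alternative are all standard and complete). The one point worth highlighting is that you correctly diagnosed the mismatch in the statement itself: as written, with the inequality $f(\Sigma w_i x_i) \geq \Sigma w_i f(x_i)$, the fact is the \emph{concave} form of Jensen (equivalently, the convex form applied to $-f$), and this is indeed the form the paper uses --- e.g.\ in the proof of \cref{lemma:Hybrid} it is applied with $f(t)=\sqrt{t}$, a concave function, to pull an expectation inside a square root. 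So reading ``convex'' as a slip for ``concave'' (or applying the convex statement to $-f$) is the right call, and your write-up, either via the induction or the supporting-line argument, is a perfectly adequate self-contained proof of the fact as the paper actually uses it.
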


The proofs for the next three lemmas are appear in \cref{sec:Appendix}:
\begin{lemma}\label{lemma:SDChainRule}
	Let $\rv{A},\rv{B}$ and $\rv{C}$ be random variables. Then
	\begin{equation*}
	\ex{c\getsr\rv{C}}{\ISD{\rv{A}}{\rv{B}}{\rv{C}=c}} \leq 2\ISDC{\rv{A},\rv{C}}{\rv{B}}.
	\end{equation*}	
\end{lemma}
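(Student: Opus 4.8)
The plan is to unfold the notation and then reduce to a triangle inequality. By definition of conditional $f$-information, the left-hand side is $\MI_{SD}(\rv{A};\rv{B}\mid\rv{C})=\ex{c\getsr\rv{C}}{\SD\bigl((\rv{A},\rv{B})|_{\rv{C}=c},\ \rv{A}|_{\rv{C}=c}\times\rv{B}|_{\rv{C}=c}\bigr)}$, and the right-hand side is $2\ISDC{\rv{A},\rv{C}}{\rv{B}}=2\,\SD\bigl((\rv{A},\rv{C},\rv{B}),\ (\rv{A},\rv{C})\times\rv{B}\bigr)$. Since $\MI_{SD}$ has no chain rule analogous to the one for mutual information, instead of trying to ``pull out'' the conditioning on $\rv{C}$ directly I will route through an intermediate product distribution and absorb a factor of $2$ via the triangle inequality for statistical distance.

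The key preparatory step is a ``conditioned-on-$\rv{C}$'' rewriting of the right-hand side. The distributions $(\rv{A},\rv{C},\rv{B})$ and $(\rv{A},\rv{C})\times\rv{B}$ share the $\rv{C}$-marginal, and conditioned on $\rv{C}=c$ the first is $(\rv{A},\rv{B})|_{\rv{C}=c}$ while the second is $\rv{A}|_{\rv{C}=c}\times\rv{B}$ --- note the $\rv{B}$-part here is the \emph{unconditioned} marginal, because $\rv{B}$ was independent of $(\rv{A},\rv{C})$ in that distribution. Hence \cref{fact:SDEx} gives $\ISDC{\rv{A},\rv{C}}{\rv{B}}=\ex{c\getsr\rv{C}}{\SD\bigl((\rv{A},\rv{B})|_{\rv{C}=c},\ \rv{A}|_{\rv{C}=c}\times\rv{B}\bigr)}$, and the same fact applied to $(\rv{C},\rv{B})$ versus $\rv{C}\times\rv{B}$ gives $\ISDC{\rv{C}}{\rv{B}}=\ex{c\getsr\rv{C}}{\SD(\rv{B}|_{\rv{C}=c},\ \rv{B})}$.

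Next, for each fixed $c$ I would apply the triangle inequality through the point $\rv{A}|_{\rv{C}=c}\times\rv{B}$:
$$\SD\bigl((\rv{A},\rv{B})|_{\rv{C}=c},\ \rv{A}|_{\rv{C}=c}\times\rv{B}|_{\rv{C}=c}\bigr)\le\SD\bigl((\rv{A},\rv{B})|_{\rv{C}=c},\ \rv{A}|_{\rv{C}=c}\times\rv{B}\bigr)+\SD\bigl(\rv{A}|_{\rv{C}=c}\times\rv{B},\ \rv{A}|_{\rv{C}=c}\times\rv{B}|_{\rv{C}=c}\bigr),$$
and \cref{fact:SDRemoveProduct} collapses the last term to $\SD(\rv{B},\ \rv{B}|_{\rv{C}=c})$. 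Taking $\ex{c\getsr\rv{C}}{}$ on both sides and substituting the two identities from the previous paragraph yields $\MI_{SD}(\rv{A};\rv{B}\mid\rv{C})\le\ISDC{\rv{A},\rv{C}}{\rv{B}}+\ISDC{\rv{C}}{\rv{B}}$. Finally, $\ISDC{\rv{C}}{\rv{B}}\le\ISDC{\rv{A},\rv{C}}{\rv{B}}$ by the data-processing inequality \cref{fact:DataProcessingSD} applied to the coordinate projection that forgets $\rv{A}$ (it sends $(\rv{A},\rv{C},\rv{B})\mapsto(\rv{C},\rv{B})$ and $(\rv{A},\rv{C})\times\rv{B}\mapsto\rv{C}\times\rv{B}$), and combining the last two inequalities proves the lemma.

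I do not expect a real obstacle: the argument uses only facts already stated, and everything is a careful bookkeeping of the definition of conditional $\MI_{SD}$. The two spots that need care are (i) the conditioned-on-$\rv{C}$ rewriting of $\ISDC{\rv{A},\rv{C}}{\rv{B}}$ --- the whole reduction hinges on noticing that after conditioning on $\rv{C}=c$ the product distribution $(\rv{A},\rv{C})\times\rv{B}$ has $\rv{B}$-part equal to the unconditioned $\rv{B}$, which is exactly what makes $\rv{A}|_{\rv{C}=c}\times\rv{B}$ the correct midpoint for the triangle inequality; and (ii) exchanging $\rv{B}$ for $\rv{B}|_{\rv{C}=c}$ inside a product distribution, which costs precisely $\SD(\rv{B},\rv{B}|_{\rv{C}=c})$ by \cref{fact:SDRemoveProduct}. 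These two effects are what produce the claimed factor of $2$.
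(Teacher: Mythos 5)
Your proof is correct and follows essentially the same route as the paper's: the same triangle inequality through the midpoint $\rv{A}|_{\rv{C}=c}\times\rv{B}$, the same collapse of the second term to $\SD(\rv{B},\rv{B}|_{\rv{C}=c})$ via \cref{fact:SDRemoveProduct}, and the same data-processing step to absorb that term into the main one (the paper applies data processing pointwise in $c$, you apply it after averaging, which is an immaterial difference). The only addition is your explicit justification, via \cref{fact:SDEx}, that $\ISDC{\rv{A},\rv{C}}{\rv{B}}$ equals $\ex{c\getsr\rv{C}}{\SDC{\rv{A},\rv{B}|_{\rv{C}=c}}{\rv{A}|_{\rv{C}=c}\times\rv{B}}}$, a step the paper uses without comment.
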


\begin{lemma}\label{lemma:SDCondAdd}
	Let $\rv{A},\rv{B}$ and $\rv{M}$ be random variables. Then
	\begin{align*}
&\ISDC{\rv{M}}{ \rv{A}}
\leq
	\ex{b\getsr \rv{B}}{\ISD{\rv{M}}{\rv{A}}{\rv{B}=b}}+
	\ISDC{\rv{A}}{\rv{B}}.
	\end{align*}
\end{lemma}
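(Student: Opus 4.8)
The plan is to read the left-hand side as a statistical distance between a joint law and a product law on pairs $(m,a)$, namely $\ISDC{\rv{M}}{\rv{A}}=\SD\big((\rv{M},\rv{A}),\,\rv{M}\times\rv{A}\big)$, and to interpolate from the joint law to the product law through carefully chosen intermediate laws on triples $(b,m,a)$, so that each leg of the triangle inequality collapses via \cref{fact:SDEx} to exactly one of the two terms on the right-hand side. Concretely, I would introduce three distributions on triples $(b,m,a)$: the true joint $D_1=(\rv{B},\rv{M},\rv{A})$; the distribution $D_2$ under which $b\getsr\rv{B}$ and then, conditioned on $\rv{B}=b$, the coordinates $m$ and $a$ are sampled \emph{independently} from $\rv{M}|_{\rv{B}=b}$ and $\rv{A}|_{\rv{B}=b}$ (so $D_2$ keeps the true $(\rv{B},\rv{M})$-marginal and the true $(\rv{B},\rv{A})$-marginal, but makes $\rv{M}\bot\rv{A}$ given $\rv{B}$); and the distribution $D_3$ under which $b\getsr\rv{B}$, then $m\getsr\rv{M}|_{\rv{B}=b}$, then $a\getsr\rv{A}$ is drawn from its marginal independently of $(b,m)$ (so $D_3$ still keeps the true $(\rv{B},\rv{M})$-marginal, and its $(m,a)$-marginal is exactly $\rv{M}\times\rv{A}$, since the $m$-marginal averages out to $\rv{M}$ and $a$ is independent of $m$).

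Writing $Q$ for the $(m,a)$-marginal of $D_2$, the key step is the chain
\begin{align*}
\ISDC{\rv{M}}{\rv{A}}
&\le \SD\big((\rv{M},\rv{A}),\,Q\big)+\SD\big(Q,\,\rv{M}\times\rv{A}\big)\\
&\le \SD(D_1,D_2)+\SD(D_2,D_3),
\end{align*}
where the first line is the triangle inequality and the second is \cref{fact:DataProcessingSD} applied to the projection $(b,m,a)\mapsto(m,a)$ (recall $(\rv{M},\rv{A})$, $Q$, $\rv{M}\times\rv{A}$ are the $(m,a)$-marginals of $D_1,D_2,D_3$ respectively). Then I would identify the two terms. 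Since $D_1$ and $D_2$ agree on the $\rv{B}$-marginal, \cref{fact:SDEx} gives $\SD(D_1,D_2)=\ex{b\getsr\rv{B}}{\SD\big((\rv{M},\rv{A})|_{\rv{B}=b},\ \rv{M}|_{\rv{B}=b}\times\rv{A}|_{\rv{B}=b}\big)}=\ex{b\getsr\rv{B}}{\ISD{\rv{M}}{\rv{A}}{\rv{B}=b}}$. Since $D_2$ and $D_3$ agree on the $(\rv{B},\rv{M})$-marginal and, conditioned on $(\rv{B},\rv{M})=(b,m)$, assign $a$ the laws $\rv{A}|_{\rv{B}=b}$ and $\rv{A}$ respectively (neither depending on $m$), \cref{fact:SDEx} gives $\SD(D_2,D_3)=\ex{b\getsr\rv{B}}{\SD\big(\rv{A}|_{\rv{B}=b},\,\rv{A}\big)}=\SD\big((\rv{B},\rv{A}),\,\rv{B}\times\rv{A}\big)=\ISDC{\rv{A}}{\rv{B}}$. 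Adding the two yields the claimed inequality.

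The only genuine obstacle here is bookkeeping: $D_2$ and $D_3$ must be chosen so that consecutive triple-laws agree on precisely the marginal needed to invoke \cref{fact:SDEx}, and so that the resulting conditional distance is exactly the term I want (the conditional statistical-distance information on the one hand, and $\ISDC{\rv{A}}{\rv{B}}$ on the other). Everything else is the triangle inequality and the data-processing inequality, so each step is an inequality in the correct direction and nothing is lost (the bound is in fact tight, e.g.\ when $\rv{M}=\rv{A}=\rv{B}$). I would deliberately avoid trying to mirror the KL chain-rule argument behind \cref{lemma:ICondAdd}, since statistical distance does not enjoy the clean additivity that makes such a chain rule work; routing through the interpolation $D_1\to D_2\to D_3$ sidesteps that entirely.
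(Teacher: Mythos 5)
Your proof is correct and is essentially the paper's own argument: the three interpolating laws $D_1,D_2,D_3$ (restricted to each $\rv{B}=b$) are exactly the joint $(\rv{M},\rv{A})|_{\rv{B}=b}$, the conditional product $\rv{M}|_{\rv{B}=b}\times\rv{A}|_{\rv{B}=b}$, and $\rv{M}|_{\rv{B}=b}\times\rv{A}$ that the paper uses, with the same tools (triangle inequality, data processing, \cref{fact:SDEx}, \cref{fact:SDRemoveProduct}). The only difference is cosmetic ordering — the paper applies data processing first to lift $\rv{M}$ to $(\rv{M},\rv{B})$ and then the triangle inequality inside the expectation, whereas you apply the triangle inequality on the $(m,a)$-marginals first and data-process each leg afterwards.
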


\begin{lemma}\label{lemma:SDVarRep}
	Let $\rv{A},\rv{B}$ and $\rv{M}$ be random variables. Then \begin{align*}
	&\ex{m\getsr \rv{M}}{\ISD{\rv{A}}{\rv{B}}{\rv{M}=m}}
	\leq 2\ex{b\getsr \rv{B}}{\ISD{\rv{A}}{\rv{M}}{\rv{B}=b}}+2\ISDC{\rv{A}}{ \rv{B}}.
	\end{align*}
\end{lemma}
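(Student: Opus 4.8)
The statement, after unwinding the definitions of $\MI_{SD}$ and of conditional $\MI_{SD}$, is exactly the inequality
\[
\MI_{SD}(\rv{A};\rv{B}|\rv{M})\ \le\ 2\,\MI_{SD}(\rv{A};\rv{M}|\rv{B})+2\,\MI_{SD}(\rv{A};\rv{B}),
\]
and the plan is to reduce all four quantities to averages of statistical distances between conditional copies of $\rv{A}$, and then chain these distances by the triangle inequality. First I would use the definition of conditional $\MI_{SD}$ together with \cref{fact:SDEx} (applied, for each fixed value $m$ of $\rv{M}$, to the distributions $(\rv{A},\rv{B})|_{\rv{M}=m}$ and $(\rv{A}|_{\rv{M}=m})\times(\rv{B}|_{\rv{M}=m})$, and then averaged over $m\getsr\rv{M}$) to obtain
\[
\MI_{SD}(\rv{A};\rv{B}|\rv{M})=\ex{(b,m)\getsr(\rv{B},\rv{M})}{\SD\!\left(\rv{A}|_{\rv{B}=b,\rv{M}=m},\,\rv{A}|_{\rv{M}=m}\right)},
\]
and, by the same argument with the roles of $\rv{B}$ and $\rv{M}$ exchanged,
\[
\MI_{SD}(\rv{A};\rv{M}|\rv{B})=\ex{(b,m)\getsr(\rv{B},\rv{M})}{\SD\!\left(\rv{A}|_{\rv{B}=b,\rv{M}=m},\,\rv{A}|_{\rv{B}=b}\right)}.
\]
A single further application of \cref{fact:SDEx} gives $\MI_{SD}(\rv{A};\rv{B})=\ex{b\getsr\rv{B}}{\SD\!\left(\rv{A}|_{\rv{B}=b},\rv{A}\right)}$ and $\MI_{SD}(\rv{A};\rv{M})=\ex{m\getsr\rv{M}}{\SD\!\left(\rv{A}|_{\rv{M}=m},\rv{A}\right)}$.

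With these representations in hand the next step is immediate: for every $(b,m)$ in the support of $(\rv{B},\rv{M})$, the triangle inequality for statistical distance, inserting the intermediate points $\rv{A}|_{\rv{B}=b}$ and the marginal $\rv{A}$, yields
\[
\SD\!\left(\rv{A}|_{\rv{B}=b,\rv{M}=m},\rv{A}|_{\rv{M}=m}\right)\le \SD\!\left(\rv{A}|_{\rv{B}=b,\rv{M}=m},\rv{A}|_{\rv{B}=b}\right)+\SD\!\left(\rv{A}|_{\rv{B}=b},\rv{A}\right)+\SD\!\left(\rv{A},\rv{A}|_{\rv{M}=m}\right).
\]
Taking the expectation over $(b,m)\getsr(\rv{B},\rv{M})$ and identifying each term via the formulas above gives
\[
\MI_{SD}(\rv{A};\rv{B}|\rv{M})\ \le\ \MI_{SD}(\rv{A};\rv{M}|\rv{B})+\MI_{SD}(\rv{A};\rv{B})+\MI_{SD}(\rv{A};\rv{M}).
\]

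The one genuinely non-mechanical point — and the step I expect to be the crux — is absorbing the stray term $\MI_{SD}(\rv{A};\rv{M})$, which is not among the two quantities on the right-hand side of the target inequality. For this I would invoke \cref{lemma:SDCondAdd} directly (its variables are already named $\rv{A},\rv{B},\rv{M}$), obtaining $\MI_{SD}(\rv{M};\rv{A})\le \MI_{SD}(\rv{M};\rv{A}|\rv{B})+\MI_{SD}(\rv{A};\rv{B})$; since $\MI_{SD}$ is symmetric (statistical distance is unchanged by relabeling coordinates, and conditional $\MI_{SD}$ is an average of such), this reads $\MI_{SD}(\rv{A};\rv{M})\le \MI_{SD}(\rv{A};\rv{M}|\rv{B})+\MI_{SD}(\rv{A};\rv{B})$. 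Substituting into the displayed bound gives $\MI_{SD}(\rv{A};\rv{B}|\rv{M})\le 2\,\MI_{SD}(\rv{A};\rv{M}|\rv{B})+2\,\MI_{SD}(\rv{A};\rv{B})$, which is the claim. Overall the proof should be short; the only things to be careful about are getting the conditionings in the \cref{fact:SDEx} manipulations exactly right, and recognizing that the leftover term is itself controlled by \cref{lemma:SDCondAdd}.
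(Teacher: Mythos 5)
Your proposal is correct and follows essentially the same route as the paper's proof: both reduce all four quantities to averages of $\SD\!\left(\rv{A}|_{\cdot},\rv{A}|_{\cdot}\right)$ via \cref{fact:SDEx}, apply the triangle inequality through the same intermediate points $\rv{A}|_{\rv{B}=b}$ and $\rv{A}$, and absorb the leftover $\ISDC{\rv{A}}{\rv{M}}$ term using \cref{lemma:SDCondAdd}. No gaps.
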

For our proof we need only the following specific case of \cref{lemma:SDVarRep}:
\begin{corollary}\label{cor:SDVarRep}
	Let $\rv{A},\rv{B}$ and $\rv{M}$ be random variables, such that $\rv{A}\bot\rv{B}$. Then \begin{align*}
	&\ex{m\getsr \rv{M}}{\ISD{\rv{A}}{\rv{B}}{\rv{M}=m}}
	\leq 2\ex{b\getsr \rv{B}}{\ISD{\rv{A}}{\rv{M}}{\rv{B}=b}}.
	\end{align*}
\end{corollary}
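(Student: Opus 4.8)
The plan is to obtain \cref{cor:SDVarRep} as an immediate consequence of the general \cref{lemma:SDVarRep}. Since $\rv{A}\bot\rv{B}$, the joint law of $(\rv{A},\rv{B})$ coincides with the product of its marginals, hence $\ISDC{\rv{A}}{\rv{B}}=\SD((\rv{A},\rv{B}),\rv{A}\times\rv{B})=0$. Substituting this into the inequality of \cref{lemma:SDVarRep} removes the last summand and leaves exactly the stated bound. There is essentially nothing else to do; the real content lies in \cref{lemma:SDVarRep} itself, not in this corollary.

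Should one instead want a self-contained argument that does not invoke the general lemma, the natural route is as follows. Fix $m$ in the support of $\rv{M}$ and apply the triangle inequality for statistical distance with the intermediate distribution $\rv{A}\times(\rv{B}|_{\rv{M}=m})$, so that $\ISD{\rv{A}}{\rv{B}}{\rv{M}=m}$ is at most $\SD((\rv{A},\rv{B})|_{\rv{M}=m},\,\rv{A}\times(\rv{B}|_{\rv{M}=m}))$ plus $\SD(\rv{A}\times(\rv{B}|_{\rv{M}=m}),\,(\rv{A}|_{\rv{M}=m})\times(\rv{B}|_{\rv{M}=m}))$. By \cref{fact:SDRemoveProduct} the second term equals $\SD(\rv{A},\rv{A}|_{\rv{M}=m})$, so its expectation over $m\getsr\rv{M}$ is $\MI_{SD}(\rv{A};\rv{M})$; using \cref{lemma:SDCondAdd} together with $\ISDC{\rv{A}}{\rv{B}}=0$, this is at most $\ex{b\getsr\rv{B}}{\ISD{\rv{A}}{\rv{M}}{\rv{B}=b}}$. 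For the first term, both distributions share the same $\rv{B}$-marginal $\rv{B}|_{\rv{M}=m}$, so \cref{fact:SDEx} (with the pair ordered as $(\rv{B},\rv{A})$) rewrites it as $\ex{b\getsr\rv{B}|_{\rv{M}=m}}{\SD(\rv{A}|_{\rv{M}=m,\rv{B}=b},\,\rv{A})}$; taking the expectation over $m\getsr\rv{M}$, using $\rv{A}\bot\rv{B}$ to replace $\rv{A}$ by $\rv{A}|_{\rv{B}=b}$, and applying \cref{fact:SDEx} once more for each fixed $b$, this too collapses to $\ex{b\getsr\rv{B}}{\ISD{\rv{A}}{\rv{M}}{\rv{B}=b}}$. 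Adding the two bounds produces the factor $2$ in the statement.

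The main obstacle here is bookkeeping rather than a genuine difficulty: in the direct argument one must keep careful track of the conditionings, in particular checking that the two intermediate product distributions really do share the $\rv{B}$-marginal so that \cref{fact:SDEx} applies as stated, and exploiting the independence $\rv{A}\bot\rv{B}$ to interchange $\rv{A}$ and $\rv{A}|_{\rv{B}=b}$ freely inside the statistical distances. Neither step is deep, which is precisely why the statement is phrased as a corollary of \cref{lemma:SDVarRep} and will be derived from it in one line.
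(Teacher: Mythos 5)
Your main derivation is exactly the paper's: the corollary is obtained by plugging $\ISDC{\rv{A}}{\rv{B}}=0$ (which holds since $\rv{A}\bot\rv{B}$) into \cref{lemma:SDVarRep}, which removes the additive term and yields the stated bound. The self-contained alternative you sketch is also sound, but it essentially re-proves \cref{lemma:SDVarRep} in the independent case, so the one-line deduction is the right choice.
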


\section{Uniform-Query Protocols}\label{sec:UnifromProtocols}
In this section, we prove a lower bound on the communication complexity of \emph{uniform-query} key-agreement  protocols.  Recall that an oracle-aided  protocol has \emph{ uniform-queries}, if the  queries made by the parties are uniformly chosen independently from an  (a-priori fixed) domain. 
Our bound is that  an $\ell$-uniform-query protocol secure against  $\ell^2$-query eavesdropper, must have communication complexity  $\Omega(\ell)$. It follows that the uniform-query  protocol of \citet{Merkle82a} (\ie Merkle  puzzle) has optimal   communication complexity (up to a log factor) for such protocols.  
 We prove the bound by exhibiting a reduction from uniform-query key-agreement protocol to (no oracle) protocol for solving  the \textit{set-disjointness problem}.

\begin{definition}[Set-disjointness]
	 Protocol $\Pi=(\Ac,\Bc)$ {\sf solves   set-disjointness    with error $\epsilon$  over distribution $D$ (with support $(\zs)^\ast \times (\zs)^\ast$)},  if 
	
	$$\ppr{\substack{(\setx,\sety) \getsr D \\r_\Ac \getsr \zs, r_\Bc \getsr \zs\\ r_p \getsr \zs}}{(\Ac(\setx;r_\Ac),\Bc(\sety;r_\Bc))(r_\p) = (\setx\cap\sety=\emptyset \land  \setx\cap\sety=\emptyset)} \ge 1- \eps.$$
\end{definition}
Namely, with save but probability $\eps$ over the instance in hand and their private and public randomness, the parties find outs whether their two input sets  intersect.   Our reduction is to solving  set-disjointness  over the distribution below, known to be hard for low complexity protocols.

\begin{definition}[hard distribution for set-disjointness]
For $\ell \in \N$, let \\	 $\q^0_\ell = \{\setx,\sety\subset [\ell] \colon\size{\setx}=\size{\sety}= \floor{\ell/4} \ , \ \setx\cap\sety=\emptyset\}$ and  let\\ $\q^1_\ell = \set{\setx,\sety\subset [\ell]\colon \size{\setx}=\size{\sety}=\floor{\ell/4} , \size{\setx\cap\sety}=1}$. Let $D_\ell^0$ and $ D_\ell^1$ be the uniform distribution over $\q^0_\ell $ and $\q^1_\ell$ respectively, and let $D_\ell=\frac34 \cdot D^0_\ell+\frac14 \cdot D^1_\ell$.
\end{definition}
\citet{razborov1992distributional} has shown that  solving set-disjointness $D_\ell$ with small error require high communication complexity.
 \begin{theorem}[hardness of  $D_\ell$, \cite{razborov1992distributional}]\label{thm:hardnessOfDn}
 Exists $ \epsilon>0$  such that  for  every $\ell \in \N$ and a protocol $\Pi$ that solves set-disjointness over $D_\ell$ with error $\epsilon$,  it holds that $\CC(\Pi) \geq \Omega(\ell)$.
 \end{theorem}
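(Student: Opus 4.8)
The statement is classical, so the plan is to present Razborov's rectangle (``corruption'') argument. First I would reduce to deterministic protocols: a randomized protocol that errs with probability at most $\epsilon$ over $D_\ell$ (the probability taken over $(X,Y)\sim D_\ell$ and over the parties' private and public randomness) yields, by averaging over the coins and fixing them to their best value, a \emph{deterministic} protocol of the same communication complexity and error at most $\epsilon$ over $D_\ell$. So assume $\Pi$ is deterministic with $\CC(\Pi)=c$. As usual, $\Pi$ partitions the relevant part of $(\zs)^\ast\times(\zs)^\ast$ into at most $2^c$ combinatorial rectangles, each labeled by $\Pi$'s output on it. Let $\mathcal{R}_{\mathsf d}$ be the rectangles labeled ``$X\cap Y=\emptyset$'' and $\mathcal{R}_{\mathsf i}$ those labeled ``$X\cap Y\neq\emptyset$''. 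Since the $D_\ell^1$-part (unique intersection) has weight $1/4$ in $D_\ell$ and the $D_\ell^0$-part (disjoint) has weight $3/4$, the error bound splits as $\sum_{R\in\mathcal{R}_{\mathsf d}} D_\ell^1(R)\le 4\epsilon$ (error on intersecting inputs) and $\sum_{R\in\mathcal{R}_{\mathsf i}} D_\ell^0(R)\le \tfrac43\epsilon$ (error on disjoint inputs).

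The core of the proof is the \emph{corruption lemma}: there are absolute constants $\alpha,\beta>0$ such that every combinatorial rectangle $R$ satisfies $D_\ell^1(R)\ge \alpha\cdot D_\ell^0(R)-2^{-\beta\ell}$. Granting this, summing over $R\in\mathcal{R}_{\mathsf d}$ and using $\abs{\mathcal{R}_{\mathsf d}}\le 2^c$ gives $4\epsilon\ge \alpha\sum_{R\in\mathcal{R}_{\mathsf d}} D_\ell^0(R)-2^{c-\beta\ell}$, hence $\sum_{R\in\mathcal{R}_{\mathsf d}} D_\ell^0(R)\le(4\epsilon+2^{c-\beta\ell})/\alpha$. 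Since the rectangles partition the input space, $\sum_{R\in\mathcal{R}_{\mathsf d}} D_\ell^0(R)+\sum_{R\in\mathcal{R}_{\mathsf i}} D_\ell^0(R)=1$, so combining with $\sum_{R\in\mathcal{R}_{\mathsf i}} D_\ell^0(R)\le\tfrac43\epsilon$ yields $1\le \tfrac43\epsilon+(4\epsilon+2^{c-\beta\ell})/\alpha$. Fixing $\epsilon$ to a small enough absolute constant (so that $\tfrac43\epsilon+4\epsilon/\alpha\le \tfrac12$) forces $2^{c-\beta\ell}\ge \alpha/2$, i.e.\ $c\ge \beta\ell-O(1)=\Omega(\ell)$, which is the theorem.

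The only genuinely nontrivial step---and the one I expect to be the main obstacle---is the corruption lemma. The idea is to couple $D_\ell^0$ and $D_\ell^1$: a unique-intersection instance can be produced from a disjoint instance $(X,Y)$ of size $\floor{\ell/4}$ by choosing one element of $X$ and one element of $Y$ and identifying them, so (up to events of probability $2^{-\Omega(\ell)}$) the two distributions differ only by such a local ``merge''. Fix a rectangle $R=\mathcal{X}\times\mathcal{Y}$ with $D_\ell^0(R)$ not exponentially small. A row/column counting argument---bounding, for a typical $X\in\mathcal{X}$, the number of compatible $Y\in\mathcal{Y}$ and arguing that the elements occurring in these sets are spread over a constant fraction of $[\ell]$---shows that a constant fraction of the merged instances still lie in $\mathcal{X}\times\mathcal{Y}$, so $D_\ell^1(R)\ge\alpha D_\ell^0(R)$ up to the additive error. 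Turning this intuition into a rigorous estimate is precisely Razborov's combinatorial lemma, which I would either cite directly or reproduce. (An alternative is the information-complexity proof of Bar-Yossef--Jayram--Kumar--Sivakumar, lower-bounding the internal information cost of two-bit $\mathrm{AND}$ under a suitable distribution by $\Omega(1)$ and using a direct-sum decomposition of disjointness into $\Theta(\ell)$ coordinates; the fixed set-size constraint built into $D_\ell$ makes that direct sum slightly less clean, so for this exact distribution the rectangle argument is the more direct route.)
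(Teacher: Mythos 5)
This theorem is imported by the paper as a black box: the paper gives no proof and simply cites Razborov's distributional lower bound for disjointness under exactly this distribution (sets of size $\floor{\ell/4}$ in $[\ell]$, weight $3/4$ on disjoint pairs and $1/4$ on uniquely intersecting pairs). Your outline is the standard corruption argument for that result and is structurally correct: the averaging step to a deterministic protocol, the split of the error budget into $\sum_{R\in\mathcal{R}_{\mathsf d}} D_\ell^1(R)\le 4\epsilon$ and $\sum_{R\in\mathcal{R}_{\mathsf i}} D_\ell^0(R)\le \tfrac43\epsilon$, and the derivation of $c\ge\beta\ell-O(1)$ from the corruption inequality are all right. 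The one substantive piece you do not actually prove is the corruption lemma $D_\ell^1(R)\ge\alpha D_\ell^0(R)-2^{-\beta\ell}$, which is the entire content of Razborov's theorem; your coupling/merging sketch is the right intuition but would need the full row--column counting argument to be a proof. Since you explicitly flag this and propose to cite it, your proposal is on equal footing with the paper, which also relies on the citation.
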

 
 For a finite set $\cs$, let $\FFam_\cs=\set{f:\cs\mapsto\set{0,1}^*}$ be the family of all functions from $\cs$ to binary strings. Our reduction is stated in the following theorem. 
\begin{theorem}[from uniform-query key-agreement protocols to \\set-disjointness]\label{thm:mainUniformRed}
	Assume  exists an $\ell$-uniform-query  $(0, \alpha,\gamma)$-key agreement protocol relative to  $\FFam_\cs$, for some set $\cs$, of  communication complexity $c$. Then  there exists a protocol for solving set-disjointness over $D_\ell$ with $\epsilon$ error and  communication complexity $\frac{2^{15}\cdot \ell^4 \cdot \log{1/\epsilon}}{\size{\cs}^2(1-\alpha-\gamma)^4} \cdot c$.
\end{theorem}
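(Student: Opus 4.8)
The plan is to follow the reduction outlined in \cref{sec:intro:Technique:Uni}: from the hypothesized $\ell$-uniform-query $(0,\alpha,\gamma)$-key-agreement $\Pi=(\Ac,\Bc)$ relative to $\FFam_\cs$ of communication $c$, build an oracle-free protocol that solves set-disjointness over $D_\ell$ with error $\epsilon$. Write $g=1-\alpha-\gamma$ (if $g\le 0$ there is nothing to prove). The first step is to introduce two oracle-free simulations of $\Pi$. In $\Pcom$ the parties read their \emph{shared} randomness as a function $F\in\FFam_\cs$ and answer every query $q$ by $F(q)$; this reproduces the exact distribution of $(\trans(v),\out^\Ac(v),\out^\Bc(v))$ for $v\getsr\Pi^F$ with $F$ uniform, so $\Pcom$ has agreement $\ge 1-\alpha$, and — since a two-party eavesdropper sees $\trans$ but not the shared randomness — no function of $\trans(\Pcom)$ outputs $\out^\Ac$ with probability above $\gamma$. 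In $\Pdist$ the parties instead use their \emph{private} randomness as independent functions $F_\Ac,F_\Bc$. The key point is that $\Pdist$ is a plain (oracle-free) protocol, so by \cref{fact:noITKA} its agreement probability is at most the success probability of the canonical ``resample $\Ac$'s view consistent with $\trans$ and output what $\Ac$ would'' eavesdropper. Comparing $\Pcom$ and $\Pdist$ yields a dichotomy: either (\textbf{Agreement gap}) the agreement of $\Pdist$ is smaller than that of $\Pcom$ by $\Omega(g)$, or (\textbf{Secrecy gap}) some function of $\trans(\Pdist)$ predicts $\out^\Ac$ with probability $\Omega(g)$ larger than any such function does in $\Pcom$.

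The secrecy-gap case reduces to the agreement-gap case by the standard composition trick: in $\Pdisttag$ (resp.\ $\Pcom'$) let $\Bc$ run the transcript predictor and output its guess as its key; then $\Pdisttag$ and $\Pcom'$ have the same communication as $\Pi$ and an agreement gap of $\Omega(g)$, with the two simulations in the opposite order. So from now on it suffices to produce the disjointness protocol from a pair of protocols — one built with a shared oracle-simulation and one with an independent private one — whose agreement probabilities differ by $\Omega(g)$.

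The reduction itself then works as follows. On $D_\ell$-inputs $(X,Y)$ the parties use public randomness to draw a uniformly random injection $\sigma\colon[\ell]\hookrightarrow\cs$ together with a partition of the remaining points of $\cs$ into an ``Alice region'' and a ``Bob region''; Alice runs the private-oracle simulation with query set $\sigma(X)$ padded by uniform points of her region up to $\ell$ queries, and Bob symmetrically, so the two query sets meet in exactly $\sigma(X\cap Y)$. Conditioned on $X\cap Y=\emptyset$ the two private oracle-simulations never collide, so on these inputs the private simulation has the same agreement as $\Pi$ (equivalently, as the shared simulation); on a random $D_\ell^1$-input the single shared query is answered \emph{inconsistently} by $F_\Ac$ and $F_\Bc$, and it is precisely the agreement-gap hypothesis that forces this to pull the agreement down by a noticeable amount. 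The disjointness protocol runs this basic experiment many times, estimates the frequency of agreement via \cref{fact:Hoeffding}, and answers ``disjoint'' when the frequency lies above a threshold placed between the two agreement levels; taking $\Theta(\log(1/\epsilon)/\Delta^2)$ repetitions, where $\Delta$ is the per-experiment gap between the two input distributions, brings the error below $\epsilon$, and each repetition costs $c$ bits.

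The hard part — and what fixes the constant — is that, unlike in the overview, we cannot assume $\Pi$ has perfect agreement or that its agreement is constant across a fixed intersection size. Two issues must be handled: (i) the $\Omega(g)$ agreement gap of $\Pdist$, which is a statement about the \emph{uniform} query distribution (where the intersection of two random $\ell$-subsets already has typical size $\Theta(\ell^2/\size{\cs})$), must be \emph{localized} onto inputs whose intersection has size exactly one; since one planted intersection element is a weak signal against this background, the per-experiment gap $\Delta$ shrinks roughly by a factor $\size{\cs}/\ell^2$ relative to $g$; and (ii) the query sets produced by the embedding are only close to the uniform query distribution for which $\Pi$'s accuracy and secrecy are guaranteed, so $\sigma$ and the padding regions must be sampled carefully and a small statistical-distance penalty tracked. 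Accounting for the shrinkage of $\Delta$, the $1/\Delta^2$ blow-up in the number of repetitions from \cref{fact:Hoeffding}, and the $g$-dependence introduced by the dichotomy and the secrecy-to-agreement step, one arrives at the stated bound $\frac{2^{15}\cdot\ell^4\cdot\log(1/\epsilon)}{\size{\cs}^2(1-\alpha-\gamma)^4}\cdot c$. (Combined with \cref{thm:hardnessOfDn} this gives $c=\Omega(\ell)$ whenever $\size{\cs}=\Theta(\ell^2)$ and $g$ is bounded away from $0$, matching Merkle's puzzles.)
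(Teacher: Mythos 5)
Your overall architecture matches the paper's: the shared-randomness simulation $\Pcom$ versus the private-randomness simulation $\Pdist$, the agreement-gap/secrecy-gap dichotomy, the composition trick that converts a secrecy gap into an agreement gap, the observation that the two simulations coincide on disjoint query sets, and repetition plus Hoeffding with a threshold. The quantitative shrinkage factor you cite ($\Delta \approx g^2\size{\cs}/\ell^2$, hence $k=\Theta(\log(1/\epsilon)/\Delta^2)$ repetitions) is also the right one.

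However, there is a genuine gap in the step you yourself flag as ``the hard part,'' and your described embedding is incompatible with the fix. Your reduction maps a $D_\ell$ instance to query sets whose intersection has size exactly $0$ or exactly $1$, and asserts that the agreement-gap hypothesis ``forces'' a noticeable agreement difference between these two cases. That is false in general: the gap guaranteed by the hypothesis is an average over the uniform query distribution, where the intersection typically has size $\Theta(\ell^2/\size{\cs})$, and nothing prevents a protocol from behaving identically on intersection sizes $0$ and $1$ (e.g., a protocol in which the parties ignore the oracle unless they detect at least two collisions has $\mathrm{AccGap}(0)=\mathrm{AccGap}(1)=0$, with all of the gap carried by larger intersection sizes). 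The paper resolves this with two ingredients you omit: a Markov argument (\cref{claim:goodK}) locating a small $d$ with $\mathrm{AccGap}(d)\ge(1-\alpha-\gamma)/4$, followed by a telescoping argument using $\mathrm{AccGap}(0)=0$ to find a \emph{consecutive} pair $c-1,c\le d$ at which the gap jumps by at least $\mathrm{AccGap}(d)/d$; and, crucially, the disjointness protocol (\cref{proto:succ}) pads \emph{both} parties' inputs with the same $c-1$ dummy elements (plus disjoint fillers), so that disjoint inputs become intersection-$(c-1)$ instances and intersecting inputs become intersection-$c$ instances. Without the shared dummy padding, your protocol only probes the pair $(0,1)$ and fails whenever the jump occurs elsewhere. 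Two smaller points: the distinguisher must run whichever single simulation ($\Pcom$ or $\Pdist$) carries the jump (\cref{case:tilde} versus \cref{case:hat}), not the ``private'' one by default; and the paper sidesteps your statistical-distance penalty in item (ii) entirely by permuting all of $\cs$ with a shared random permutation applied to deterministically padded sets, which yields exactly the conditional distribution $(\setX,\setY)\big|_{\size{\setX\cap\setY}=c\text{ or }c-1}$, whereas your disjoint-regions padding does not.
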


Note that the above theorem holds also for protocols that are only secure against eavesdropper without access to the oracle. 
Combining \cref{thm:hardnessOfDn,thm:mainUniformRed} yields the following bound on the communication complexity of uniform-query  key-agreement  protocols.
\begin{theorem}[Main result for uniform-inputs protocols]\label{thm:mainUniform}
	For any  $\ell$-uniform-query $(q, \alpha,\gamma)$-key agreement protocol  $\Pi$  relative to $\FFam_\cs$, it holds that   $\CC(\Pi) \in \Omega((1-\alpha-\gamma)^4q^2/\ell^3)$.
	\end{theorem}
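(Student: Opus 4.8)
The plan is to obtain the statement as a corollary of the reduction \cref{thm:mainUniformRed} together with the set-disjointness communication lower bound \cref{thm:hardnessOfDn}. The one point that needs care is that \cref{thm:mainUniformRed} bounds $\CC(\Pi)$ in terms of the size of the query universe $\cs$, whereas the target bound is phrased in terms of the adversary's query budget $q$; I would bridge the two by observing that $(q,\gamma)$-secrecy forces $\size{\cs}>q$. Throughout I assume $1-\alpha-\gamma>0$ (otherwise there is nothing to prove), and I note that a $(q,\alpha,\gamma)$-key-agreement protocol is in particular a $(0,\alpha,\gamma)$-key-agreement protocol (security against a $q$-query eavesdropper implies security against a $0$-query one), so \cref{thm:mainUniformRed} applies to $\Pi$.

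First I would prove that $\size{\cs}>q$. Suppose instead that $\size{\cs}\le q$, and let $\Ec$ be the eavesdropper that, given the transcript $\trans$, first queries every element of $\cs$ --- at most $q$ queries, which reveals $f$ in full --- and then outputs the most likely value of $\out^\Ac$ conditioned on $(\trans,f)$, using no further queries. With $f$ fixed, the party-randomness pairs consistent with a given transcript $\trans$ form a combinatorial rectangle, so the two parties' views, and hence their outputs, are independent conditioned on $(\trans,f)$; therefore $\pr{\out^\Ac=\out^\Bc\mid\trans,f}\le\max_k\pr{\out^\Ac=k\mid\trans,f}$. Averaging over $(\trans,f)$ and using $(1-\alpha)$-accuracy, $\Ec$ guesses $\out^\Ac$ with probability at least $1-\alpha>\gamma$, contradicting $(q,\gamma)$-secrecy. (This is exactly the argument behind \cref{fact:noITKA}, run once the oracle has been revealed.) Hence $\size{\cs}>q$.

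Then I would fix $\epsilon>0$ to be the absolute constant supplied by \cref{thm:hardnessOfDn} and apply \cref{thm:mainUniformRed} to $\Pi$ (viewed as a $(0,\alpha,\gamma)$-key-agreement protocol relative to $\FFam_\cs$ of communication complexity $\CC(\Pi)$) with this $\epsilon$, obtaining a protocol solving set-disjointness over $D_\ell$ with error $\epsilon$ and communication complexity $\frac{2^{15}\cdot\ell^4\cdot\log(1/\epsilon)}{\size{\cs}^2(1-\alpha-\gamma)^4}\cdot\CC(\Pi)$. By \cref{thm:hardnessOfDn} this quantity is $\Omega(\ell)$; rearranging, and absorbing the absolute constants $2^{15}$ and $\log(1/\epsilon)$ into the $\Omega$, gives $\CC(\Pi)=\Omega\!\left(\size{\cs}^2(1-\alpha-\gamma)^4/\ell^3\right)$, and plugging in $\size{\cs}>q$ from the previous step yields $\CC(\Pi)\in\Omega\bigl((1-\alpha-\gamma)^4q^2/\ell^3\bigr)$, as claimed. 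I expect essentially no obstacle in this derivation itself: the genuine work sits inside \cref{thm:mainUniformRed} (the reduction to set-disjointness, which must cope with imperfect agreement and with the fact that the agreement probability conditioned on a fixed intersection size need not be bounded away from $1$), and that theorem is taken as given here; at this stage the only real idea is the $\size{\cs}>q$ step, and everything else is parameter bookkeeping.
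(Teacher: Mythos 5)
Your proposal is correct and follows essentially the same route as the paper: combine \cref{thm:mainUniformRed} with \cref{thm:hardnessOfDn} to get $\CC(\Pi)=\Omega((1-\alpha-\gamma)^4\size{\cs}^2/\ell^3)$, then observe that $(q,\gamma)$-secrecy forces $q<\size{\cs}$ via the query-everything eavesdropper of \cref{fact:noITKA}. Your write-up merely spells out the $q<\size{\cs}$ step in more detail than the paper does; there is no substantive difference.
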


\begin{proof}
By \cref{thm:hardnessOfDn,thm:mainUniformRed}, protocol  $\Pi$ has communication complexity $\Omega((1-\alpha-\gamma)^4\size{\cs}^2/\ell^3)$. 
By \cref{fact:noITKA}, an eavesdropper that queries all the elements in $\cs$  can guess the key with probability $1-\alpha$. Since \wlg $1-\alpha > \gamma$,  it must hold that $q < \size{\cs}$.  Hence,  $\CC(\Pi) \in \Omega((1-\alpha-\gamma)^4q^2/\ell^3)$.
\end{proof}

The rest of this section is devoted for proving \cref{thm:mainUniformRed}. Assume there exists an $\ell$-uniform-query $(0, \alpha,\gamma)$-key-agreement protocol $\Pi = (\Ac,\Bc)$ relative to the function family $\FFam_\cs$. We  use   $\Pi$ to create a (no-oracle) protocol of about the same communication complexity that finds out the intersection size of parties inputs. We    complete the proof showing that  the latter protocol can be used to solve set-disjointness over the hard distribution $D_\ell$.


Protocol $\Pcom$ below emulates protocol $\Pi$ relative to the family $\FFam_\cs$, in the communication complexity model (where no oracle is given).  The  parties of $\Pcom$  emulate of the random oracle using their shared public randomness interpreted  as (description of  a) function from the function family.

\begin{protocol}[$\Pcom = (\Acom,\Bcom)$]\label{proto:com}~
	\item[$\Acom$'s input:]  an $\ell$-element  set $\setx \subseteq \cs$.  
	\item[$\Bcom$'s input:] an $\ell$-element  set $\sety \subseteq \cs$.  
	\item [Public randomness:] (description of a)  function $f \in \FFam_\cs$.
	\item[Operation:] ~

	$\Acom$ and $\Bcom$ interact in an execution  $(\Ac(\setx,f(\setx)),\Bc(\sety,f(\sety)))$  of $\Pi$, taking the roles of $\Ac$ and $\Bc$ respectively: $\Acom$ acts as $\Ac$ with queries $\setx$ and answers $f(\setx)$, and $\Bcom$ as $\Bc$ with queries $\sety$ and answers $f(\sety)$. At the end of the interaction, $\Acom$ and $\Bcom$ output the outputs of $\Ac$ and $\Bc$ respectively.

\end{protocol} 

We  compare the above protocol to a protocol  that emulates a  run of $\Pi$ \emph{without} using the shared oracle; each party sets  the answers of the oracle using its \emph{private} randomness, and acts accordingly. 

\paragraph{The private-oracle emulation.} In this protocol, each party sample a random function using private randomness. The parties then interact according to $\Pcom$, while treating the private function as the shared oracle.

\begin{protocol}[$\Pdist = (\Adist,\Bdist)$]\label{proto:piI}~
		
		\item[$\Acom$'s input:]  an $\ell$-element  set $\setx \subseteq \cs$.  
		\item[$\Bcom$'s input:] an $\ell$-element  set $\sety \subseteq \cs$.  
			\item [Public randomness:] none.
			 
		\item[Operation:] ~
		\begin{enumerate}
			\item $\Adist$ samples   $g\getsr \FFam_\cs$.	 
			\item $\Bdist$ samples $f\getsr \FFam_\cs$.		
			\item $\Adist$ and $\Bdist$ interact in protocol $(\Ac(\setx,g(\setx)),\Bc(\sety,f(\sety)))$ taking the roles of $\Ac$ and $\Bc$ respectively: $\Adist$ acts as $\Ac$ with queries $\setx$ and answers $g(\setx)$, and $\Bdist$ as $\Bc$ with queries $\sety$ and answers $f(\sety)$. At the end of the interaction, $\Adist$ and $\Bdist$ output the outputs of $\Ac$ and $\Bc$ respectively.
		\end{enumerate}
\end{protocol}

 Let $(\setX,\setY)$ be distributed as the queries of  parties $\Ac$ and $\Bc$ respectively in $\Pi$ (that is, uniform sets in $\cs$ of size $\ell$), and recall that $\Pdist(\setX,\setY)$ and $\Pcom(\setX,\setY)$ denote the parties' joint view  in a random execution of  $\Pdist$ and $\Pcom$ respectively, with inputs drawn from $(\setX,\setY)$.  We first  show that $\Pdist(\setX,\setY)$ is far  from $\Pcom(\setX,\setY)$. Indeed, since $\Pdist$  is a no-oracle protocol (and has no common randomness),  \cref{fact:noITKA} yields that there is an algorithm $\Ec$ such that
\begin{align}
\ppr{v\getsr \Pdist(\setX,\setY)}{\Ec(\trans(v))=\out^{\Adist}(v)}=\ppr{v\getsr \Pdist(\setX,\setY)}{\out^{\Bdist}(v)=\out^{\Adist}(v)}
\end{align}

In contrast, since $\Pcom$ is an emulation of the protocol $\Pi$ with a random oracle, the secrecy of $\Pi$ and the fact that $\Ec$ sees not the common randomness, yields that

\begin{align}
\ppr{v\getsr \Pcom(\setX,\setY)}{\Ec(\trans(v))=\out^{\Acom}(v)}=\ppr{f \getsr F, v\getsr \Pi^f}{\Ec^f(\trans(v))=\out^{\Ac}(v)}\leq \gamma
\end{align}

Finally, since the joint distribution of the outputs of the parties in $\Pcom$ is exactly as in $\Pi$, it holds that
 
 \begin{align}
 \ppr{v\getsr \Pcom(\setX,\setY)}{\out^{\Bcom}(v)=\out^{\Acom}(v)}= \ppr{f \getsr F, v\getsr \Pi^f}{\out^\Ac(v)=\out^\Bc(v)}\geq 1-\alpha
 \end{align}

  It follows that at least one of the two equations  below holds:

\begin{align}\label{eq:accGap}
&\mbox{Agreement gap: }\\\nonumber
&\ppr{v\getsr \Pcom(\setX,\setY)}{\out^{\Bcom}(v)=\out^{\Acom}(v)}-\ppr{v\getsr \Pdist(\setX,\setY)}{\out^{\Bdist}(v)=\out^{\Adist}(v)}\\\nonumber
&\geq (1-\alpha-\gamma)/2
\end{align}

\begin{align}\label{eq:SecGap}
&\mbox{Secrecy gap: }\\\nonumber
& \ppr{v\getsr \Pdist(\setX,\setY)}{\Ec(\trans(v))=\out^{\Adist}(v)}-\ppr{v\getsr \Pcom(\setX,\setY)}{\Ec(\trans(v))=\out^{\Ac}(v)}\\\nonumber
&\geq (1-\alpha-\gamma)/2
\end{align}

Namely,  wither   $\Pcom$ is  significantly  more accurate than  $\Pdist$,  or $\Pcom$ is significantly more secure than protocol $\Pdist$ (or both). We claim that \wlg one can assume that   \cref{eq:accGap} holds (\ie there is agreement gap). Assuming otherwise (\ie \cref{eq:SecGap} holds),  we build a new protocol with inaccurate no-oracle emulation, and then continue the proof assuming \cref{eq:accGap} holds. 

 Consider  protocols $\Pitag=(\Atag,\Btag)$ and $\Pdisttag = (\Adisttag,\Bdisttag)$, in which the parties interact according to $\Pcom$ and $\Pdist$ respectively, but parties $\Btag$ and $\Bdisttag$ output $\neg\Ec(\trans)$.  By the secrecy gap assumption,

\begin{align}
&\ppr{v\getsr  \Pdist(\setX,\setY)}{\Ec(\trans(v))=\out^{\Adist}(v)}-\ppr{v\getsr \Pcom(\setX,\setY)}{\Ec(\trans(v))=\out^{\Acom}(v)}\\\nonumber
&\geq (1-\alpha-\gamma)/2
\end{align}
Hence,
\begin{align*}
\lefteqn{\ppr{v\getsr \Pitag(\setX,\setY)}{\out^{\Btag}(v)=\out^{\Atag}(v)} -\ppr{v\getsr \Pdisttag(\setX,\setY)}{\out^{\Bdisttag}(v)=\out^{\Adisttag}(v)}}\\
&\begin{aligned}
=&(1-\ppr{v\getsr \Pcom(\setX,\setY)}{\Ec(\trans(v))=\out^{\Bcom}(v)})\\
&-(1-\ppr{v\getsr  \Pdist(\setX,\setY)}{\Ec(\trans(v))=\out^{\Adist}(v)})
\end{aligned}\\
&=\ppr{v\getsr \Pdist(\setX,\setY)}{\Ec(\trans(v))=\out^{\Adist}(v)}-\ppr{v\getsr \Pcom(\setX,\setY)}{\Ec(\trans(v))=\out^{\Acom}(v)}\\&\geq (1-\alpha-\gamma)/2.
\end{align*}
That is,  protocol $\Pdisttag$ is less accurate than $\Pitag$ by $(1-\alpha-\gamma)/2$. Namely,  we are exactly in the same situation as if \cref{eq:accGap} holds, but \wrt protocols  $\Pdisttag$ and  $\Pitag$.  From hereafter,  we assume for concreteness that  \cref{eq:accGap} holds \wrt the original protocols $\Pcom$ and $\Pdist$.

\subsection{From Agreement Gap to Set Disjointness}\label{subsec:InAccurate}
Since, by assumption, $\Pdist$ is less accurate than $\Pcom$ in  (\ie \cref{eq:accGap} holds), it is less accurate  for some specific intersection size; when  the parties have  \emph{no} common query, $\Pdist$ behaves just like $\Pcom$, and thus  $\Pdist$ is  (perfectly)  accurate in this case. We exploit this observation to show that the accuracy difference between  the protocols    enables us to  distinguish between disjoint inputs and intersecting inputs, yielding a protocol that solves  set intersection over certain distributions.



For $z\in\set{\Com,\Dist}$  and a joint view $v = (\setx,r_\Ac,\sety,r_\Bc,r_\p)\in\Supp(\Pz)$, let $\x(v)=\setx$ and $\y(v)=\sety$.  For $i \in [\ell]$, let $\Acc{z}{i}$ be the accuracy of $\Lambda_z$ on inputs with intersection size $i$. Namely,
\begin{align*}
\Acc{z}{i} \eqdef \ppr{v\getsr \Pz(\setX,\setY)}{\out^{\Bz}(v)=\out^{\Az}(v)\mid \size{\x(v)\cap\y(v)}=i}.
\end{align*}
Let $\Suc{i}$ be  the accuracy advantage of $\Pcom$ over $\Pdist$ on inputs with intersection size $i$. That is, 
$$ \Suc{i} \eqdef \Acc{\Com}{i}-\Acc{\Dist}{i}$$

A key observation is  that for  some  intersection size, protocol  $\Pcom$ is more accurate than $\Pdist$.  

\begin{claim}\label{claim:goodK}
	$\exists d < \frac{4\ell^2}{\size{\cs}(1-\alpha-\gamma)}$ such that $\Suc{d}\geq (1-\alpha-\gamma)/4$.
\end{claim}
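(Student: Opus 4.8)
The plan is to extract the claim from the agreement gap~\cref{eq:accGap} by a short first-moment argument on the intersection size. First I would decompose the accuracy of each of $\Pcom, \Pdist$ by conditioning on the intersection size: writing $p_i = \ppr{(\setX,\setY)}{\size{\setX\cap\setY}=i}$ for $i \in \set{0,\dots,\ell}$, the law of total probability gives $\ppr{v\getsr \Pz(\setX,\setY)}{\out^{\Bz}(v)=\out^{\Az}(v)} = \sum_i p_i \cdot \Acc{z}{i}$ for $z\in\set{\Com,\Dist}$ (with the convention that the $i$-th summand is $0$ when $p_i=0$; recall $\x(v),\y(v)$ are distributed as $\setX,\setY$). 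Subtracting the two identities and plugging in~\cref{eq:accGap} yields
\[
	\sum_{i=0}^{\ell} p_i \cdot \Suc{i} \;\geq\; (1-\alpha-\gamma)/2 .
\]

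Next I would show that almost all of this weight sits on \emph{small} intersection sizes. Since $\setX$ and $\setY$ are independent uniform $\ell$-subsets of $\cs$, linearity of expectation gives $\Exp\left[\size{\setX\cap\setY}\right] = \size{\cs}\cdot(\ell/\size{\cs})^2 = \ell^2/\size{\cs}$, so Markov's inequality with the threshold $T := \frac{4\ell^2}{\size{\cs}(1-\alpha-\gamma)}$ gives $\ppr{}{\size{\setX\cap\setY}\geq T}\leq \frac{\ell^2/\size{\cs}}{T} = (1-\alpha-\gamma)/4$.

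Finally I would split the sum at $T$. For the terms with $i \geq T$ I use the crude bound $\Suc{i} \leq 1$ (it is a difference of two probabilities), so $\sum_{i\geq T} p_i \Suc{i} \leq \ppr{}{\size{\setX\cap\setY}\geq T} \leq (1-\alpha-\gamma)/4$, and therefore
\[
	\sum_{i<T} p_i \cdot \Suc{i} \;=\; \sum_i p_i\Suc{i} - \sum_{i\geq T} p_i\Suc{i} \;\geq\; \frac{1-\alpha-\gamma}{2} - \frac{1-\alpha-\gamma}{4} \;=\; \frac{1-\alpha-\gamma}{4} .
\]
Since $\sum_{i<T} p_i \leq 1$, an averaging argument then produces some index $d < T$ with $p_d > 0$ and $\Suc{d} \geq (1-\alpha-\gamma)/4$, which is exactly the claim (indeed if $\Suc{i} < (1-\alpha-\gamma)/4$ for every $i<T$ with $p_i>0$, the displayed sum would be $< (1-\alpha-\gamma)/4 \cdot \sum_{i<T}p_i \leq (1-\alpha-\gamma)/4$, a contradiction).

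I do not anticipate a serious obstacle here; this is a short counting argument once~\cref{eq:accGap} is in hand. The only points needing care are: (i) the sign of $\Suc{i}$, which may be negative for some $i$, so for the tail one must bound it \emph{above} by $1$ (not by $0$) and keep the direction of the split inequality correct; and (ii) legitimacy of the averaging step, i.e.\ $\sum_{i<T}p_i > 0$, which is immediate since $\sum_{i\geq T}p_i \leq (1-\alpha-\gamma)/4 < 1$. It is worth noting that this is exactly where the \emph{uniform-query} hypothesis enters quantitatively, via the clean value $\ell^2/\size{\cs}$ for the expected intersection size.
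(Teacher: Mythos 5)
Your proposal is correct and follows essentially the same route as the paper's proof: both convert the agreement gap of \cref{eq:accGap} into $\E\left[\Suc{\size{\setX\cap\setY}}\right]\geq(1-\alpha-\gamma)/2$, apply Markov's inequality to the expected intersection size $\ell^2/\size{\cs}$ with the same threshold, bound $\Suc{i}\leq 1$ on the tail, and average. No gaps; the two points of care you flag (sign of $\Suc{i}$ on the tail and non-vacuity of the averaging) are handled the same way in the paper.
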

The proof for this claim appears in \cref{sec:Appendix}.

In contrast  to the above claim, if the inputs are \emph{disjoint}  then  there is no agreement gap. That is, we have the following fact.
\begin{claim}\label{eq:goodK}
$\Suc{0} =0$.
\end{claim}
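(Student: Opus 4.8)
The plan is to show that conditioned on the parties' inputs being disjoint, the protocols $\Pdist$ and $\Pcom$ induce exactly the same distribution over transcripts and outputs; since $\Suc{0} = \Acc{\Com}{0} - \Acc{\Dist}{0}$ and both accuracies are probabilities of the same event ($\out^{\Bz}(v) = \out^{\Az}(v)$) under these conditional distributions, this immediately gives $\Suc{0} = 0$.

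The heart of the argument is a distributional identity. Both $\Pcom$ and $\Pdist$ are just emulations of $\Pi$ in which the only difference lies in how the oracle answers to the parties' queries are generated: in $\Pcom$ both parties see $f$ evaluated on their query sets for a single $f \getsr \FFam_\cs$ (drawn as public randomness), whereas in $\Pdist$ party $\Adist$ sees $g(\x(v))$ and party $\Bdist$ sees $f(\y(v))$ for \emph{independent} $g, f \getsr \FFam_\cs$. I would first observe that the joint view of each party — hence the transcript and both outputs — is a deterministic function of $(\x(v), r_\Ac, \text{Alice's answers}, \y(v), r_\Bc, \text{Bob's answers})$, where the inputs and private coins are sampled identically in both protocols. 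It therefore suffices to show that, conditioned on $\x(v) \cap \y(v) = \emptyset$, the pair (Alice's answers, Bob's answers) has the same joint distribution in both protocols. In $\Pdist$ this pair is $(g(\setx), f(\sety))$ with $g \bot f$, which is trivially a pair of independent random variables with the stated marginals. In $\Pcom$ this pair is $(f(\setx), f(\sety))$; since a uniformly random function assigns independent values to distinct inputs, and $\setx \cap \sety = \emptyset$, the restrictions $f|_{\setx}$ and $f|_{\sety}$ are independent, so $(f(\setx), f(\sety))$ also consists of two independent random variables with exactly the same marginals (namely $f|_{\setx}$ and $f|_{\sety}$, which are distributed as $g|_{\setx}$ and $f|_{\sety}$ respectively). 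Hence the two joint distributions coincide.

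Putting this together: conditioned on $\size{\x(v) \cap \y(v)} = 0$, the distribution of $(\trans(v), \out^{\Acom}(v), \out^{\Bcom}(v))$ under $\Pcom(\setX, \setY)$ equals that of $(\trans(v), \out^{\Adist}(v), \out^{\Bdist}(v))$ under $\Pdist(\setX, \setY)$, after marginalizing out the (irrelevant-given-the-answers) public randomness in $\Pcom$. In particular $\Acc{\Com}{0} = \Pr[\out^{\Bcom}(v) = \out^{\Acom}(v) \mid \size{\x(v)\cap\y(v)}=0] = \Pr[\out^{\Bdist}(v) = \out^{\Adist}(v) \mid \size{\x(v)\cap\y(v)}=0] = \Acc{\Dist}{0}$, so $\Suc{0} = \Acc{\Com}{0} - \Acc{\Dist}{0} = 0$.

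I do not expect a serious obstacle here; the only point requiring a sentence of care is the claim that a random function takes independent values on a disjoint pair of sets (and the matching of marginals), which is what makes the private-oracle emulation indistinguishable from the shared-oracle emulation precisely in the disjoint case. One should also note that $\Pdist$ has no public randomness, so the comparison is between $\Pdist$'s joint view and $\Pcom$'s joint view with its public coin integrated out; this is harmless since, as argued, the outputs and transcript depend on the public coin $f$ only through $(f(\setx), f(\sety))$.
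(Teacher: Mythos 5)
Your proof is correct and follows essentially the same route as the paper's: the key observation in both is that for disjoint $\setx,\sety$ the pair $(\rv{F}(\setx),\rv{F}(\sety))$ for a single random $\rv{F}$ is distributed identically to $(\rv{G}(\setx),\rv{F}(\sety))$ for independent $\rv{G},\rv{F}$, so $\Pdist$ and $\Pcom$ induce the same joint view conditioned on an empty intersection. Your write-up just spells out the marginal/independence matching and the determinism of the transcript and outputs given the answers, which the paper leaves implicit.
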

  \begin{proof}
  It is clear  that  for $(\rv{F},\rv{G})\getsr\FFam_\cs^2$ and pair of sets  $\setx\subseteq \cs,\sety \subseteq \cs$ with $\setx\cap\sety = \emptyset$, the distributions of  $(\setx,\sety,\rv{F}(\setx),\rv{F}(\sety))$ and of $(\setx,\sety,\rv{F}(\setx),\rv{G}(\sety))$ are the same. It follows that the distribution $\Pdist|_{\x\cap\y = \emptyset}$ is identical to that of  $\Pcom|_{\x\cap\y = \emptyset}$, meaning that the protocols act the same.	
  \end{proof}

Combining  \cref{claim:goodK} and \cref{eq:goodK} yields   there exists some constant $0< c\leq d$ such that 
\begin{align}
\Suc{c}-\Suc{c-1} \geq \Suc{d}/d \geq \frac{(1-\alpha-\gamma)^2 \cdot \size{\cs}}{16\ell^2}
\end{align}

Hence,
\begin{align}
\frac{(1-\alpha-\gamma)^2\size{\cs}}{16\ell^2}&\leq  \Suc{c}-\Suc{c-1}  \\
&\begin{aligned}
=&\left(\Acc{\Com}{c}-\Acc{\Dist}{c}\right)-\left(\Acc{\Com}{c-1}-\Acc{\Dist}{c-1}\right)
\end{aligned}\nonumber\\
&\begin{aligned}
=\Acc{\Com}{c}-\Acc{\Com}{c-1}+\Acc{\Dist}{c-1}-\Acc{\Dist}{c}
\end{aligned}.\nonumber
\end{align}
Therefore, either

	 \begin{align}\label{case:tilde}\\\nonumber
	&\Acc{\Com}{c}-\Acc{\Com}{c-1}\geq \frac{(1-\alpha-\gamma)^2\size{\cs}}{32\ell^2},
	\end{align}
	or
 \begin{align}\label{case:hat}\\\nonumber
	&\Acc{\Dist}{c-1}-\Acc{\Dist}{c}\geq \frac{(1-\alpha-\gamma)^2\size{\cs}}{32\ell^2}.
	\end{align}

Namely,   at least, one of  protocols $\Pcom$ and $\Pdist$ can be used to distinguish between input of intersection of size $c$ and input of $c-1$ with good  probability. We conclude the proof showing how to use this  ability to solve set-disjointness on the hard distribution $D_\ell$.

\paragraph{The set intersection protocol.}

In the following we assume for concreteness that \cref{case:tilde} holds, where the proof assuming \cref{case:hat} holds  follows analogously by replacing $\Pcom$ with $\Pdist$. Consider the  following protocol for solving set intersection (in the standard communication complexity model). For simplicity, we assume that $\ell$ is a multiple of $4$, and that $\cs=\set{1,\dots,\size{\cs}}$.

\begin{protocol}[$\Piline = (\Aline,\Bline)$]\label{proto:succ}~  
		\item[Parameter:] $k \in N$.
		
		\item[$\Aline$'s input:]  an $\ell/4$-element set  $\setx \subseteq [\ell]$.  
		
		\item[$\Bline$'s input:]  an $\ell/4$-element set    $\sety \subseteq [\ell]$.  
		
		\item [Public randomness:] (description of)  $k$ permutations $\sigma_1,...,\sigma_n$ over $\cs$.
		
		\item[Operation:] ~
		\begin{enumerate}
			\item $\Aline$ sets $\setx'= \setx \cup \set{\ell+1,\ell+2,...,\ell+c-1}\cup \set{2\ell,2\ell+1,...,3\ell-\ell/4-c+1}$ and  $\Bline$ sets $\sety'= \sety \cup \set{\ell+1,\ell+2,...,\ell+c-1}\cup \set{3\ell,3\ell+1,...,4\ell-\ell/4-c+1}$. \label{step:succ:1}
			
			\item $\Aline$ sets $\varcounter=0$.  
			
			\item For $\varj=1$ to $k$: 
				\begin{enumerate}
					\item $\Aline$ and $\Bline$ interact in random execution of   $(\Acom(\sigma_\varj(\setx')),\Bcom(\sigma_\varj(\sety')))$, with fresh randomness, taking the roles of $\Acom$ and $\Bcom$ respectively. Let $\out^{\Acom}$ and $\out^{\Bcom}$ be the parties outputs in the execution.
					\item $\Bline$ sends $\out^{\Bcom}$ to $\Aline$. 
					\item If $\out^{\Acom}=\out^{\Bcom}$, $\Aline$ increases $\varcounter$ by one.
			\end{enumerate}
		\item $\Aline$ informs $\Bline$ whether $\varcounter/k> (\Acc{\Com}{c}+\Acc{\Com}{c-1})/2$. If positive, both parties output zero; otherwise, they output one.
	
		\end{enumerate}
		
\end{protocol}

In the following we analyze the success probability and communication complexity of protocol $\Piline$ for   $k = k^\ast \eqdef \frac{2^{13}\ell^4\log{1/\epsilon}}{\size{\cs}^2(1-\alpha-\gamma)^4}$.

\paragraph{Success probability of $\Piline$.}

We  show that for $k = k^\ast$ it holds that
\begin{align}\label{eq:succ} 
\ppr{\substack{(\setx,\sety) \getsr D_\ell \\r_\Ac \getsr \zs\\ r_\Bc \getsr \zs\\ r_p \getsr \zs}}{(\Aline(\setx;r_\Ac),\Bline(\sety;r_\Bc))(r_\p) = (\setx\cap\sety=\emptyset, \setx\cap\sety=\emptyset)} \ge 1- \eps
\end{align}

We prove that \cref{eq:succ} holds for any fixed  $(\setx,\sety)\in \Supp(D_\ell)$. Fix such a pair $(\setx,\sety)$, and assume \wlg that $\size{\cs}>3\ell/(1-\alpha-\gamma)$ (as otherwise the proof of \cref{thm:mainUniformRed} is immediate).
By this assumption,  it holds that  $c \leq d \leq 3/4\ell$. By construction,  the sets $\setx'$ and $\sety'$ set by the parties in \stepref{step:succ:1} of the protocol, are both  of size $\ell$. Since, by definition,  $(\setx,\sety)$ have at most one shared element, it holds that 
\begin{align}
\size{\setx'\cap\sety'} =
\begin{cases}
c & \setx \cap   \sety \neq  \emptyset\\
c-1,  & \text{otherwise.}\\ 
\end{cases}
\end{align}
It follows that if $\size{\setx'\cap\sety'}=c$ and  $\varcounter/k > (\Acc{\Com}{c}+\Acc{\Com}{c-1})/2$,
then  the protocol  outputs the right answer. Similarly,  this is the case if $\size{\setx'\cap\sety'}=c-1$ and  
$\varcounter/k < (\Acc{\Com}{c}+\Acc{\Com}{c-1})/2$.  Given these observations concerning  the protocol correctness, we conclude the proof by  bounding  the probability that $\varcounter/k$ is far from $\Acc{\Com}{\size{\setx\cap\sety}}$.

\begin{claim}\label{claim:succ} 
Let $\rvcounter$ be the value of $\varcounter$  in a random execution of $\Piline$ on inputs $(\setx,\sety)$. Then for every $\epsilon >0$, $\delta >0$ and $k = \ceil{\log(1/\epsilon)/2\delta^2}$, it holds that

 $\pr{\rvcounter/k-\Acc{\Com}{\size{\setx'\cap\sety'}} > \delta} < \epsilon$ and 
 
  $\pr{\Acc{\Com}{\size{\setx'\cap\sety'}-\rvcounter/k} > \delta} < \epsilon$. 
\end{claim}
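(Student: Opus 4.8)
The plan is to write $\rvcounter$ as a sum of $k$ independent and identically distributed Bernoulli random variables whose common mean is exactly $\Acc{\Com}{\size{\setx'\cap\sety'}}$, and then to apply Hoeffding's inequality (\cref{fact:Hoeffding}) once in each direction.

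Concretely, for $\varj\in[k]$ let $\rv{A}_\varj$ be the indicator of the event $\out^{\Acom}=\out^{\Bcom}$ in the $\varj$-th iteration of the loop of \cref{proto:succ}, so that $\rvcounter=\sum_{\varj=1}^{k}\rv{A}_\varj$. First I would note that the $\rv{A}_\varj$ are i.i.d.: the $\varj$-th iteration uses the permutation $\sigma_\varj$ together with an independent fresh copy of the internal randomness of $\Pcom$ (including its function $f$), and the $\sigma_\varj$ are independent uniformly random permutations of $\cs$. Then I would compute the common mean. Abbreviate $i\eqdef\size{\setx'\cap\sety'}$; by \stepref{step:succ:1}, and since $\size{\setx\cap\sety}\le1$ on $\Supp(D_\ell)$, the sets $\setx',\sety'\subseteq\cs$ are fixed of size $\ell$ with intersection of size $i$ (equal to $c$ when $\setx\cap\sety\ne\emptyset$ and to $c-1$ otherwise). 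The key point is that for a uniformly random permutation $\sigma$ of $\cs$, the pair $(\sigma(\setx'),\sigma(\sety'))$ is distributed uniformly over all pairs of size-$\ell$ subsets of $\cs$ with intersection of size exactly $i$. This holds because the symmetric group on $\cs$ acts transitively on such pairs --- any two of them induce partitions of $\cs$ into the four blocks $\setx'\cap\sety'$, $\setx'\setminus\sety'$, $\sety'\setminus\setx'$, $\cs\setminus(\setx'\cup\sety')$ with the same block sizes, hence some permutation carries one partition to the other --- and a transitive group action pushes the uniform measure on the group to the uniform measure on the orbit. But this uniform distribution is precisely the distribution of $(\setX,\setY)$ conditioned on $\size{\setX\cap\setY}=i$ used in the definition of $\Acc{\Com}{\cdot}$, since $\setX,\setY$ there are independent uniform size-$\ell$ subsets of $\cs$. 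Therefore $\Ex[\rv{A}_\varj]=\Acc{\Com}{i}$ for every $\varj$.

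It then remains to set $\widehat{\rv{A}}\eqdef\rvcounter/k=\tfrac1k\sum_{\varj=1}^k\rv{A}_\varj\in[0,1]$, which has $\Ex[\widehat{\rv{A}}]=\Acc{\Com}{i}$, and, using $k=\ceil{\log(1/\epsilon)/2\delta^2}$ so that $2k\delta^2\ge\log(1/\epsilon)$, to invoke \cref{fact:Hoeffding} for $\rv{A}_1,\dots,\rv{A}_k$:
\begin{equation*}
	\pr{\rvcounter/k-\Acc{\Com}{\size{\setx'\cap\sety'}}\ge\delta}=\pr{\widehat{\rv{A}}-\Ex[\widehat{\rv{A}}]\ge\delta}\le e^{-2k\delta^2}\le\epsilon,
\end{equation*}
and for the complementary variables $1-\rv{A}_1,\dots,1-\rv{A}_k\in[0,1]$, whose average is $1-\widehat{\rv{A}}$ with mean $1-\Acc{\Com}{i}$:
\begin{equation*}
	\pr{\Acc{\Com}{\size{\setx'\cap\sety'}}-\rvcounter/k\ge\delta}=\pr{(1-\widehat{\rv{A}})-\Ex[1-\widehat{\rv{A}}]\ge\delta}\le e^{-2k\delta^2}\le\epsilon.
\end{equation*}
(If one insists on the strict inequality $<\epsilon$ as written, taking $k$ one larger suffices, the Hoeffding bound not being tight here.)

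The only step that is not purely mechanical is this mean computation --- verifying that each loop iteration produces an unbiased sample of the agreement event of $\Pcom$ at intersection size $\size{\setx'\cap\sety'}$ --- and within it the transitivity argument identifying the law of the permuted inputs with the conditional distribution behind $\Acc{\Com}{\cdot}$; everything after that is a direct two-fold application of Hoeffding's inequality. I expect the fiddly points to be bookkeeping which intersection size ($c$ or $c-1$) is in play, and confirming that the four-block partition really determines the orbit of the symmetric-group action.
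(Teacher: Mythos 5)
Your proposal is correct and follows essentially the same route as the paper: identify each loop iteration's agreement indicator as an i.i.d.\ Bernoulli variable with mean $\Acc{\Com}{\size{\setx'\cap\sety'}}$ (because the randomly permuted inputs are distributed exactly as $(\setX,\setY)$ conditioned on that intersection size), then apply Hoeffding's inequality. Your transitivity-of-the-symmetric-group-action argument just spells out in detail what the paper asserts in one line, and the two-sided application of Hoeffding matches the paper's (terser) conclusion.
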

\begin{proof}
	
	Since the parties randomly permute their inputs,  for every $j\in [k]$ it holds that $\sigma_j(\setx')$ and $\sigma_j(\sety')$ are random sets  drawn (independently of other iteration) from the distribution $(\setX,\setY)|_{\size{\setX\cap\setY}=\size{\setx'\cap\sety'}}$. Therefore, the probability of the parties to have the same output in each run of $\Pi$ is exactly $\Acc{\Com}{\size{\setx\cap\sety}}$. The  stated bound thus follows by by Hoffeding inequality (\cref{fact:Hoeffding}).
\end{proof}
 Let  $\delta = \frac{(1-\alpha-\gamma)^2\size{\cs}}{2^7\ell^2}$. By \cref{case:tilde}, it holds that \\$\delta < (\Acc{\Com}{c}-\Acc{\Com}{c-1})/2$. Hence,  \cref{claim:succ} yields that protocol $\Piline$  error probability on the input pair $(\setx,\sety)$ for  parameter  $k = k^\ast$  is less than $\epsilon$,  and  \cref{eq:succ} follows.

\paragraph{Communication complexity.} In each iteration of protocol    $\Piline$, the parties run protocol $\Pcom$ and send one additional bit. Since $\CC(\Pcom) = \CC(\Pi)$, for $k = k^\ast$ we get that
\begin{align}\label{eq:cc} 
\CC(\Piline) \leq k(\CC{(\Pcom)}+1) + 1 \leq 4k \cdot \CC{(\Pi)} = \frac{2^{15}\ell^4\log{1/\epsilon}}{\size{\cs}^2(1-\alpha-\gamma)^4}\cdot \CC(\Pi)
\end{align}

\paragraph{Proving \cref{thm:mainUniformRed}.}
The proof of \cref{thm:mainUniformRed}  immediately follows that above observations.
\begin{proof}[Proof of \cref{thm:mainUniformRed}.]
	Fix $k = k^\ast =  \frac{2^{13}\ell^4\log{1/\epsilon}}{\size{\cs}^2(1-\alpha-\gamma)^4}$.  \cref{eq:succ} yields that  protocol $\Piline$ solves set-disjointness over $D_\ell$ with error $\epsilon$, and  \cref{eq:cc} yields that  $\CC(\Piline) \le\frac{2^{15}\ell^4\log{1/\epsilon}}{\size{\cs}^2(1-\alpha-\gamma)^4} \cdot\CC(\Pi)$.
\end{proof}

\section{Two-Messages Non-Adaptive Protocols}\label{sec:ArbitraryProtocols}
In this section we prove a lower bound on the communication complexity of any non-adaptive key agreement protocol that uses only two messages.
We consider protocols with respect to
the family $\FFam_n$ of all functions from $\set{0,1}^n$ to $\set{0,1}^n$.

\begin{theorem}[Main theorem for two-message, non-adaptive protocols]\label{thm:MainArbitrary}
	For any $n \in \nat$,
	the communication complexity of a two-message, non-adaptive, $\ell$-query  $(q,\alpha,\gamma)$-key-agreement protocol relative to $\FFam_n$ is
	at least 
	$$\frac{(1-\alpha-\gamma)^2q}{50^2\ell}-6.$$
\end{theorem}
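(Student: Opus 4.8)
The plan is to follow the three-step strategy sketched in \cref{sec:intro:Technique:2msg}: put the protocol in a normal form, exhibit an eavesdropper $\Ec$ that makes few oracle queries, and upper bound the ``advantage'' the honest parties have over $\Ec$ by a quantity that is forced to be small whenever the communication is small; a small advantage then lets $\Ec$ break secrecy, contradicting $(q,\alpha,\gamma)$-secrecy unless the communication is large.

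First I would fix a threshold $\delta>0$ (to be chosen last) and normalize: at the cost of one extra (fresh, uniformly chosen) query and $O(1)$ extra communication bits, we may assume Bob's output equals the first bit of his last query, and---by letting $\Ec$ first ask all $\le (2\ell+2)/\delta$ a-priori heavy queries $\cE_{0}$ and conditioning on their answers throughout---that $\cE_{0}=\emptyset$. Set $\cE_1=\{q:\pr{q\in\rv{X}\cup\rv{Y}\mid\rv{M}_1}\ge\delta\}$, $\rv{S}_1=\rv{X}\cap\rv{Y}$, and $\rv{S}_2=(\rv{X}\cap\rv{Y})\setminus\cE_1$. The eavesdropper $\Ec$, on transcript $(\rv{M}_1,\rv{M}_2)$, queries all of $\cE_1$ after seeing $\rv{M}_1$ (learning $\rv{F}(\cE_1)$), then---making no further oracle calls---samples $\widehat{\rv{Y}}$ from the posterior of $\rv{Y}$ given its view $\rv{W}=(\rv{M}_1,\rv{M}_2,\cE_1,\rv{F}(\cE_1))$ and outputs the first bit of $\widehat{\rv{Y}}$'s last coordinate. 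Since each $q\in\cE_1$ satisfies $\pr{q\in\rv{X}\cup\rv{Y}\mid\rv{M}_1}\ge\delta$, summing over $\cE_1$ (for any fixed value of $\rv{M}_1$, which fixes $\cE_1$) gives $\delta\,\size{\cE_1}\le\ex{}{\size{\cE_1\cap(\rv{X}\cup\rv{Y})}\mid\rv{M}_1}\le 2\ell+2$, so $\Ec$ makes $O(\ell/\delta)$ queries in total.

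For Step III, observe that $\Ec$'s guess $\out^{\Bc}(\widehat{\rv{Y}})$ is sampled independently of Alice's view, so conditioned on $\rv{W}$ the pair $(\Ec\text{'s guess},\out^{\Ac})$ is distributed as the product $\out^{\Bc}\times\out^{\Ac}$ of the marginals, while in a real run $(\out^{\Bc},\out^{\Ac})$ follows the true joint distribution; comparing the diagonal event $\{\out^{\Bc}=\out^{\Ac}\}$ under the two and invoking $(1-\alpha)$-accuracy gives $\pr{\Ec(\trans)=\out^{\Ac}}\ge(1-\alpha)-\MI_{SD}(\out^{\Ac};\out^{\Bc}\mid\rv{W})$. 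Given $\rv{W}$, $\out^{\Ac}$ is a function of $(\rv{X},\rv{F}(\rv{X}))$ and---thanks to the normal form---$\out^{\Bc}$ is a function of $\rv{Y}$, so by data processing for statistical distance $\MI_{SD}(\out^{\Ac};\out^{\Bc}\mid\rv{W})\le\MI_{SD}(\rv{X},\rv{F}(\rv{X});\rv{Y}\mid\rv{W})$, and then ``the dependence flows through the intersection'' (using that $\rv{M}_2$ is a function of $\rv{M}_1,\rv{Y},\rv{F}(\rv{Y})$ and that $(\rv{X},\rv{F}(\rv{X}))\bot(\rv{Y},\rv{F}(\rv{Y}))$ given the intersection queries and their answers) bounds this by the round-$2$ advantage $\MI_{SD}(\rv{S}_2,\rv{F}(\rv{S}_2);\rv{M}_2\mid\rv{M}_1,\rv{Y},\cE_1\cap\rv{Y},\rv{F}(\cE_1\cap\rv{Y}))$. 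To bound the latter I would run Steps I--II: a Shearer-type inequality for mutual information (in the spirit of \cite{ganor2015exponential,rao2015simplified}, together with \cref{lemma:RVToIndicator,lemma:IChainRuleIndi}) gives $\MI(\rv{S}_1,\rv{F}(\rv{S}_1);\rv{M}_1\mid\rv{X})\le\delta\size{\rv{M}_1}$---here one uses $\rv{X}\bot\rv{Y}$ and the absence of a-priori heavy queries, so each fixed query of $\rv{X}$ lands in $\rv{S}_1$ with probability $\le\delta$. Letting $\mu$ be the true distribution and $\mu'$ the one obtained by resampling $(\rv{Y},\rv{F}(\rv{Y}))$ independently of $\rv{X}$ given $\rv{M}_1$, the same counting (now using that heavy queries are removed from $\rv{S}_2$) gives $\MI^{\mu'}(\rv{S}_2,\rv{F}(\rv{S}_2);\rv{M}_2\mid\rv{M}_1,\rv{Y},\cE_1\cap\rv{Y},\rv{F}(\cE_1\cap\rv{Y}))\le\delta\size{\rv{M}_2}$, hence $\le\sqrt{\delta\size{\rv{M}_2}}$ in statistical distance by Pinsker; and the Step-I argument also yields $\MI(\rv{Y},\rv{F}(\rv{Y});\rv{X}\mid\rv{M}_1)\le\delta\size{\rv{M}_1}$, so $\SD(\mu,\mu')\le\sqrt{\delta\size{\rv{M}_1}}$ by Pinsker. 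Combining these two bounds through the statistical-distance chain/averaging lemmas (\cref{lemma:SDCondAdd,lemma:SDVarRep}) yields that the round-$2$ advantage under $\mu$ is $O\bigl(\sqrt{\delta(\size{\rv{M}_1}+\size{\rv{M}_2})}\bigr)$.

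Putting it together, with $C=\CC(\Pi)\ge\size{\rv{M}_1}+\size{\rv{M}_2}$ we get $\pr{\Ec(\trans)=\out^{\Ac}}\ge(1-\alpha)-c_1\sqrt{\delta C}$ while $\Ec$ makes at most $c_2\ell/\delta$ queries. Choosing $\delta=c_2\ell/q$ (the largest value keeping $\Ec$ within the $q$-query budget), if $C<(1-\alpha-\gamma)^2 q/(c_3\ell)$ for the appropriate absolute constant $c_3$, then $c_1\sqrt{\delta C}<1-\alpha-\gamma$, so $\Ec$ would guess $\out^{\Ac}$ with probability $>\gamma$ using at most $q$ queries, contradicting $(q,\gamma)$-secrecy; tracking all the constants through the steps gives $\CC(\Pi)\ge\frac{(1-\alpha-\gamma)^2 q}{50^2\ell}-6$. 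I expect the main obstacles to be the two ``dependence-flows-through-the-intersection'' reductions and the Shearer-type bound: formalizing that the only residual correlation between the parties is carried by the as-yet-unrevealed intersection $(\rv{S}_2,\rv{F}(\rv{S}_2))$, and carefully choosing which of $\Ec$'s heavy queries appear in each conditioning so as not to create spurious dependencies. Placing the KL-to-statistical-distance switch correctly is also a subtlety---KL is needed for the additive counting in Steps I and the $\mu'$-part of Step II, but it is unbounded when comparing $\mu$ with $\mu'$, which forces the move to statistical distance for the final combination.
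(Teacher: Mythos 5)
Your proposal is correct and follows essentially the same route as the paper: the same normalization (no a-priori heavy queries, Bob's key equals the first bit of his last query), the same heavy-query eavesdropper that samples Bob's view from its posterior, the same reduction of Eve's disadvantage to $\MI_{SD}(\rv{X},\rv{F}(\rv{X});\rv{Y}\mid v_E)$ and then to the round-$2$ advantage about the unrevealed intersection, the same KL/Shearer counting for round one, the same $\mu$ versus $\mu'$ hybrid with Pinsker for round two, and the same choice $\delta=\Theta(\ell/q)$ to stay within the query budget. The subtleties you flag at the end (which heavy queries to condition on where, and where to switch from KL to statistical distance) are exactly the ones the paper's \cref{lemma:Hybrid}, \cref{claim:M2ReduceToIntersecton}, and \cref{claim:depEve} are designed to handle.
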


Fix a two-message, non-adaptive, $\ell$-query protocol $\Pi = (\Ac,\Bc)$. Each execution of the protocol specifies the following:
\begin{itemize}
	\item $\rv{X}$ and  $\rv{Y}$, the queries made by $\Ac$ and $\Bc$, respectively;
	\item $\rv{M}_1,\rv{M}_2$, the messages sent in the two rounds;
	\item $\rv{\out}^\Ac$ and  $\rv{\out}^\Bc$, the outputs of the parties.
\end{itemize}
Where $\rv{M_1}$ is a function (not necessarily deterministic) of $\rv{X}$ and $\rv{F}(\rv{X})$, and $\rv{M}_2$ is a function of  $\rv{Y},\rv{F}(\rv{Y})$ and $\rv{M}_1.$
We define an eavesdropper $\Eve = \Eve_\delta$, where $\delta$ is a parameter we will specify later,
and show that $\Eve$ violates the secrecy of $\Pi$
if $\CC(\Pi)$ is too small.
Loosely speaking, the eavesdropper, which is described below,
queries all ``heavy'' queries and outputs
what $\Bc$ would output given these queries.

\begin{algorithm}[The eavesdropper $\Eve$]\label{alg:Eve}~
		\begin{description}
		\item[Oracle:]  $f\in \FFam_n$.
		\item[Parameter:] $\delta>0$.
		\item[Operation:]   Let $\vec{m} = m_1,m_2$ be the messages exchanged in the protocol.
		\begin{enumerate}
			\item 		
			Query  $f$ on all elements in $\cE_0 \cup \cE_1$, defined as
			
			\begin{equation*}
				\cE_{0} =\set{q \in \zn \colon  \pr{q \in X \cup Y  } \geq \delta}.
			\end{equation*}
and
			\begin{equation*}
				\cE_{1} =\set{q \in \zn \colon  \pr{q \in X \cup Y \medspace \middle| \medspace M_{1} = m_{1} , \rv{F}\big|_{\cE_0} = f\big|_{\cE_0} } \geq \delta}.
			\end{equation*}

\item Sample and output 
	\begin{equation*}
		k \getsr \out^\Bc|_{\rv{M}_{\leq 2} = m_{\leq 2} , \rv{F}\big|_{\cE_0 \cup \cE_1}  = f \big|_{\cE_0 \cup \cE_1}} .
	\end{equation*}
	\end{enumerate}
	
	\end{description}
\end{algorithm}

It does not matter if $\Eve$ asks her queries \emph{during} the protocol's run or \emph{afterwards}.
It is convenient to assume that $\Eve$ asks the queries $\cE_{i-1}$ 
after observing $\rv{M}_{\leq i-1}$
and before the next message is sent. 
In particular, $\cE_0$ denotes the queries that are heavy \emph{before} the messages are sent.
These queries are a function of $\Pi$ itself.

\subsection{Simplifying the Structure of the Protocol}
For our lower bound it is convenient to assume that the protocol 
has two structural properties:
\begin{enumerate}[(1)]
	\item There are no queries that are a-priori heavy, that is, $\cE_0 = \emptyset$.
	\item The secret key chosen by the players is the first bit in $\Bc$'s last query; that is, if $\Bc$'s queries are $\rv{Y}_1,\ldots,\rv{Y}_s$, then the secret key is the first bit of $\rv{Y}_s$.
\end{enumerate}
We show that any key agreement protocol can be transformed into one that has these properties, with minor loss in the parameters. The proofs for the next two lemmas appears in \cref{sec:Appendix}.

\paragraph{Eliminating the a priori heavy queries.}
First we show that if $\cE_0 \neq \emptyset$, we can \emph{fix} the answers to $\cE_0$ in advance, eliminating the need for the players and for $\Eve$ to ask these queries.

\begin{lemma}
	Let $\Pi$ be any $\ell$-query $(q,\alpha,\gamma)$-key-agreement protocol.
	Then there is an $\ell$-query $(q-|\cE_0|,\alpha,\gamma)$-protocol $\Theta$ with the same communication complexity as $\Pi$,
	such that $\Theta$ has no queries that are heavy a priori, that is,
	for each $q \in \set{0,1}^n$, and for any oracle $f \in \FFam_n$,
	\begin{equation*}
		\ppr{\theta^f}{ q \in \rv{X} \cup \rv{Y}} \leq \delta.
	\end{equation*}
	\label{lemma:E0}
\end{lemma}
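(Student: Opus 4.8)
The plan is to obtain $\Theta$ from $\Pi$ by \emph{hard-wiring} the oracle's values on the set $\cE_0$, so that the players never send those points to the oracle. Concretely, I would have $\Theta=(\Ac_\Theta,\Bc_\Theta)$ run relative to $\FFam_n$ and additionally use a public random string, interpreted as a uniformly chosen function $\desg\colon\cE_0\to\set{0,1}^n$: each party runs its $\Pi$-code, answering a query $q\in\cE_0$ with $\desg(q)$ (without touching the oracle) and every other query $q$ with $f(q)$. A run of $\Theta^f$ is then literally a run of $\Pi$ under the ``effective oracle'' $f'$ that equals $\desg$ on $\cE_0$ and equals $f$ outside it; hence $\Theta$ has the same number of rounds and the same transcript as the corresponding $\Pi$-run, so $\CC(\Theta)=\CC(\Pi)$, it is still non-adaptive, and each party makes at most $\ell$ oracle queries. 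Since $\Pi$ is non-adaptive, the set of points actually queried in a run of $\Theta^f$ does not depend on $f$ and equals the corresponding set for $\Pi$ with $\cE_0$ deleted; therefore $\ppr{\Theta^f}{q\in\rv{X}\cup\rv{Y}}=0$ for $q\in\cE_0$ and $\ppr{\Theta^f}{q\in\rv{X}\cup\rv{Y}}\le\ppr{\Pi}{q\in\rv{X}\cup\rv{Y}}<\delta$ for $q\notin\cE_0$, which is exactly the ``no a priori heavy query'' property. (Note $\cE_0$ is finite: $\abs{\cE_0}\cdot\delta\le\sum_q\ppr{\Pi}{q\in\rv{X}\cup\rv{Y}}=\Exp\big[\abs{\rv{X}\cup\rv{Y}}\big]\le2\ell$, so $\abs{\cE_0}\le2\ell/\delta$.) Accuracy carries over for free: for every fixed $f$ and $\desg$, a run of $\Theta^f$ is a run of $\Pi$ against a concrete oracle in $\FFam_n$, against which $\Pi$ has $(1-\alpha)$-accuracy, and averaging over $\desg$ preserves this.

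The substance of the lemma is the secrecy bound, which I would prove by contraposition. Suppose $\Eve'$ is a $(q-\abs{\cE_0})$-query eavesdropper that, given the transcript of $\Theta$ together with the public string $\desg$, guesses $\out^{\Ac_\Theta}$ with probability exceeding $\gamma$; I will turn it into a $q$-query eavesdropper $\Eve$ against $\Pi$ of the same success probability. The crucial point is that, since $\desg$ is uniform and independent of $f$, the effective oracle $f'$ is itself uniform over $\FFam_n$, so the pair (transcript of $\Theta$, $\desg$) has exactly the same distribution as $\big(\trans(\Pi^{f^\ast}),f^\ast\big|_{\cE_0}\big)$ for $f^\ast\getsr\FFam_n$, and under this identification $\out^{\Ac_\Theta}$ is the output of the underlying $\Pi^{f^\ast}$-execution. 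Thus $\Eve$, given a transcript $\tau$ of $\Pi^{f^\ast}$ and oracle access to $f^\ast$, first queries $f^\ast$ on all of $\cE_0$ (that is $\abs{\cE_0}$ queries) and then runs $\Eve'$ on the pair $\big(\tau,f^\ast\big|_{\cE_0}\big)$, forwarding $\Eve'$'s oracle queries to $f^\ast$. (We may assume $\Eve'$ never queries inside $\cE_0$: in the $\Theta$-experiment the oracle values on $\cE_0$ are independent of $\desg$, of the transcript, and of $\out^{\Ac_\Theta}$, hence useless.) This perfectly simulates $\Eve'$'s view in the $\Theta$-experiment, so $\Eve$ guesses $\out^\Ac$ of $\Pi^{f^\ast}$ with probability exceeding $\gamma$ while making $\abs{\cE_0}+(q-\abs{\cE_0})=q$ queries, contradicting the $(q,\gamma)$-secrecy of $\Pi$. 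Hence $\Theta$ has $(q-\abs{\cE_0},\gamma)$-secrecy, completing the proof.

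I expect the distribution-matching step to be the main obstacle, and it is what dictates the construction. If one instead hard-wired a \emph{fixed} value (say $0^n$) on all of $\cE_0$, then the transcript of $\Theta$ would be the transcript of $\Pi$ \emph{conditioned} on the oracle taking that fixed value on $\cE_0$ --- statistically far from a genuine $\Pi$-transcript --- and the reduction would collapse, since an eavesdropper that is merely handed a $\Pi$-transcript cannot re-randomize away its dependence on $f\big|_{\cE_0}$. Drawing $\desg$ uniformly is precisely what makes the effective oracle uniform and lets $\Eve$ rebuild a correctly distributed $\Theta$-transcript at the cost of $\abs{\cE_0}$ extra queries, which is where the loss $q\mapsto q-\abs{\cE_0}$ comes from. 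Everything else --- the unchanged round count, agreement parameter and communication, and the uselessness of $\cE_0$-queries against $\Theta$ --- is routine.
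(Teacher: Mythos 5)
Your construction and the paper's are two packagings of the same averaging argument, but the packaging matters here. The paper does not randomize: it observes that accuracy holds for \emph{every} fixing $R$ of the answers on $\cE_0$, and then shows by contradiction that \emph{some} fixed $R$ also preserves secrecy --- if every $R$ admitted a $(q-\abs{\cE_0})$-query attacker $\Eve_R$, the combined attacker against $\Pi$ that first spends $\abs{\cE_0}$ queries to learn $R=f\big|_{\cE_0}$ and then runs $\Eve_R$ would break $\Pi$ with $q$ queries. The protocol $\Theta$ is then $\Pi$ with that particular $R$ hard-wired, which is an ordinary oracle-aided protocol. Your version keeps $\desg$ random and public, which creates a modeling problem: \cref{def:KAProtocol} hands the eavesdropper only the transcript, so a $\Theta$ with shared randomness hidden from Eve is trivially ``secure'' (the parties could output $\desg$ itself), while a $\Theta$ in which Eve is additionally handed $\desg$ is not a key-agreement protocol in the paper's sense --- and you cannot put $\desg$ into the transcript without adding up to $\abs{\cE_0}\cdot n\le 2\ell n/\delta$ bits of communication, violating the ``same communication complexity'' clause. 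The fix is one line, and it is exactly the paper's step: your secrecy claim (no $(q-\abs{\cE_0})$-query attacker succeeds with probability $>\gamma$ even given $\desg$) implies, by the order of quantifiers, that there exists a fixed $\desg$ against which no attacker succeeds, and hard-wiring that $\desg$ yields the required $\Theta$.

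Relatedly, your closing paragraph is mistaken in claiming that hard-wiring a fixed value would make ``the reduction collapse'' because of a distribution mismatch. The paper's reduction runs in the opposite direction from the one you are imagining: the attacker against $\Pi$ first queries $\cE_0$ and only then chooses which $\Eve_R$ to run, so it applies $\Eve_R$ precisely to the conditional distribution $\Pi^{f}\big|_{f|_{\cE_0}=R}$, which \emph{is} the transcript distribution of the hard-wired protocol $\Theta_R$; the simulation is perfect and no re-randomization is needed. What would fail is fixing an \emph{arbitrary} value such as $0^n$ and hoping secrecy survives --- the point of the averaging is that secrecy survives for \emph{some} $R$, not for all. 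The rest of your argument (accuracy for every fixing, non-adaptivity guaranteeing that removing $\cE_0$ creates no new heavy queries, and the observation that Eve gains nothing from querying inside $\cE_0$) matches the paper's proof.
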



\paragraph{The key can be $\Bc$'s last query.}
Next we show that we can transform any protocol into one where the secret key is the first bit of $\Bc$'s last query.

\begin{lemma}
	Let $\Pi$ be an $\ell$-query $(q,\alpha,\gamma)$-key-agreement protocol with two messages and communication complexity $C$.
	Then there is an $(\ell+1)$-query $(q,\alpha,\gamma)$-protocol $\Theta$ with two messages and communication complexity $C+1$,
	in which the secret key is the first bit of $\rv{Y}_{\ell+1}$.
	\label{lemma:Y1}
\end{lemma}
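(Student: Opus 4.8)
The plan is to obtain $\Theta$ from $\Pi=(\Ac,\Bc)$ by having $\Bc$ append one fresh query that re-randomizes the agreed key. In $\Theta=(\Ac',\Bc')$ the parties first run $\Pi$ verbatim: $\Ac'$ sends $\rv{M}_1$ exactly as $\Ac$ does, and $\Bc'$ runs $\Bc$ until it has produced the $\Pi$-output $\out^\Bc$ and the $\Pi$-message $\rv{M}_2$. At that point $\Bc'$ samples a uniformly random $\rv{Y}_{\ell+1}\getsr\zn$, queries the oracle on $\rv{Y}_{\ell+1}$ (discarding the answer — its sole role is to be $\Bc'$'s $(\ell+1)$-th query), sets its own output to $r:=\rv{Y}_{\ell+1}^1$, and sends $(\rv{M}_2,b)$ with $b:=\out^\Bc\xor r$. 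Upon receiving $(\rv{M}_2,b)$, $\Ac'$ computes the $\Pi$-output $\out^\Ac$ from its $\Pi$-view and outputs $\out^\Ac\xor b$. By construction $\Theta$ has two messages, communication complexity $C+1$, $\Ac'$ makes $\ell$ queries while $\Bc'$ makes $\ell+1$ (so $\Theta$ is $(\ell+1)$-query, still relative to $\FFam_n$), and $\out^{\Bc'}$ equals the first bit of $\Bc'$'s last query $\rv{Y}_{\ell+1}$, as required.

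For accuracy I would note that $\rv{Y}_{\ell+1}$ is drawn independently of $f$ and of everything in the $\Pi$-subexecution, so the $\Pi$-part of an execution of $\Theta$ is distributed exactly as an execution of $\Pi$; hence $\out^\Ac=\out^\Bc$ with probability at least $1-\alpha$. Whenever this holds, $\Ac'$'s output is $\out^\Ac\xor b=\out^\Ac\xor\out^\Bc\xor r=r=\out^{\Bc'}$, so $\Theta$ has $(1-\alpha)$-accuracy.

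For secrecy I would reduce a $q$-query attacker $\Ec'$ against $\Theta$ to a $q$-query attacker $\Ec$ against $\Pi$. The crucial observation is that, because $r=\rv{Y}_{\ell+1}^1$ is a uniform bit independent of the $\Pi$-subexecution, the extra transcript bit $b=\out^\Bc\xor r$ is uniform on $\zo$ and independent of $(f,\rv{M}_1,\rv{M}_2,\out^\Ac)$: indeed, conditioning on any value of $(f,\rv{M}_1,\rv{M}_2,\out^\Ac)$ and averaging over $\out^\Bc$, the independent uniform $r$ makes $\out^\Bc\xor r$ uniform. Therefore $\Ec$ can faithfully simulate a $\Theta$-transcript from a $\Pi$-transcript by tossing one coin: on oracle $f$ and $\Pi$-transcript $\tau$, it samples $b\getsr\zo$, runs $g\gets\Ec'^{f}(\tau,b)$, and outputs $g\xor b$. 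The joint distribution of $(f,\tau,\out^\Ac,b)$ it produces matches that in a real execution of $\Theta$, so $\pr{g\xor b=\out^\Ac}$ equals $\pr{g=\out^\Ac\xor b}=\pr{g=\out^{\Ac'}}$, the success probability of $\Ec'$ against $\Theta$. Since $\Ec$ makes exactly the same oracle queries as $\Ec'$, the $(q,\gamma)$-secrecy of $\Pi$ bounds this probability by $\gamma$, giving $(q,\gamma)$-secrecy of $\Theta$.

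The one point that requires care — and the only real obstacle — is pinning down the independence of $b$ precisely enough to justify that the single extra coin toss yields a distributionally exact simulation of a $\Theta$-transcript (jointly with $f$ and with $\out^\Ac$); this is where one must use explicitly that $\rv{Y}_{\ell+1}$ is sampled independently of the oracle and of all messages, queries, and coins of the $\Pi$-subexecution. Everything else is routine accounting of rounds, of the one extra transmitted bit, and of the query counts.
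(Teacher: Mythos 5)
Your proposal is correct and follows essentially the same route as the paper: the same construction (a fresh independent query $\rv{Y}_{\ell+1}$, the masking bit $b=\out^\Bc\xor\rv{Y}_{\ell+1}^1$ appended to $\rv{M}_2$, and $\Ac$ unmasking with $b$), the same accuracy argument, and the same secrecy reduction in which the simulator tosses a uniform bit $b$, feeds $(\tau,b)$ to the $\Theta$-attacker, and XORs $b$ back into its guess, justified exactly as in the paper by the independence of $\rv{Y}_{\ell+1}^1$ from the oracle and the $\Pi$-execution.
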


\subsection{Proof of the Main Theorem}

We are now ready to prove \cref{thm:MainArbitrary}.
Given a $(q, \alpha, \gamma)$-protocol, we showed in the previous section that we can construct a $(q - |\cE_0|, \alpha, \gamma)$-protocol with one extra query and one extra bit of communication,
which has the two properties we need.
Henceforth, we assume that the two structural properties hold.

The heart of the lower bound is the following lemma, which asserts that the eavesdropper $\Eve$ defined above is able to ask enough queries
so that $\Bc$ has very little advantage over $\Eve$
when it comes to outputting a secret key shared with $\Ac$.

Let $\Pi_{\Eve}^F$ denote the distribution of Eve's view under $\Pi^F$.
Namely, it is the
joint distribution of $(\rv{M}_1, \rv{M}_2, \rv{F}(\rv{\cE}_1))$.
We use $v_E$ to denote a view of Eve drawn from this distribution.

\begin{lemma}\label{claim:mainAsSD} 
	\begin{align}
		\ex{ v_E \getsr \Pi_{\Eve}^{\rv{F}}} { \ISD{\rv{X},\rv{F}(\rv{X})}{ \rv{Y}}{v_E} }
				\leq
		25 \sqrt{ \delta (\CC(\Pi)+5) }.
		\label{eq:EveDep}
	\end{align}

\end{lemma}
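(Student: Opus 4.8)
The target quantity can be rewritten as
\[
T \;:=\; \ex{v_E \getsr \Pi_{\Eve}^{\rv{F}}}{\ISD{\rv{X},\rv{F}(\rv{X})}{\rv{Y}}{v_E}} \;=\; \MI_{SD}\big(\rv{X},\rv{F}(\rv{X})\,;\,\rv{Y}\;\big|\;\rv{M}_1,\rv{M}_2,\rv{F}(\cE_1)\big).
\]
Since the protocol is in normal form, $\cE_0=\emptyset$, so $\cE_1$ is a deterministic function of $\rv{M}_1$; hence conditioning on $v_E=(\rv{M}_1,\rv{M}_2,\rv{F}(\cE_1))$ is the same as conditioning on $(\rv{M}_1,\rv{M}_2,\cE_1,\rv{F}(\cE_1))$, and moreover the ``junk'' answers $\rv{F}(\cE_1\setminus(\rv{X}\cup\rv{Y}))$ are independent of everything relevant and may be dropped from the conditioning. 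The plan is to bound $T$ along the three steps of \cref{sec:intro:Technique:2msg}: first reduce $T$ to the information the second message carries about the non-heavy part of the intersection, and then bound that information by $O(\sqrt{\delta\,\CC(\Pi)})$ using a Shearer-type argument (in KL) together with a passage to a product distribution (in statistical distance).

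\emph{Step 1: reduction to a message-information quantity.} Write $\rv{S}_i=(\rv{X}\cap\rv{Y})\setminus\cE_{i-1}$. The structural facts I would use are: (i) the query sets are fixed before any oracle call and drawn from the parties' private randomness, so $\rv{X}\bot\rv{Y}$; (ii) $\rv{M}_2$ is a (randomized) function of $(\rv{M}_1,\rv{Y},\rv{F}(\rv{Y}))$; and (iii) since $\rv{F}$ is uniformly random, conditioned on $(\rv{X},\rv{Y},\rv{F}(\cE_1\cap\rv{Y}),\rv{F}(\rv{S}_2))$ the blocks $\rv{F}(\rv{X}\setminus\rv{Y})$ and $\rv{F}(\rv{Y}\setminus\rv{X})$ are independent --- that is, all of the correlation between Alice's view and Bob's view ``flows through'' $(\rv{S}_2,\rv{F}(\rv{S}_2))$ and the heavy queries. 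Conditioning on $\rv{M}_1$ and $\rv{F}(\cE_1\cap\rv{Y})$, splitting off $\rv{M}_2$ with the statistical-distance chain rules (\cref{lemma:SDChainRule,lemma:SDCondAdd,lemma:SDVarRep} and \cref{cor:SDVarRep}), and using $\rv{X}\bot\rv{Y}$ to interchange the two sides, I would obtain, up to an absolute constant,
\[
T \;\le\; O(1)\cdot \MI_{SD}\big(\rv{S}_2,\rv{F}(\rv{S}_2)\,;\,\rv{M}_2\;\big|\;\rv{M}_1,\rv{Y},\cE_1\cap\rv{Y},\rv{F}(\cE_1\cap\rv{Y})\big).
\]

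\emph{Step 2: bounding the message-information, and combining.} Let $\mu$ be the true distribution and $\mu'$ the distribution in which, conditioned on $\rv{M}_1$, the pair $(\rv{Y},\rv{F}(\rv{Y}))$ is resampled independently of $(\rv{X},\rv{F}(\rv{X}))$. By the triangle inequality for statistical distance (via \cref{lemma:SDCondAdd}),
\[
\MI_{SD}^{\mu}\big(\rv{S}_2,\rv{F}(\rv{S}_2);\rv{M}_2\mid\cdots\big)\;\le\;O\!\left(\MI_{SD}^{\mu'}\big(\rv{S}_2,\rv{F}(\rv{S}_2);\rv{M}_2\mid\cdots\big)+\mathsf{D}_{SD}(\mu'\parallel\mu)\right).
\]
Under $\mu'$ the two sides are independent given $\rv{M}_1$, so a Shearer-type argument for mutual information (\cref{lemma:IChainRuleIndi,lemma:RVToIndicator}) --- using that $\cE_1$ has been removed from $\rv{S}_2$, so every fixed query lies in $\rv{S}_2$ with probability at most $\delta$ given the conditioning --- gives $\MI^{\mu'}(\rv{S}_2,\rv{F}(\rv{S}_2);\rv{M}_2\mid\cdots)\le\delta\,\size{\rv{M}_2}$, and Pinsker (\cref{fact:Pinsker}) upgrades this to $\MI_{SD}^{\mu'}(\cdots)\le\sqrt{\delta\,\size{\rv{M}_2}}$. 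For the distance $\mathsf{D}_{SD}(\mu'\parallel\mu)$, the same ``dependence flows through the intersection'' argument in the KL metric --- now using the ordinary chain rule (\cref{fact:IChainRule}) and $\rv{M}_1=\rv{M}_1(\rv{X},\rv{F}(\rv{X}))$ --- gives $\MI(\rv{Y},\rv{F}(\rv{Y});\rv{X}\mid\rv{M}_1)\le\MI(\rv{S}_1,\rv{F}(\rv{S}_1);\rv{M}_1\mid\rv{X})$, and the Shearer argument again (using $\cE_0=\emptyset$, so each query lies in $\rv{S}_1=\rv{X}\cap\rv{Y}$ with probability at most $\delta$) gives $\MI(\rv{S}_1,\rv{F}(\rv{S}_1);\rv{M}_1\mid\rv{X})\le\delta\,\size{\rv{M}_1}$; Pinsker then yields $\mathsf{D}_{SD}(\mu'\parallel\mu)\le\sqrt{\delta\,\size{\rv{M}_1}}$. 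Collecting everything and using $\size{\rv{M}_1}+\size{\rv{M}_2}\le\CC(\Pi)$ (with small additive slack, absorbed into the ``$+5$'', coming from \cref{lemma:RVToIndicator} and encoding overheads), $T=O\!\big(\sqrt{\delta\,\size{\rv{M}_1}}+\sqrt{\delta\,\size{\rv{M}_2}}\big)=O\!\big(\sqrt{\delta(\CC(\Pi)+5)}\big)$; tracking the constants gives the claimed $25\sqrt{\delta(\CC(\Pi)+5)}$.

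\emph{Main obstacle.} The delicate part is the structural reduction in Step 1. One must carefully manage which variables are in the conditioning --- Eve's view contains $\rv{F}(\cE_1)$ but not $\rv{Y}$, whereas the message-information quantity conditions on $\rv{Y}$ but only on the portion $\cE_1\cap\rv{Y}$ of the heavy queries --- deal with the random junk set $\cE_1\setminus(\rv{X}\cup\rv{Y})$, and combine $\rv{X}\bot\rv{Y}$ with the random-function structure to show that all cross-correlation factors through the non-heavy intersection queries. Because the statistical-distance ``chain rules'' are only approximate (each loses a multiplicative constant rather than being additive), these constants must be kept under control throughout. Finally, the switch from KL to statistical distance has to be made exactly at the point where $\rv{M}_2$ enters: in KL the analogue would be infinite, since after a collision in $\rv{F}$ the message $\rv{M}_2$ can pin the secret key down completely --- which is why Pinsker is invoked twice, and why the bound scales like $\sqrt{\delta\cdot\CC(\Pi)}$ rather than $\delta\cdot\CC(\Pi)$.
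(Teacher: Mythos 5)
Your proposal is correct and follows essentially the same route as the paper's proof: a reduction of Eve's conditional dependence to the information the second message carries about the non-heavy intersection (\cref{claim:M2ReduceToIntersecton}), followed by the product-distribution hybrid (\cref{lemma:Hybrid}) that combines a Shearer-type KL bound with Pinsker's inequality, with the first-round bound $\MI(\rv{Y},\rv{F}(\rv{Y});\rv{X}\mid\rv{M}_1)\le\delta\size{\rv{M}_1}$ supplying the penalty term. The one point you gloss over is that even after passing to $\mu'$, conditioning on $\rv{M}_1$ and Eve's answers leaves the oracle values $\rv{F}(y_i)$ mutually dependent (Alice could, e.g., send an XOR of her answers), and it is controlling this internal dependence --- not an ``encoding overhead'' --- that produces the extra $\delta(\size{\rv{M}_1}+5)$ term and hence the ``$+5$'' (\cref{claim:deltaM2,claim:depEve}).
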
	
simplicity of notation, here and below
we use 
$\rv{X},\rv{F}(\rv{X}),\rv{Y}|_{v_E}$ to denote $\big(\rv{X},\rv{F}(\rv{X}),\rv{Y} \big)|_{v_E}$
(we condition all the three random
variables not just $\rv{Y}$), and similarly in other cases.
We prove 	\cref{claim:mainAsSD} below, but let us first use it to prove \cref{thm:MainArbitrary}.
	
\begin{proof}[Proof of \cref{thm:MainArbitrary}]

		First, let us fix $\delta$ such that $\Eve$ does not ask more than $q$ queries. 
	Let\footnote{In general, for an $r$-message protocol, we would set $\delta = 2r\ell/q$.} $\delta = 4\ell/q$. 
	Since both $\Ac$ and $\Bc$ ask together at most $2\ell$ queries,
	\begin{equation*}
	2\ell 
	\geq
	\E_{\Pi^{\rv{F}}}\left[
	\left|\rv{X}\cup\rv{Y}\right|
	\right]
	=
	\sum_{q \in\set{0,1}^n}\ppr{\Pi^{\rv{F}}}{q \in \rv{X}\cup \rv{Y}} .
	\end{equation*} 
	Since every heavy-query contributes to the sum at least $\delta$, the size\footnote{Recall that we assumed that $\cE_0 = \emptyset$.
			This assumption caused a loss in parameters,
			so here we need to bound the size of $\cE_0$.} of $\cE_0$ is at most $2\ell/\delta=q/2$. 		
	Similarly, for every $m_1$ and 				$f\big|_{\cE_0}$,
	\begin{equation*}
	2\ell 
	\geq
	\sum_{q \in\set{0,1}^n}\ppr{\Pi^{\rv{F}}}{q \in \rv{X}\cup \rv{Y}\mid \rv{M}_1=m_1, \rv{F}\big|_{\cE_0}=f\big|_{\cE_0}} .
	\end{equation*} 
	So, the size of $\cE_1$ is also at most $q/2$. Overall,
	Eve asks no more than $q$ queries.

	Now, recall that $\out^{\Bc}$ is assumed to be the first bit of $\Bc$'s last query.
	In particular, $\out^{\Bc}$ is a deterministic function of $\rv{Y}$.
	From~\cref{eq:EveDep} and \cref{fact:DataProcessingSD}, 
\begin{align*}
	\ex{ v_E \getsr \Pi_{\Eve}^{\rv{F}}} { \ISD{\rv{X},\rv{F}(\rv{X})}{ \rv{\out}^\Bc}{v_E}}
		\\
		\leq
25 \sqrt{ \delta (\CC(\Pi)+5 )}.
\end{align*}
$\Ac$'s output is a function of her view $(\rv{X}, \rv{F}(\rv{X}), \rv{M}_1, \rv{M}_2)$,
so conditioned on $v_E = (\rv{M}_1, \rv{M}_2, \rv{F}(\rv{\cE}_1))$,
it is a function of $(\rv{X}, \rv{F}(\rv{X}))$.
Using the data processing inequality again, we obtain
\begin{align*}
	\ex{ v_E \getsr \Pi_{\Eve}^{\rv{F}} } { \ISD{\rv{\out}^\Ac}{ \rv{\out}^\Bc}{v_E} }
		\leq
	25 \sqrt{ \delta (\CC(\Pi)+5) }.
\end{align*}
Eve samples her output $\out^\Eve$ from $\rv{\out}^{\Bc}|v_E$.
Therefore,
\begin{align}
\notag
&\ppr{\Pi^{\rv{F}}}{\rv{\out}^\Ac=\rv{\out}^\Bc}
-
\ppr{\Pi^{\rv{F}}}{\rv{\out}^\Ac=\rv{\out}^\Eve} \\\nonumber
& = 
	\ex{ v_E \getsr \Pi_{\Eve}^{\rv{F}}} {  \ppr{\Pi^{\rv{F}}|v_E}{\rv{\out}^\Ac=\rv{\out}^\Bc}
	-
	\ppr{\Pi^{\rv{F}}|v_E}{\rv{\out}^\Ac=\rv{\out}^\Eve}}\\
	& \leq 
	25 \sqrt{ \delta (\CC(\Pi)+5) }.
		\label{eq:Eve_success}
\end{align}
In words,
Eve's probability of guessing $\Ac$'s output is close to $\Bc$'s when $\CC(\Pi)$ is small.

On the other hand, we know that $\Pi$ is $\alpha$-consistent and $\gamma$-secure, so Eve \emph{cannot} have a success probability too close to $\Bc$'s:
By the $\alpha$-consistency of $\Pi$, 
we have $\ppr{\Pi^{\rv{F}}}{\rv{\out}^\Ac=\rv{\out}^\Bc}\geq 1-\alpha$.
By the $\gamma$-secrecy, we have\\
$\ppr{\rv{\Pi}^{\rv{F}}}{\rv{\out}^\Ac=\rv{\out}^\Eve}\leq \gamma$.
Together,
\begin{equation}
	 \ppr{\Pi^{\rv{F}}}{\rv{\out}^\Ac=\rv{\out}^\Bc}-\ppr{\rv{\Pi}^{\rv{F}}}{\rv{\out}^\Ac=\rv{\out}^\Eve}
	 \geq
	1-\alpha-\gamma.
	\label{eq:Pi_success}
\end{equation}
Combining~\eqref{eq:Eve_success} and~\eqref{eq:Pi_success} we see that we must have
\begin{equation*}
\CC(\Pi)\geq\frac{(1-\alpha-\gamma)^2}{25^2\delta}-5.
\end{equation*}
\end{proof}

\subsection{Proving \cref{claim:mainAsSD}}
We prove \cref{claim:mainAsSD} by considering each message separately.
We start with an informal exposition of the proof.
The advantage the players obtain over Eve is encapsulated by the difference between
\begin{itemize}
	\item what $\Ac$ and $\Bc$ learn about the intersection $\rv{X} \cap \rv{Y}$ of their query sets given the transcript \emph{and their queries} $\rv{X}$ or $\rv{Y}$; and
	\item what Eve knows about the intersection $\rv{X} \cap \rv{Y}$ given the transcript and \emph{her} queries $\rv{F}(\cE_1)$.
\end{itemize}
To bound this advantage, we argue that
\begin{enumerate}[I.]
	\item After the first message ($\Ac$'s message),
		all the knowledge that $\Bc$ has about $\Ac$'s queries $\rv{X}$ comes from her first message $\rv{M}_1$. Any advantage he has over Eve comes from what he has learned about the intersection $\rv{X} \cap \rv{Y}$ of their query sets.
		Because $\rv{M}_1$ is short,
		$\Bc$ cannot learn too much about this intersection.
From his point of view,
		the posterior distribution of the intersection given $\rv{M}_1$ remains close to the prior (which is known to Eve).

		To establish this part of the argument we use the language of mutual information.

	\item Similarly, after the second message ($\Bc$'s message),
		all the knowledge that $\Ac$ has gained about $\Bc$'s queries $\rv{Y}$ comes from $\rv{M}_2$ \emph{and} what $\Bc$
		already learned about the intersection $\rv{X} \cap \rv{Y}$ from $\rv{M}_1$.
		In particular, there is a small probability that after seeing $\rv{M}_1$, $\Bc$ has learned too much about the intersection,
		and can use this knowledge to communicate with $\Ac$ securely (as Eve does not know the intersection).

		To deal with this low-probability bad event, we need to switch to 
		the language of statistical distance, 
		and use~\cref{lemma:Hybrid} below.
\end{enumerate}
The following technical lemma is useful in the analysis of the second message, as it 
allows to ignore the knowledge $\Bc$ gained about the intersection in the first message. 
This lemma can be useful in other contexts as well.
Its proof appears in \cref{sec:Appendix}.

\begin{lemma}\label{lemma:Hybrid}
	Let $\rv{A} = \rv{A}_1,\ldots,\rv{A}_n$, let $\rv{T} \subseteq [n]$ and let $\rv{B}$ 
	be random variables.
Let $\rv{Z}$ be a random variable taking values in the set $\mathcal{Z}$,
and let $g : \mathcal{Z} \rightarrow \mathcal{P}([n])$ be a function mapping the domain of $\rv{Z}$ to subsets of $[n]$.
Let
	\begin{align*}
		\epsilon = \ex{z \getsr \rv{Z}}{\ex{t \getsr \rv{T}|_z}{\MI(\rv{A}_t;\rv{B}|\rv{A}_{g(z)},z)}}
		\qquad 
		\text{and}\\
		\qquad
		\delta = \ex{z \getsr \rv{Z}}{\ISD{\rv{A},\rv{B}}{\rv{T}}{z}}.
	\end{align*}
	Then 
	\begin{equation*}
		\ex{z,a_{g(z)} \getsr \rv{Z},\rv{A}_{g(z)}}{\SDC{\rv{A}_\rv{T},\rv{T},\rv{B}|_{z,a_{g(z)}}}{(\rv{A}_\rv{T},\rv{T}|_{z,a_{g(z)}})\times \rv{B}|_{z,a_{g(z)}}}}\leq 2\sqrt{\epsilon}+2\delta.
	\end{equation*}
\end{lemma}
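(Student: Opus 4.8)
Write $\mu$ for the true joint distribution of $(\rv{Z},\rv{A},\rv{B},\rv{T})$ in the statement, and let $\mu'$ be the ``decoupled'' distribution obtained from $\mu$ by keeping the conditional law of $(\rv{A},\rv{B})$ given $\rv{Z}=z$ intact while resampling $\rv{T}$ from its conditional marginal $\rv{T}|_z$ independently of $(\rv{A},\rv{B})$. Let $\Gamma(\nu)$ denote the left-hand side of the conclusion evaluated under a distribution $\nu$. The plan is to prove the two bounds $\Gamma(\mu')\le 2\sqrt{\epsilon}$ and $\Gamma(\mu)\le\Gamma(\mu')+2\delta$ separately; adding them gives the lemma. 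Two elementary facts about $\mu'$ drive everything: $\mu'$ has the same marginal on $(\rv{Z},\rv{A},\rv{B})$ as $\mu$ (so in particular the conditional $\rv{B}|_{z,a_{g(z)}}$ is unchanged), and, directly from the definition of $\delta$ together with \cref{fact:SDEx} applied with $\rv{Z}$ as the shared coordinate, $\SD(\mu,\mu')=\ex{z\getsr\rv{Z}}{\ISD{\rv{A},\rv{B}}{\rv{T}}{z}}=\delta$.

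For $\Gamma(\mu')$: under $\mu'$, after conditioning on $\rv{Z}=z$ and on the value $a_{g(z)}$ of $\rv{A}_{g(z)}$ (a function of $\rv{A}$), the set $\rv{T}$ is still distributed as $\rv{T}|_z$ and is independent of $(\rv{A},\rv{B})$, whose joint law is the true $(\rv{A},\rv{B})|_{z,a_{g(z)}}$. Peeling off $\rv{T}$ with \cref{fact:SDEx} (both the joint and the product distribution share $\rv{T}$-marginal $\rv{T}|_z$) and using that further conditioning on $\rv{T}=t$ leaves $(\rv{A},\rv{B})|_{z,a_{g(z)}}$ unchanged, one obtains
\[
\Gamma(\mu')=\ex{z,a_{g(z)}\getsr\rv{Z},\rv{A}_{g(z)}}{\;\ex{t\getsr\rv{T}|_z}{\ISD{\rv{A}_t}{\rv{B}}{z,a_{g(z)}}}}.
\]
Bounding each inner term by Pinsker's inequality (\cref{fact:Pinsker}) gives $\ISD{\rv{A}_t}{\rv{B}}{z,a_{g(z)}}\le 2\sqrt{\MI(\rv{A}_t;\rv{B}\mid z,a_{g(z)})}$, and pulling all the expectations inside the square root by concavity of $\sqrt{\cdot}$ (\cref{fact:Jensen}) leaves inside the root the quantity $\ex{z\getsr\rv{Z}}{\ex{t\getsr\rv{T}|_z}{\ex{a_{g(z)}\getsr\rv{A}_{g(z)}|_z}{\MI(\rv{A}_t;\rv{B}\mid z,a_{g(z)})}}}$, which collapses to $\ex{z\getsr\rv{Z}}{\ex{t\getsr\rv{T}|_z}{\MI(\rv{A}_t;\rv{B}\mid\rv{A}_{g(z)},z)}}=\epsilon$ by the definition of conditional mutual information (the law of $t$ does not depend on $a_{g(z)}$). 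Hence $\Gamma(\mu')\le 2\sqrt{\epsilon}$.

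For $\Gamma(\mu)-\Gamma(\mu')$: by \cref{fact:SDEx} applied with $(\rv{Z},\rv{A}_{g(\rv{Z})})$ as the shared coordinate, for any $\nu$ we may write $\Gamma(\nu)=\SD(P^\nu,Q^\nu)$, where $P^\nu$ is the $\nu$-law of the tuple $(\rv{Z},\rv{A}_{g(\rv{Z})},\rv{A}_{\rv{T}},\rv{T},\rv{B})$ and $Q^\nu$ is obtained from the $\nu$-law of $(\rv{Z},\rv{A}_{g(\rv{Z})},\rv{A}_{\rv{T}},\rv{T})$ by appending a fresh $\rv{B}$ drawn from $\rv{B}|_{z,a_{g(z)}}$. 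Both $(\rv{Z},\rv{A}_{g(\rv{Z})},\rv{A}_{\rv{T}},\rv{T},\rv{B})$ and $(\rv{Z},\rv{A}_{g(\rv{Z})},\rv{A}_{\rv{T}},\rv{T})$ are deterministic functions of $(\rv{Z},\rv{A},\rv{B},\rv{T})$, so by the data-processing inequality (\cref{fact:DataProcessingSD}), $\SD(P^\mu,P^{\mu'})\le\SD(\mu,\mu')=\delta$ and likewise the two laws of $(\rv{Z},\rv{A}_{g(\rv{Z})},\rv{A}_{\rv{T}},\rv{T})$ are within $\delta$. Crucially, the ``append a fresh $\rv{B}$'' map is the \emph{same} randomized map for $\nu=\mu$ and $\nu=\mu'$, because $\mu'$ preserves the conditional $\rv{B}|_{z,a_{g(z)}}$; applying data processing to this common map gives $\SD(Q^\mu,Q^{\mu'})\le\delta$ too. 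The triangle inequality for statistical distance then yields $|\Gamma(\mu)-\Gamma(\mu')|=|\SD(P^\mu,Q^\mu)-\SD(P^{\mu'},Q^{\mu'})|\le\SD(P^\mu,P^{\mu'})+\SD(Q^\mu,Q^{\mu'})\le 2\delta$, which finishes the plan.

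The main obstacle is the change-of-measure bookkeeping of the last paragraph: one must check that passing from $\mu$ to $\mu'$ perturbs \emph{both} the ``joint'' distribution $P$ and the ``product'' distribution $Q$ by at most $\delta$ each (in particular that $Q$ does not acquire extra slack), which rests precisely on the fact that $\mu'$ only alters how $\rv{T}$ depends on $(\rv{A},\rv{B})$ and leaves the marginal of $(\rv{Z},\rv{A},\rv{B})$ and the conditional $\rv{B}\mid(\rv{Z},\rv{A}_{g(\rv{Z})})$ untouched. Everything else is a routine combination of \cref{fact:SDEx}, Pinsker and Jensen; the reason for detouring through $\mu'$ at all is that carrying $\rv{T}$'s residual dependence on $(\rv{A},\rv{B})$ through directly would require controlling the dependence of $\rv{T}$ and $\rv{B}$ conditioned on $(\rv{Z},\rv{A}_{g(\rv{Z})})$, which is not dominated by $\delta$ in general, whereas decoupling first sidesteps this.
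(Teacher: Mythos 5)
Your proposal is correct and follows essentially the same route as the paper's proof: the paper also introduces the decoupled variable $\rv{T}'$ (your $\mu'$), splits the left-hand side by the triangle inequality into a middle term bounded by $2\sqrt{\epsilon}$ via \cref{fact:SDEx}, Pinsker and Jensen, and two side terms each bounded by $\delta$ via data processing. Your packaging of the two $\delta$-terms as a single change-of-measure bound $\abs{\Gamma(\mu)-\Gamma(\mu')}\leq 2\delta$ is only a cosmetic reorganization of the same argument.
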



\subsection*{Analyzing the first message.}

We start by proving that in expectation, the first message does not create too much dependence between the players' views:
\begin{claim}\label{claim:oneMessage}
	The following statements hold after seeing $\Ac$'s message:
	\begin{enumerate}
		\item $\Ac$'s view remain independent of $\Bc$'s queries: $\MI(\rv{X},\rv{F}(\rv{X});\rv{Y}|\rv{M}_1) = 0$.
		\item The same holds conditioned on Eve's queries: $\MI(\rv{X},\rv{F}(\rv{X});\rv{Y}|\rv{M}_1,\rv{F}(\rv{\cE}_1)) = 0$.
		\item Not much dependence is created between $\Bc$'s view and $\Ac$'s queries:\\ $\MI(\rv{Y},\rv{F}(\rv{Y});\rv{X}|\rv{M}_1) \leq \delta|\rv{M}_1|$.
	\end{enumerate}
\end{claim}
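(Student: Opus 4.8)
The plan is to handle the three items in order of increasing difficulty, using throughout the basic consequence of non‑adaptivity that $\rv{X}=\rv{X}(r_\Ac)$, $\rv{Y}=\rv{Y}(r_\Bc)$, and $r_\Ac,r_\Bc,\rv{F}$ are mutually independent. The observation that drives everything is that each of $\rv{X}$, $\rv{F}(\rv{X})$, $\rv{M}_1$, and---since the first structural property gives $\cE_0=\emptyset$, so that $\cE_1$ is a deterministic function of $\rv{M}_1$---also $\rv{F}(\cE_1)$, is a deterministic function of the pair $(r_\Ac,\rv{F})$, and is therefore independent of $\rv{Y}$. Items 1 and 2 follow at once: $\rv{Y}$ is independent of the whole tuple $\big(\rv{X},\rv{F}(\rv{X}),\rv{M}_1\big)$ (resp.\ of $\big(\rv{X},\rv{F}(\rv{X}),\rv{M}_1,\rv{F}(\cE_1)\big)$), and whenever $\rv{Y}\bot(\rv{A},\rv{C})$ one has $\MI(\rv{A};\rv{Y}\mid\rv{C})=0$ (for every value $c$ of $\rv{C}$ the pair $(\rv{A},\rv{Y})|_{\rv{C}=c}$ is a product distribution); apply this with $\rv{A}=\big(\rv{X},\rv{F}(\rv{X})\big)$ and $\rv{C}=\rv{M}_1$, resp.\ $\rv{C}=\big(\rv{M}_1,\rv{F}(\cE_1)\big)$.

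For Item 3 the conditioning on $\rv{M}_1$ breaks this independence, and this is the substantive part. First I would pass to a one‑sided quantity: since $\rv{X}\bot\big(\rv{Y},\rv{F}(\rv{Y})\big)$ (non‑adaptivity gives $\rv{X}\bot\rv{Y}$, and conditioned on $\rv{Y}$ the answers $\rv{F}(\rv{Y})$ are uniform regardless of $\rv{X}$), expanding $\MI\big(\rv{X},\rv{M}_1;\rv{Y},\rv{F}(\rv{Y})\big)$ by the chain rule in two ways gives
\[
	\MI\big(\rv{Y},\rv{F}(\rv{Y});\rv{X}\mid\rv{M}_1\big)=\MI\big(\rv{M}_1;\rv{Y},\rv{F}(\rv{Y})\mid\rv{X}\big)-\MI\big(\rv{M}_1;\rv{Y},\rv{F}(\rv{Y})\big)\le\MI\big(\rv{M}_1;\rv{Y},\rv{F}(\rv{Y})\mid\rv{X}\big),
\]
so it suffices to bound the right‑hand side by $\delta|\rv{M}_1|$. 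Fix $\rv{X}=x$; then $\rv{M}_1$ is a function of $(\rv{F}(x),r_\Ac)$ alone. Write $\rv{F}(\rv{Y})=\big(\rv{F}(\rv{Y}\cap x),\rv{F}(\rv{Y}\setminus x)\big)$: conditioned on $\rv{Y}$, the block $\rv{F}(\rv{Y}\setminus x)$ is uniform and independent of $(\rv{F}(x),r_\Ac)$ (and of $\rv{F}(\rv{Y}\cap x)$), hence contributes nothing, and one is left with $\ex{y\getsr\rv{Y}}{\MI\big(\rv{M}_1;\rv{F}(x_{T(y)})\big)}$, where $T(y)=\{i\in[\ell]:x_i\in y\}$ is a random index set, independent of $(\rv{F}(x),r_\Ac)$, satisfying $\ppr{}{i\in T(y)}=\ppr{}{x_i\in\rv{Y}}\le\delta$---this last bound is precisely where the light‑query property $\ppr{}{q\in\rv{X}\cup\rv{Y}}\le\delta$ from \cref{lemma:E0} enters.

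The heart of the argument is then a Shearer‑type inequality for mutual information, in the spirit of \cite{ganor2015exponential,rao2015simplified}: since $\rv{F}(x_1),\dots,\rv{F}(x_\ell)$ are i.i.d., for every fixed set $T$ the chain rule combined with \cref{lemma:ICondAdd} (enlarging the conditioning of each term to all earlier coordinates only increases it, because the coordinates are independent) gives $\MI\big(\rv{F}(x_T);\rv{M}_1\big)\le\sum_{i\in T}\MI\big(\rv{F}(x_i);\rv{M}_1\mid\rv{F}(x_{<i})\big)$; averaging over $T=T(y)$, using $\ppr{}{i\in T}\le\delta$ and then the chain rule once more, yields $\ex{y\getsr\rv{Y}}{\MI\big(\rv{F}(x_{T(y)});\rv{M}_1\big)}\le\delta\,\MI\big(\rv{F}(x);\rv{M}_1\big)\le\delta\,\HH(\rv{M}_1)\le\delta|\rv{M}_1|$, and taking expectation over $x\getsr\rv{X}$ finishes Item 3. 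The step I expect to be most delicate is this one: because the partition of the oracle into its ``intersection'' and ``non‑intersection'' parts is itself random, the discarding of $\rv{F}(\rv{Y}\setminus\rv{X})$ must be performed only after conditioning on $\rv{X}$ (and $\rv{Y}$), and the Shearer inequality must be applied with a selection set $T$ that is genuinely independent of the oracle answers---both of which rest on the protocol being non‑adaptive. Items 1 and 2, by contrast, are pure bookkeeping once the ``everything is a function of $(r_\Ac,\rv{F})$'' observation is stated.
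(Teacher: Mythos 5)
Your proposal is correct and follows essentially the same route as the paper's proof: Items 1 and 2 via the observation that $\Ac$'s view (and $\rv{F}(\cE_1)$, since $\cE_0=\emptyset$ makes $\cE_1$ a function of $\rv{M}_1$) is independent of $\rv{Y}$, and Item 3 by first passing to $\MI(\rv{M}_1;\rv{Y},\rv{F}(\rv{Y})\mid\rv{X})$, isolating the intersection block $\rv{F}(\rv{X}_{\rv{T}})$, and then running the Shearer-type argument that uses the independence of the oracle's answers, the independence of the selection set $\rv{T}$ from $(\rv{M}_1,\rv{F})$ given $\rv{X}$, and the bound $\Pr[x_i\in\rv{Y}]\le\delta$ from the no-heavy-query normalization. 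The only differences from the paper (splitting $\rv{F}(\rv{Y})$ into intersection and non-intersection parts after fixing $\rv{X}=x$, rather than stating the intermediate claim $\MI(\rv{Y},\rv{F}(\rv{Y});\rv{X}\mid\rv{M}_1)\le\MI(\rv{M}_1;\rv{F}(\rv{X}_{\rv{T}})\mid\rv{T},\rv{X})$ separately) are purely presentational.
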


\begin{proof}[Proof for \cref{claim:oneMessage}]
The proof of the first item:
\begin{align*}
	0 \leq \MI(\rv{X},\rv{F}(\rv{X});\rv{Y}|\rv{M}_1)
	&
	\leq \MI(\rv{X},\rv{F}(\rv{X}),\rv{M}_1;\rv{Y})
	\tag{Chain rule}
	\\
	&
	= \MI(\rv{X},\rv{F}(\rv{X});\rv{Y})
	\tag{Since $\rv{M}_1$ is a function of $\rv{X},\rv{F}(\rv{X})$}
	\\
	&=0 .\tag{Because $\rv{Y} \bot (\rv{X}, \rv{F}(\rv{X}))$}
	\end{align*}

The proof of the second item:
			\begin{align*}
	0 \leq \MI(\rv{X},\rv{F}(\rv{X});\rv{Y}|\rv{M}_1, \rv{F}(\cE_1))
	&
	\leq \MI(\rv{X},\rv{F}(\rv{X}),\rv{M}_1, \rv{F}(\cE_1);\rv{Y})
	\tag{Chain rule}
	\\
	&
	= \MI(\rv{X},\rv{F}(\rv{X}), \rv{F}(\cE_1);\rv{Y})
	\tag{Since $\rv{M}_1$ is a function of $\rv{X},\rv{F}(\rv{X})$}
	\\
		&
	\leq \MI(\rv{X},\rv{F};\rv{Y})
	\tag{Data processing}
	\\
	&=0 .\tag{Because $\rv{Y} \bot (\rv{X}, \rv{F})$}
	\end{align*}

	To prove the third item, we first show that all the ``secret information'' $\Bc$ has about $\rv{X}$ after seeing $\rv{M}_1$ --- that is, the dependence between his view and $\rv{X}$ given $\rv{M}_1$ ---  comes from the intersection between $\Ac$ and $\Bc$'s sets. 	

	Let $\rv{T} := \set{i \st \rv{X}_i \in \rv{Y}}$ be the indexes of the intersection queries.
	
	\begin{claim}
		\label{claim:firstMintersection}
		$\MI(\rv{Y},\rv{F}(\rv{Y});\rv{X}|\rv{M}_1)\leq\MI(\rv{M}_1;\rv{F}(\rv{X}_\rv{T})|\rv{T},\rv{X})$.
	\end{claim}
\begin{proof}
	\begin{align*}
	\MI(\rv{Y},\rv{F}(\rv{Y});\rv{X}|\rv{M}_1)
	&
	=\MI(\rv{Y},\rv{F}(\rv{Y});\rv{X}|\rv{M}_1)-\MI(\rv{Y},\rv{F}(\rv{Y});\rv{X})
	\tag{Because $\rv{X} \bot (\rv{Y}, \rv{F}(\rv{Y}))$}
	\\
	&
	\leq\MI(\rv{M}_1;\rv{Y},\rv{F}(\rv{Y})|\rv{X})
	\tag{\cref{lemma:ICondAdd}}
	\\
	&\leq
	\MI(\rv{M}_1;\rv{T},\rv{F}(\rv{X}_\rv{T}),\rv{Y},\rv{F}(\rv{Y})|\rv{X)}
	\\
	&=
	\MI(\rv{M}_1;\rv{T},\rv{F}(\rv{X}_\rv{T})|\rv{X})+\MI(\rv{M}_1;\rv{Y},\rv{F}(\rv{Y})|\rv{X},\rv{T},\rv{F}(\rv{X}_\rv{T}))
	\tag{Chain rule}.
\end{align*}
The second term is 0: because $\rv{M}_1$ is a function of $\rv{X}, \rv{F}(\rv{X})$, we have
\begin{align*}
&
	\MI(\rv{M}_1;\rv{Y},\rv{F}(\rv{Y})|\rv{X},\rv{T},\rv{F}(\rv{X}_\rv{T})) \\
	& \leq \MI(\rv{F}(\rv{X});\rv{Y},\rv{F}(\rv{Y})|\rv{X},\rv{T},\rv{F}(\rv{X}_\rv{T})) \tag{Data processing}
	\\
	&=
	\MI(\rv{F}(\rv{X});\rv{Y}|\rv{X},\rv{T},\rv{F}(\rv{X}_\rv{T}))+
	\MI(\rv{F}(\rv{X});\rv{F}(\rv{Y})|\rv{X},\rv{T},\rv{F}(\rv{X}_\rv{T}), \rv{Y})\tag{Chain rule}
	\\
	&\leq
		\MI(\rv{F}(\rv{X}),\rv{F}(\rv{X}_\rv{T});\rv{Y}|\rv{X},\rv{T})+
	\MI(\rv{F}(\rv{X});\rv{F}(\rv{Y})|\rv{X},\rv{T},\rv{F}(\rv{X}_\rv{T}), \rv{Y})\tag{Chain rule}	
	\\
	&=0+\MI(\rv{F}(\rv{X});\rv{F}(\rv{Y})|\rv{X},\rv{T},\rv{F}(\rv{X}_\rv{T}), \rv{Y})\tag{$(\rv{X},\rv{Y},\rv{T}) \bot \rv{F}$}
	\\
	&=\MI(\rv{F}(\rv{X}\setminus\rv{X}_\rv{T});\rv{F}(\rv{Y}\setminus\rv{X}_\rv{T})|\rv{X},\rv{T},\rv{F}(\rv{X}_\rv{T}), \rv{Y})
	\\
	&=0 .\tag{Since $F$ is a random function and $(\rv{X}\setminus\rv{X}_\rv{T})\cap(\rv{Y}\setminus\rv{X}_\rv{T})=\emptyset$}
\end{align*}
Bound the first term:
\begin{align*}
	&
	\MI(\rv{M}_1;\rv{T},\rv{F}(\rv{X}_\rv{T})|\rv{X})
	\\
	&=
	\MI(\rv{M}_1;\rv{T}|\rv{X})+\MI(\rv{M}_1;\rv{F}(\rv{X}_\rv{T})|\rv{T},\rv{X})
	\tag{Chain rule}
	\\
	&\leq
	\MI(\rv{M}_1;\rv{Y}|\rv{X})+\MI(\rv{M}_1;\rv{F}(\rv{X}_\rv{T})|\rv{T},\rv{X})
	\tag{Data processing: $\rv{T}$ is a function of $\rv{Y}$ given $\rv{X}$}
	\\
	&=\MI(\rv{M}_1;\rv{F}(\rv{X}_\rv{T})|\rv{T},\rv{X})
	\tag{$\rv{M}_1 \bot \rv{Y} | \rv{X}$}.
\end{align*}
\end{proof}

Next, we bound the information $\rv{M}_1$ conveys about $\rv{F}(\rv{X}_{\rv{T}})$,
using the fact that every element in $\rv{X}$ is in the intersection only with small probability (less than $\delta$). The proof of the claim is similar to the proof of 
Shearer's inequality and appears in \cref{sec:Appendix}. 

	\begin{claim}\label{claim:deltaM1}
		$\MI(\rv{M}_1;\rv{F}(\rv{X}_\rv{T})|\rv{T},\rv{X})\leq \delta|\rv{M}_1|$.
	\end{claim}

The proof of the third item is complete.
\end{proof}

\subsection*{Analyzing the second message.} 
We now want to show that the second message also does not create much dependence between $\Ac$ and $\Bc$'s views.
As with \cref{claim:oneMessage} for the first message, we first want to show that all the  dependence between $\Ac$'s view and $\Bc$'s queries
comes from $\Bc$'s message, and that this dependence goes through the intersection between $\Ac$ and $\Bc$'s queries and what the players learn about the intersection from the transcript.  This is done by the next claim.
%
Let 
$$\rv{T}_1:=\set{i \st \rv{Y}_i \in \rv{X} \setminus \rv{\cE}_1}.$$
In words, it is the set of the indices of $\Bc$'s queries in the intersection
that were not queried by Eve.
Recall that $\Pi_{\Eve}^{\rv{F}}$ is the distribution of Eve's view, which includes $\rv{M}_1, \rv{M}_2$ and $\rv{F}(\rv{\cE}_1)$.
Let 
$\rv{B}_E=(\rv{M}_1,\rv{Y},\rv{F}(\rv{Y}\cap \cE_1))$. 

\begin{claim}
	\label{claim:M2ReduceToIntersecton}
	\begin{equation*}
\ex{ v_E \getsr \Pi_{\Eve}^{\rv{F}} }{\ISD{\rv{X},\rv{F}(\rv{X})}{\rv{Y}}{v_E}}
\leq
4\ex{b_E\getsr\rv{B}_E}{\ISD{\rv{T}_1,\rv{F}(\rv{Y}_{\rv{T}_1})}{\rv{M}_2}{b_E}}
.
\end{equation*}
\end{claim}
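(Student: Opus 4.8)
Write $v_E=(\rv{M}_1,\rv{M}_2,\rv{F}(\cE_1))$, and note that since $\cE_0=\emptyset$ the set $\cE_1$ is a deterministic function of $\rv{M}_1$; in particular $\rv{Y}\cap\cE_1$, $\cE_1\setminus\rv{Y}$ and $\rv{B}_E=(\rv{M}_1,\rv{Y},\rv{F}(\rv{Y}\cap\cE_1))$ are functions of $(\rv{M}_1,\rv{Y},\rv{F}(\cE_1))$. The plan has three moves: (i) use the fact that after the first message $\Ac$'s view is independent of $\Bc$'s queries to replace $\rv{Y}$ on the left of the information expression by $\rv{M}_2$; (ii) prove a structural ``all dependence goes through the intersection'' statement for the second message; (iii) use it, together with the statistical-distance chain rule, to collapse the conditioning down to $\rv{B}_E$ and the left side down to $(\rv{T}_1,\rv{F}(\rv{Y}_{\rv{T}_1}))$.

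\textbf{Replacing $\rv{Y}$ by $\rv{M}_2$.} By \cref{claim:oneMessage} we have $\MI(\rv{X},\rv{F}(\rv{X});\rv{Y}\mid \rv{M}_1,\rv{F}(\cE_1))=0$, i.e.\ $(\rv{X},\rv{F}(\rv{X}))\bot\rv{Y}$ given $\rv{W}:=(\rv{M}_1,\rv{F}(\cE_1))$. Since $v_E=(\rv{W},\rv{M}_2)$, we may condition on $\rv{W}=w$, apply \cref{cor:SDVarRep} with $\rv{A}=(\rv{X},\rv{F}(\rv{X}))$, $\rv{B}=\rv{Y}$ and $\rv{M}=\rv{M}_2$ (whose hypothesis holds in that world), and average over $w\getsr\rv{W}$, obtaining
\[
\ex{v_E\getsr\Pi_{\Eve}^{\rv{F}}}{\ISD{\rv{X},\rv{F}(\rv{X})}{\rv{Y}}{v_E}}
\;\le\; 2\,\ex{(\rv{M}_1,\rv{F}(\cE_1),\rv{Y})}{\ISD{\rv{X},\rv{F}(\rv{X})}{\rv{M}_2}{\rv{M}_1,\rv{F}(\cE_1),\rv{Y}}} .
\]

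\textbf{The structural fact.} The key point is that
\[
\rv{M}_2\ \bot\ \bigl(\rv{X},\rv{F}(\rv{X}),\rv{F}(\cE_1\setminus\rv{Y})\bigr)\ \Bigm|\ \bigl(\rv{M}_1,\rv{Y},\rv{F}(\rv{Y}\cap\cE_1),\rv{T}_1,\rv{F}(\rv{Y}_{\rv{T}_1})\bigr);
\]
call this property $(\star)$. To see it, recall that the conditional law of $\rv{M}_2$ given all the randomness equals its conditional law given $(\rv{M}_1,\rv{Y},\rv{F}(\rv{Y}))$. Fix the variables on the right of $(\star)$ to consistent values. Then $\rv{Y}$, $\cE_1$ and $\rv{T}_1$ are fixed, so the index set $\rv{J}:=[\ell]\setminus\bigl(\rv{T}_1\cup\set{i:\rv{Y}_i\in\cE_1}\bigr)$ of not-yet-known coordinates of $\rv{Y}$ is fixed, and for each $j\in\rv{J}$ the query $\rv{Y}_j$ lies outside $\rv{X}\cup\cE_1$ (it is not in $\cE_1$ by definition of $\rv{J}$, and not in $\rv{X}$ because the coordinates of $\rv{Y}$ falling in $\rv{X}\setminus\cE_1$ are, by definition, exactly those indexed by $\rv{T}_1$). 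Since $\rv{F}$ is a uniformly random function, $\rv{F}(\rv{Y}_{\rv{J}})$ is then uniform and independent of $(\rv{X},\rv{F}(\rv{X}),\rv{F}(\cE_1\setminus\rv{Y}))$ under this conditioning (these three variables together pin down $\rv{F}$ only on $\rv{X}\cup\cE_1$). As $\rv{F}(\rv{Y})=\bigl(\rv{F}(\rv{Y}\cap\cE_1),\rv{F}(\rv{Y}_{\rv{T}_1}),\rv{F}(\rv{Y}_{\rv{J}})\bigr)$ with the first two parts already fixed, the conditional law of $\rv{M}_2$ is unaffected by $(\rv{X},\rv{F}(\rv{X}),\rv{F}(\cE_1\setminus\rv{Y}))$, which is exactly $(\star)$. (Here one uses that the queries of a single party may be taken distinct.)

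\textbf{Conclusion.} Since $\rv{F}(\cE_1)=(\rv{F}(\rv{Y}\cap\cE_1),\rv{F}(\cE_1\setminus\rv{Y}))$, conditioning on $(\rv{M}_1,\rv{F}(\cE_1),\rv{Y})$ equals conditioning on $(\rv{B}_E,\rv{F}(\cE_1\setminus\rv{Y}))$. Fix $b_E$ and work in that world; put $\rv{V}=(\rv{X},\rv{F}(\rv{X}))$, $\rv{U}=\rv{F}(\cE_1\setminus\rv{Y})$ and $\rv{D}=(\rv{T}_1,\rv{F}(\rv{Y}_{\rv{T}_1}))$, and observe that $\rv{D}$ is a deterministic function of $\rv{V}$. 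By \cref{lemma:SDChainRule} (with $\rv{A}=\rv{V}$, $\rv{C}=\rv{U}$, $\rv{B}=\rv{M}_2$), $\ex{u\getsr\rv{U}}{\ISD{\rv{V}}{\rv{M}_2}{\rv{U}=u}}\le 2\,\ISDC{\rv{V},\rv{U}}{\rv{M}_2}$. By $(\star)$, $(\rv{V},\rv{U})\to\rv{D}\to\rv{M}_2$ is a Markov chain with $\rv{D}$ a function of $(\rv{V},\rv{U})$, so the conditional law of $\rv{M}_2$ given $(\rv{V},\rv{U})$ depends on $(\rv{V},\rv{U})$ only through $\rv{D}$; a one-line computation then gives $\ISDC{\rv{V},\rv{U}}{\rv{M}_2}=\ISDC{\rv{D}}{\rv{M}_2}$. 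Averaging over $b_E\getsr\rv{B}_E$ and plugging into the bound of the second paragraph yields
\[
\ex{v_E\getsr\Pi_{\Eve}^{\rv{F}}}{\ISD{\rv{X},\rv{F}(\rv{X})}{\rv{Y}}{v_E}}
\;\le\; 4\,\ex{b_E\getsr\rv{B}_E}{\ISD{\rv{T}_1,\rv{F}(\rv{Y}_{\rv{T}_1})}{\rv{M}_2}{b_E}},
\]
as claimed. The main obstacle is $(\star)$: it is essential that $\rv{T}_1$ itself (and not only $\rv{F}(\rv{Y}_{\rv{T}_1})$) sits in the conditioning, for it is the knowledge of \emph{which} coordinates of $\rv{Y}$ hit the intersection that freezes the index set $\rv{J}$ of residual $\rv{F}(\rv{Y})$-coordinates and makes those values fresh; once $(\star)$ is in hand, the remaining steps are routine applications of the statistical-distance lemmas of \cref{sec:Preliminaries}.
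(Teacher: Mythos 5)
Your proof is correct and follows essentially the same route as the paper's: the same first application of \cref{cor:SDVarRep} (via item 2 of \cref{claim:oneMessage}) to trade $\rv{Y}$ for $\rv{M}_2$, the same use of \cref{lemma:SDChainRule} to absorb the conditioning on $\rv{F}(\cE_1\setminus\rv{Y})$, and the same conditional-independence structure of the random oracle to collapse everything to $(\rv{T}_1,\rv{F}(\rv{Y}_{\rv{T}_1}))$, with the constant $4=2\times 2$ arising in exactly the same way. The only difference is presentational: you bundle the two conditional-independence facts that the paper applies in separate passes (via \cref{fact:SDRemoveProduct} and \cref{fact:SDEx}) into the single Markov statement $(\star)$ and collapse in one ``sufficient statistic'' step.
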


The proof for the claim appears in \cref{sec:Appendix}.

Now we left to show that on average, $\Bc$'s message cannot convey too much information about the intersection queries and their answers, as we did in \cref{claim:deltaM1} for the first message. Specifically, we want to bound 
\begin{align*}
	\ex{b_E\getsr\rv{B}_E}{\ISD{\rv{T}_1,\rv{F}(\rv{Y}_{\rv{T}_1})}{\rv{M}_2}{b_E}}.	
\end{align*}
It would be easier if $\Bc$ knew \emph{nothing} about the intersection
(i.e.~$\rv{M}_2$ was independent of $\rv{T}_1$ given $\rv{M}_1$). But this is not the case, as $\Bc$ can learn some info from $\Ac$'s message.
However, from \cref{claim:oneMessage}, we know that he does not learn a lot, and his message does not strongly depend on the intersection.
Formally, 
%

\begin{claim}
\label{clm:2ndEnd}
	\begin{align*}
	\ex{b_E\getsr\rv{B}_E}{\ISD{\rv{T}_1,\rv{F}(\rv{Y}_{\rv{T}_1})}{\rv{M}_2}{b_E}}
	 \leq
	6\sqrt{\delta(|\rv{M}_1|+|\rv{M}_2|+5)}.
	\end{align*}
\end{claim}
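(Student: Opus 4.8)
The plan is to obtain \cref{clm:2ndEnd} as a single application of \cref{lemma:Hybrid}, and then to bound the two quantities it takes as input. Since in \cref{clm:2ndEnd} the symbol $\delta$ denotes the heaviness threshold, I will write $\epsilon_H,\delta_H$ for the quantities named $\epsilon,\delta$ in \cref{lemma:Hybrid}. Take $n=\ell$ and, for $i\in[\ell]$, set $\rv{A}_i:=\rv{F}(\rv{Y}_i)$; set $\rv{T}:=\rv{T}_1=\set{i:\rv{Y}_i\in\rv{X}\setminus\cE_1}$, $\rv{B}:=\rv{M}_2$, and $\rv{Z}:=(\rv{M}_1,\rv{Y})$. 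Because we arranged $\cE_0=\emptyset$, the set $\cE_1$ is a deterministic function of $\rv{M}_1$, so we may define $g(m_1,y):=\set{i\in[\ell]:y_i\in\cE_1(m_1)}$. Then $(\rv{A}_{\rv{T}},\rv{T})=(\rv{F}(\rv{Y}_{\rv{T}_1}),\rv{T}_1)$, and --- assuming, as we may, that each party's queries are distinct --- the pair $(z,a_{g(z)})$ is distributed exactly as $\rv{B}_E=(\rv{M}_1,\rv{Y},\rv{F}(\rv{Y}\cap\cE_1))$. Hence the left-hand side of the conclusion of \cref{lemma:Hybrid} is precisely $\ex{b_E\getsr\rv{B}_E}{\ISD{\rv{T}_1,\rv{F}(\rv{Y}_{\rv{T}_1})}{\rv{M}_2}{b_E}}$, the left-hand side of \cref{clm:2ndEnd}. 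It therefore remains to verify $2\sqrt{\epsilon_H}+2\delta_H\le 6\sqrt{\delta(|\rv{M}_1|+|\rv{M}_2|+5)}$.

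\textbf{Bounding $\delta_H$.} Given $(\rv{M}_1,\rv{Y})$, the message $\rv{M}_2$ is a (possibly randomized) function of $\rv{F}(\rv{Y})$ whose internal randomness is independent of $\rv{X}$, and $\rv{T}_1$ is a deterministic function of $\rv{X}$. Two applications of the data-processing inequality for statistical distance (\cref{fact:DataProcessingSD}) give
\[
\delta_H=\ex{(m_1,y)\getsr(\rv{M}_1,\rv{Y})}{\ISD{\rv{F}(\rv{Y}),\rv{M}_2}{\rv{T}_1}{m_1,y}}\le \ISD{\rv{F}(\rv{Y})}{\rv{X}}{\rv{M}_1,\rv{Y}}.
\]
Applying Pinsker's inequality (\cref{fact:Pinsker}) inside the expectation over $(\rv{M}_1,\rv{Y})$, then Jensen's inequality (\cref{fact:Jensen}) and the chain rule for information, and finally the third item of \cref{claim:oneMessage}, yields
\[
\delta_H\le 2\sqrt{\MI(\rv{F}(\rv{Y});\rv{X}\mid\rv{M}_1,\rv{Y})}\le 2\sqrt{\MI(\rv{Y},\rv{F}(\rv{Y});\rv{X}\mid\rv{M}_1)}\le 2\sqrt{\delta|\rv{M}_1|}.
\]

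\textbf{Bounding $\epsilon_H$.} Under the substitution, $\epsilon_H$ equals a per-coordinate mutual information, namely the expectation (over $(m_1,y)$ and over a random index $t\in\rv{T}_1$) of $\MI(\rv{F}(\rv{Y}_t);\rv{M}_2\mid\rv{F}(\rv{Y}\cap\cE_1),\rv{M}_1{=}m_1,\rv{Y}{=}y)$. I plan to show $\epsilon_H\le\delta(|\rv{M}_2|+4)$ by a Shearer-type argument in the spirit of \cref{claim:deltaM1}: since $\rv{M}_2$ carries at most $|\rv{M}_2|$ bits, after chain-rule telescoping of $\MI(\rv{F}(\rv{Y}_{\rv{T}_1});\rv{M}_2\mid\rv{T}_1,\dots)$ it is enough to know that, given $\rv{M}_1$, $\rv{Y}$, and the oracle answers $\rv{F}(\rv{Y}\cap\cE_1)$ already held by Eve, each single query $\rv{Y}_i\notin\cE_1$ lies in $\rv{X}$ (equivalently in $\rv{T}_1$) with probability at most $\delta$; the residual entropy of the indicator of that event is absorbed via \cref{lemma:RVToIndicator}, which is the source of the additive constant. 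That $\rv{Y}_i\notin\cE_1$ has posterior probability $<\delta$ of lying in $\rv{X}\cup\rv{Y}$ is essentially the definition of $\rv{Y}_i$ not being heavy after the first round; the point requiring care is that heaviness conditions only on $\rv{M}_1$, whereas here we also condition on $\rv{Y}$ and on $\rv{F}(\rv{Y}\cap\cE_1)$. Conditioning on $\rv{Y}$ is harmless because $\rv{X}\perp\rv{Y}\mid\rv{M}_1$ ($\rv{M}_1$ is a function of $(\rv{X},\rv{F}(\rv{X}))$, and $\rv{Y}\perp(\rv{X},\rv{F}(\rv{X}))$ for a non-adaptive protocol), while conditioning on the handful of revealed oracle answers costs only a lower-order term, absorbed into the ``$+5$''.

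\textbf{Combining, and the main obstacle.} Feeding the two bounds into \cref{lemma:Hybrid} and using the Cauchy--Schwarz inequality $a\sqrt{p}+b\sqrt{q}\le\sqrt{(a^2+b^2)(p+q)}$,
\begin{align*}
\ex{b_E\getsr\rv{B}_E}{\ISD{\rv{T}_1,\rv{F}(\rv{Y}_{\rv{T}_1})}{\rv{M}_2}{b_E}}
&\le 2\sqrt{\epsilon_H}+2\delta_H
\le 2\sqrt{\delta(|\rv{M}_2|+4)}+4\sqrt{\delta|\rv{M}_1|}\\
&\le 2\sqrt{5}\cdot\sqrt{\delta(|\rv{M}_1|+|\rv{M}_2|+4)}
\le 6\sqrt{\delta(|\rv{M}_1|+|\rv{M}_2|+5)},
\end{align*}
which is exactly \cref{clm:2ndEnd}. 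I expect the main obstacle to be the bound on $\epsilon_H$: executing the \cref{claim:deltaM1}-style Shearer telescoping in the presence of the extra conditioning on Eve's view, and in particular verifying that the ``each query lies in the still-hidden intersection with probability $<\delta$'' property persists under that conditioning (which is ultimately what forces the small slack in the statement), is the delicate step; the bound on $\delta_H$ and the bookkeeping with constants are routine by comparison.
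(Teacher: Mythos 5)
Your overall architecture is the same as the paper's: the same instantiation of \cref{lemma:Hybrid} with $\rv{A}=\rv{F}(\rv{Y})$, $\rv{T}=\rv{T}_1$, $\rv{B}=\rv{M}_2$, $\rv{Z}=(\rv{M}_1,\rv{Y})$ and $g$ picking out the indices of $\rv{Y}\cap\cE_1$, and your bound $\delta_H\le 2\sqrt{\delta|\rv{M}_1|}$ is exactly the paper's first item (Pinsker, the fact that $\rv{M}_2$ and $\rv{T}_1$ are functions of the conditioned variables, the chain rule, and \cref{claim:oneMessage}). The gap is in your bound on $\epsilon_H$. You assert $\epsilon_H\le\delta(|\rv{M}_2|+4)$ and claim that the extra conditioning ``costs only a lower-order term, absorbed into the $+5$.'' That is not the right accounting, and it misses the main technical difficulty of this step. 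The issue is not the posterior probability that $y_i$ lies in the hidden intersection (which is indeed at most $\delta$ given $(m_1,y)$, by non-heaviness plus $\rv{X}\bot\rv{Y}\mid\rv{M}_1$, exactly as you say); the issue is that the Shearer-style telescoping of $\MI(\rv{F}(\rv{Y}_{\rv{T}_1});\rv{M}_2\mid\cdots)$ requires the coordinates of $\rv{F}(\rv{Y})$ to be (nearly) independent under the conditioning, and they are not: $\rv{M}_1$ is a function of $\rv{F}(\rv{X})$, and $\rv{X}$ may overlap $\rv{Y}$, so conditioning on $\rv{M}_1$ (and on $\rv{F}(\cE_1\cap\rv{Y})$) correlates the answers $\rv{F}(y_1),\ldots,\rv{F}(y_\ell)$ --- e.g.\ Alice could send the XOR of her answers. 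The telescoping therefore picks up an additional term $\sum_i\MI(\rv{F}(y_i);\rv{F}(y_{<i})\mid\rv{M}_1,y,\rv{F}(\cE_1\cap y))$, which is of order $\HH(\rv{M}_1)+5\le|\rv{M}_1|+5$, not a constant. This is precisely what the paper's \cref{claim:deltaM2} (via \cref{lemma:ICondAdd}) isolates and what \cref{claim:depEve} (via \cref{lemma:IChainRuleIndi} and \cref{lemma:RVToIndicator}) bounds; the correct conclusion is $\epsilon_H\le\delta(|\rv{M}_1|+|\rv{M}_2|+5)$, not $\delta(|\rv{M}_2|+4)$.

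Fortunately your final arithmetic survives the correction: with the correct $\epsilon_H$ one gets
\begin{equation*}
2\sqrt{\epsilon_H}+2\delta_H\le 2\sqrt{\delta(|\rv{M}_1|+|\rv{M}_2|+5)}+4\sqrt{\delta|\rv{M}_1|}\le 6\sqrt{\delta(|\rv{M}_1|+|\rv{M}_2|+5)},
\end{equation*}
which is the claimed bound. So the plan is salvageable, but as written the $\epsilon_H$ step would not go through: you need the extra decomposition that separates the ``fresh'' information $\rv{M}_2$ reveals (bounded by $\delta|\rv{M}_2|$) from the pre-existing correlation among Bob's oracle answers induced by $\rv{M}_1$ and Eve's queries (bounded by $\delta(|\rv{M}_1|+5)$).
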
 

The two claims above complete the proof of \cref{claim:mainAsSD}.

\begin{proof}
By definition of $\rv{B}_E$,
		\begin{align*}
			\E_{b_E\getsr\rv{B}_E} & \ISD{\rv{T}_1,\rv{F}(\rv{Y}_{\rv{T}_1})}{\rv{M}_2}{b_E}
			\begin{aligned}
			=\E_{b_E\getsr\rv{B}_E}\ISD{\rv{T}_1,\rv{F}(\rv{Y}_{\rv{T}_1})}{\rv{M}_2}{m_1,y,f(e_1\cap y)} .
			\end{aligned} 
	\end{align*}		
By \cref{lemma:Hybrid},  it is enough to show:
	\begin{enumerate}[(1)]
		\item \begin{align*}
\ex{m_1,y\getsr\rv{M}_1,\rv{Y}}{\ISD{\rv{F}(\rv{Y}),\rv{M}_2}{\rv{T}_1}{m_1,y}}
\leq2\sqrt{\delta |\rv{M}_1|}.
		\end{align*}

		\item $\ex{m_1,y\getsr \rv{M}_1,\rv{Y}}{\ex{t\getsr \rv{T}_1|m_1,y}{\MI(\rv{F}(y_t);\rv{M}_2|m_1,y,\rv{F}(e_1 \cap y))}} \leq \delta(|\rv{M}_1|+|\rv{M}_2|+5)$.
	\end{enumerate}

	The proof of the first item is (which is similar to the analysis of the first message):
	\begin{align*}
	&
	\ex{m_1,y\getsr\rv{M}_1,\rv{Y}}{\ISD{\rv{F}(\rv{Y}),\rv{M}_2}{\rv{T}_1}{m_1,y}}
	\\
	&\leq
	 2\sqrt{\MI(\rv{F}(\rv{Y}),\rv{M}_2;\rv{T}_1|\rv{M}_1,\rv{Y})}\tag{\cref{fact:Pinsker}}
	 \\
	&=
	2\sqrt{\MI(\rv{F}(\rv{Y});\rv{T}_1|\rv{M}_1,\rv{Y})}\tag{$\rv{M}_2$ is a function of $\rv{Y},\rv{F}(\rv{Y}),\rv{M}_1$}
	\\
	&\leq
	2\sqrt{\MI(\rv{F}(\rv{Y});\rv{X}|\rv{M}_1,\rv{Y})}\tag{$\rv{T}_1$ is a function of $\rv{X},\rv{Y}$ and $\rv{M}_1$}
	\\
	&\leq
	 2\sqrt{\MI(\rv{Y},\rv{F}(\rv{Y});\rv{X}|\rv{M}_1)}\tag{Chain rule}
	 \\
	 &\leq
	  2\sqrt{\delta |\rv{M}_1|}\tag{\cref{claim:oneMessage}}.
	\end{align*}
	To bound the second item we use a similar argument to the proof of \cref{claim:deltaM1}.
	The proof is more complicated here, because when we condition on $\rv{M}_1$ and on Eve's queries, the answers of the oracle $\rv{F}$ are no longer independent of each other (e.g., $\Ac$ could send the XOR of the answers to her queries).
	Nevertheless, because not much information was revealed about the oracle's answers, not much dependence is created between them. 
	The proof consists of two steps. First, we show that this term is bounded by $\delta\size{\rv{M}_2}$, plus the dependency between the answers, created by the first message and Eve's queries (\cref{claim:deltaM2}). Next, we bound this dependency (\cref{claim:depEve}).
	\begin{claim}\label{claim:deltaM2}
		\begin{align*}
		&\ex{m_1,y\getsr \rv{M}_1,\rv{Y}}{\ex{t\getsr \rv{T}_1|m_1,y}{\MI(\rv{F}(y_t);\rv{M}_2|m_1,y,\rv{F}(e_1 \cap y))}}\\ &\leq \delta|\rv{M}_2|+\delta\ex{y\getsr \rv{Y}} {\sum_{i}\MI(\rv{F}(y_i);\rv{F}(y_{<i})|\rv{M}_1,y,\rv{F}(\rv{\cE}_1\cap y))}.
		\end{align*}
	\end{claim}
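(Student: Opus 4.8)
The plan is to fix the pair $(m_1,y)$ and run a Shearer-type argument of the same shape as the proof of \cref{claim:deltaM1}, but now keeping track of the dependence that conditioning on $\rv{M}_1$ and on Eve's queries creates among the oracle answers $\rv{F}(y_1),\rv{F}(y_2),\dots$; that dependence is exactly the quantity appearing as the second term on the right-hand side. Fix $m_1$ and $y$ throughout; then $\cE_1$ equals a fixed set $e_1=\cE_1(m_1)$ (recall $\cE_0=\emptyset$), and write $C$ for the conditioning on $(\rv{M}_1=m_1,\rv{Y}=y,\rv{F}(e_1\cap y))$, where $\rv{F}(e_1\cap y)$ is kept as a random variable, so that $\MI(\cdot;\cdot\mid C)$ is a conditional mutual information (averaged over the value of $\rv{F}(e_1\cap y)$).

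First I would bound the sampling weights. For an index $i$, let $q_i:=\Pr_{t\getsr\rv{T}_1|m_1,y}[t=i]$ (with the convention used in \cref{lemma:Hybrid} for the average over $t$); clearly $q_i\le\Pr[i\in\rv{T}_1\mid \rv{M}_1=m_1,\rv{Y}=y]$. Since $\rv{Y}$ is independent of $(\rv{X},\rv{F})$ and hence of $\rv{M}_1$, this probability equals $\Pr[y_i\in\rv{X}\setminus e_1\mid \rv{M}_1=m_1]$, which is $0$ when $y_i\in e_1$ and otherwise is at most $\Pr[y_i\in\rv{X}\cup\rv{Y}\mid\rv{M}_1=m_1]<\delta$ by the definition of $\cE_1$. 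So $q_i<\delta$ for every $i$, and therefore
\begin{align*}
\ex{t\getsr\rv{T}_1|m_1,y}{\MI(\rv{F}(y_t);\rv{M}_2\mid C)}
=\sum_i q_i\,\MI(\rv{F}(y_i);\rv{M}_2\mid C)
\le\delta\sum_i\MI(\rv{F}(y_i);\rv{M}_2\mid C).
\end{align*}

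Next I would bound $\sum_i\MI(\rv{F}(y_i);\rv{M}_2\mid C)$ by the Shearer-style chain-rule computation. Writing $\rv{W}_i:=\rv{F}(y_i)$ (for $i$ with $y_i\in e_1$ this is constant given $C$, hence harmless), the chain rule (\cref{fact:IChainRule}) gives $\MI(\rv{M}_2;\rv{W}_1,\rv{W}_2,\dots\mid C)=\sum_i\MI(\rv{M}_2;\rv{W}_i\mid\rv{W}_{<i},C)$, while for each $i$,
\begin{align*}
\MI(\rv{M}_2;\rv{W}_i\mid\rv{W}_{<i},C)
=\MI(\rv{M}_2,\rv{W}_{<i};\rv{W}_i\mid C)-\MI(\rv{W}_{<i};\rv{W}_i\mid C)
\ge\MI(\rv{M}_2;\rv{W}_i\mid C)-\MI(\rv{F}(y_i);\rv{F}(y_{<i})\mid C),
\end{align*}
using monotonicity of mutual information (\cref{fact:DataProcessing}) for the inequality. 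Summing over $i$, rearranging, and using $\MI(\rv{M}_2;\rv{W}_1,\dots\mid C)\le\HH(\rv{M}_2)\le|\rv{M}_2|$ (\cref{fact:IBounds}) yields $\sum_i\MI(\rv{F}(y_i);\rv{M}_2\mid C)\le|\rv{M}_2|+\sum_i\MI(\rv{F}(y_i);\rv{F}(y_{<i})\mid C)$. Combining with the previous display,
\begin{align*}
\ex{t\getsr\rv{T}_1|m_1,y}{\MI(\rv{F}(y_t);\rv{M}_2\mid C)}
\le\delta|\rv{M}_2|+\delta\sum_i\MI(\rv{F}(y_i);\rv{F}(y_{<i})\mid C),
\end{align*}
and taking $\ex{m_1,y\getsr\rv{M}_1,\rv{Y}}{\cdot}$ and re-grouping the expectations (which turns the second term into the $\ex{y}{\cdot}$-form stated in the claim, since $\rv{Y}\bot\rv{M}_1$) finishes the proof.

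I expect the only genuinely delicate point to be the bound $q_i<\delta$: this is where the heaviness threshold defining $\cE_1$ and the independence $\rv{Y}\bot(\rv{X},\rv{F})$ enter, and one must be careful that conditioning on $\rv{Y}=y$ does not change the posterior probability that a fixed query lies in $\rv{X}$. Everything else is the standard Shearer-for-mutual-information manipulation, whose single new feature relative to \cref{claim:deltaM1} is that the $\rv{F}(y_i)$ are no longer independent given $C$ — a loss that is absorbed exactly by the term $\sum_i\MI(\rv{F}(y_i);\rv{F}(y_{<i})\mid C)$, to be bounded separately in \cref{claim:depEve}.
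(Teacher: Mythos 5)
There is a genuine gap at the very first step of your argument. In this paper the quantity $\ex{t\getsr \rv{T}_1|m_1,y}{\MI(\rv{F}(y_t);\rv{M}_2\mid\cdots)}$ is an expectation over the random \emph{set} $\rv{T}_1=\set{i \st \rv{Y}_i\in\rv{X}\setminus\cE_1}$, and $\rv{F}(y_t)$ is the whole subvector of answers indexed by the set $t$ — this is forced by the way the claim is consumed in \cref{lemma:Hybrid}, whose $\epsilon$ is exactly of this form with $\rv{A}_\rv{T}$ a subvector. Your identity $\ex{t}{\MI(\rv{F}(y_t);\rv{M}_2\mid C)}=\sum_i q_i\,\MI(\rv{F}(y_i);\rv{M}_2\mid C)$ therefore does not hold: mutual information is not additive over coordinates without the chain-rule conditioning, and the joint information can strictly exceed the sum of the marginal ones (if $\rv{M}_2$ revealed the XOR of two conditionally independent answer bits in $t$, each single-coordinate term would be $0$ while $\MI(\rv{F}(y_t);\rv{M}_2\mid C)=1$). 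So your argument bounds a smaller quantity than the claim's left-hand side, and the synergy case is precisely the one an eavesdropper-analysis cannot afford to ignore.

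The paper's proof confronts exactly this point: it applies the chain rule \emph{inside} the expectation to get $\sum_{i\in t}\MI(\rv{F}(y_i);\rv{M}_2\mid C,\rv{F}(y_{t,<i}))$, and then uses \cref{lemma:ICondAdd} to trade the $t$-dependent conditioning $\rv{F}(y_{t,<i})$ for the fixed conditioning $\rv{F}(y_{<i})$, at the cost of the correction term $\MI(\rv{F}(y_i);\rv{F}(y_{<i})\mid C)$; only then does linearity give the weights $\pr{i\in\rv{T}_1\mid m_1,y}\le\delta$ (your justification of that bound, via the heaviness threshold and $\rv{Y}\bot(\rv{X},\rv{F})$, is fine and matches the paper's), and the first term collapses to $\delta\,\MI(\rv{F}(y);\rv{M}_2\mid C)\le\delta|\rv{M}_2|$ by re-assembling the chain rule. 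Note that your Shearer-style step, while a correct inequality, produces the correction term for a different reason (dropping the conditioning on $\rv{W}_{<i}$ altogether rather than swapping $\rv{F}(y_{t,<i})$ for $\rv{F}(y_{<i})$), so arriving at the same right-hand side is somewhat coincidental; the chain from the claim's left-hand side to your first display is where the proof breaks.
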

The proof for \cref{claim:deltaM2} is similar to the proof of \cref{claim:deltaM1} and appears in \cref{sec:Appendix}.

	\begin{claim}\label{claim:depEve}
	\begin{align*}
\ex{y\getsr \rv{Y}}{ \sum_{i}\MI(\rv{F}(y_i);\rv{F}(y_{<i})|\rv{M}_1,y,\rv{F}(\rv{\cE}_1\cap y))}\leq \size{\rv{M}_1}+5.
	\end{align*}
\end{claim}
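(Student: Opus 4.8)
Write $\rv{G}_i \eqdef \rv{F}(y_i)$ for $i \in [\ell]$ and $\rv{G}_{[\ell]} \eqdef (\rv{G}_1,\dots,\rv{G}_\ell)$. The plan is to fix the value $y=(y_1,\dots,y_\ell)$ of $\rv{Y}$ over which the outer expectation ranges and bound the inner sum by $\size{\rv{M}_1}$ for every such $y$ (averaging over $y$ then gives the claim, with room to spare). We may assume w.l.o.g.\ that $\Bc$'s queries are distinct, and recall that since $\cE_0=\emptyset$ the set $\cE_1$ is a deterministic function of $\rv{M}_1$. Since $\rv{Y}$ is independent of $(\rv{F},\rv{X})$, hence of $\rv{M}_1$ and $\cE_1$, conditioning on $\rv{Y}=y$ does not change the joint law of $(\rv{F},\rv{M}_1,\cE_1)$; in that conditioned space the $\rv{G}_i$ are i.i.d.\ uniform on $\zn$. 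Writing $\rv{W}=(\rv{M}_1,\rv{F}(\cE_1\cap y))$, the quantity to bound is the total correlation of $\rv{G}_1,\dots,\rv{G}_\ell$ given $\rv{W}$, and by the chain rule for entropy (\cref{fact:HChainRule}) it equals $\sum_i \MI(\rv{G}_i;\rv{G}_{<i}\mid\rv{W}) = \sum_i \HH(\rv{G}_i\mid\rv{W}) - \HH(\rv{G}_{[\ell]}\mid\rv{W})$, which in particular does not depend on the ordering of the $\rv{G}_i$.

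Next I would freeze the value $m_1$ of $\rv{M}_1$. Then $\cE_1$ is determined and $\rv{F}(\cE_1\cap y)=(\rv{G}_j)_{j\in J}$ for the fixed index set $J=J(m_1)\eqdef\set{j\in[\ell]\colon y_j\in\cE_1(m_1)}$. In the entropy formula above, the terms with $i\in J$ vanish since $\HH(\rv{G}_i\mid\rv{M}_1=m_1,(\rv{G}_j)_{j\in J})=0$, and the chain rule gives $\HH(\rv{G}_{[\ell]}\mid\rv{M}_1=m_1,(\rv{G}_j)_{j\in J}) = \HH(\rv{G}_{[\ell]}\mid\rv{M}_1=m_1) - \HH((\rv{G}_j)_{j\in J}\mid\rv{M}_1=m_1)$. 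Hence, using that conditioning does not increase entropy and that joint entropy is subadditive,
\begin{align*}
	\sum_i \MI(\rv{G}_i;\rv{G}_{<i}\mid\rv{M}_1=m_1,\rv{F}(\cE_1\cap y))
	&\le \sum_{i\notin J}\HH(\rv{G}_i\mid\rv{M}_1=m_1) - \HH(\rv{G}_{[\ell]}\mid\rv{M}_1=m_1) + \HH\big((\rv{G}_j)_{j\in J}\mid\rv{M}_1=m_1\big)\\
	&\le \sum_{i=1}^{\ell}\HH(\rv{G}_i\mid\rv{M}_1=m_1) - \HH(\rv{G}_{[\ell]}\mid\rv{M}_1=m_1).
\end{align*}
In words, dropping the conditioning on the heavy answers $\rv{F}(\cE_1\cap y)$ only costs us their marginal entropies, which are themselves subtracted back off, so the bound collapses to the total correlation of the \emph{full} tuple given merely $m_1$.

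Finally I would average over $m_1\getsr\rv{M}_1$ and invoke independence of the $\rv{G}_i$: the right‑hand side becomes $\sum_i\HH(\rv{G}_i\mid\rv{M}_1) - \HH(\rv{G}_{[\ell]}\mid\rv{M}_1) \le \sum_i\HH(\rv{G}_i) - \HH(\rv{G}_{[\ell]}\mid\rv{M}_1) = \HH(\rv{G}_{[\ell]}) - \HH(\rv{G}_{[\ell]}\mid\rv{M}_1) = \MI(\rv{G}_{[\ell]};\rv{M}_1) \le \HH(\rv{M}_1) \le \size{\rv{M}_1}$, by \cref{fact:IBounds}. Thus the inner sum is at most $\size{\rv{M}_1}$ for every (distinct‑query) $y$, and taking the expectation over $y\getsr\rv{Y}$ yields $\ex{y\getsr\rv{Y}}{\sum_i\MI(\rv{F}(y_i);\rv{F}(y_{<i})\mid\rv{M}_1,y,\rv{F}(\cE_1\cap y))}\le\size{\rv{M}_1}\le\size{\rv{M}_1}+5$. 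The one delicate point — and where I expect the bookkeeping to be most fiddly — is the conditioning on $\rv{F}(\cE_1\cap y)$: this is a sub‑tuple of $\rv{G}_1,\dots,\rv{G}_\ell$ whose very identity is a function of $\rv{M}_1$, so it must be handled \emph{after} freezing $\rv{M}_1$, at which point subadditivity of entropy lets these already‑learned answers be absorbed for free; independence of the $\rv{G}_i$ (so that the prior total correlation is $0$) is what makes the final bound purely $\MI(\rv{G}_{[\ell]};\rv{M}_1)$. The additive $+5$ slack comfortably absorbs any degeneracies (e.g.\ the negligible event that $\Bc$'s queries are not distinct, or minor losses from working with variable‑length messages).
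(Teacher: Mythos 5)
Your proof is correct, and it takes a genuinely different route from the paper's. The paper decomposes the conditioning on $\rv{M}_1$ via the per-message indicator variables $\rv{J}_m$ (\cref{lemma:IChainRuleIndi}), kills the first resulting term by the independence of the oracle answers given a \emph{fixed} heavy-query set $\cE(m)$, and then pays $\sum_m \HH(\rv{J}_m)\leq \HH(\rv{M}_1)+5$ via \cref{lemma:RVToIndicator} --- which is exactly where the $+5$ in the statement comes from. You instead recognize the inner sum as a conditional total correlation, write it as $\sum_i\HH(\rv{G}_i\mid\rv{W})-\HH(\rv{G}_{[\ell]}\mid\rv{W})$ via the entropy chain rule, absorb the conditioning on $\rv{F}(\cE_1\cap y)$ (after freezing $m_1$, so that the index set $J(m_1)$ is fixed) using subadditivity, and land on $\MI(\rv{F}(y);\rv{M}_1)\leq\HH(\rv{M}_1)\leq\size{\rv{M}_1}$, since the $\rv{G}_i$ are independent a priori. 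Both arguments hinge on the same two facts --- $\cE_1$ is a function of $\rv{M}_1$ alone (because $\cE_0=\emptyset$), and the $\rv{F}(y_i)$ are i.i.d.\ given $\rv{Y}=y$ --- but your version avoids the indicator bookkeeping entirely and yields the slightly sharper bound $\size{\rv{M}_1}$ with no additive slack; the paper's indicator lemma is more modular and is reused in spirit elsewhere. Your implicit assumption that $\Bc$'s queries are distinct is also made (silently) by the paper's proof, and is harmless since repeating a query gains the protocol nothing.
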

\begin{proof}
	For every $m \in \Supp(\rv{M}_1)$, let $\cE(m)$ be the set of queries Eve asks after seeing the message $m$. 
	By \cref{lemma:IChainRuleIndi}
	(recall that $\rv{J}_m$ is the indicator for the event $\rv{M}=m$),
	\begin{align*}
	&\ex{y\getsr \rv{Y}}{ \sum_{i}\MI(\rv{F}(y_i);\rv{F}(y_{<i})|\rv{M}_1,y,\rv{F}(\rv{\cE}_1\cap y))}\\
	&\begin{aligned}
		\leq\E_{y\getsr \rv{Y}} \sum_{i}\sum_{m \in \rv{M}_1}\big[\MI(\rv{F}(y_i); \rv{F}(y_{<i})|y,\rv{F}(\rv{\cE}(m) \cap y))\\
			+\MI(\rv{F}(y_i);\rv{J}_m|y,\rv{F}(\rv{\cE}(m) \cap y), \rv{F}(y_{<i}))\big]
	\end{aligned}\tag{\cref{lemma:IChainRuleIndi}}
	\end{align*}
For every $m,y,i$, 
by the structure of $\rv{F}$, and since $\rv{F}(\rv{\cE}(m) \cap y)$ is a fixed set,
we have \\$\MI(\rv{F}(y_i); \rv{F}(y_{<i})|y,\rv{F}(\rv{\cE}(m) \cap y))=0$.
Thus, 
	\begin{align*}
	&
\begin{aligned}
\leq\E_{y\getsr \rv{Y}} \sum_{i}\sum_{m \in \rv{M}_1}\big[\MI(\rv{F}(y_i); \rv{F}(y_{<i})|y,\rv{F}(\rv{\cE}(m) \cap y))\\
+\MI(\rv{F}(y_i);\rv{J}_m|y,\rv{F}(\rv{\cE}(m) \cap y), \rv{F}(y_{<i}))\big]
\end{aligned}\\
	&=\ex{y\getsr \rv{Y}}{ \sum_{i}\sum_{m \in \rv{M}_1}\MI(\rv{F}(y_i);\rv{J}_m|y,\rv{F}(\rv{\cE}(m) \cap y), \rv{F}(y_{<i}))}\\
	&=\ex{y\getsr \rv{Y}}{\sum_{m \in \rv{M}_1}\MI(\rv{F}(y);\rv{J}_m|y,\rv{F}(\rv{\cE}(m) \cap y))}\tag{Chain rule}\\
	&\leq\sum_{m \in \rv{M}_1}\HH(\rv{J}_m). \tag{\cref{fact:IBounds}} 
	\end{align*}
	There is at most one $m'$ such that $\pr{\rv{M}_1=m'}\geq 1/2$, hence,
	\begin{align*}
	&\sum_{m \in \rv{M}_1}\HH(\rv{J}_m)\\
	&\leq 1+\sum_{m \in \rv{M}_1}\pr{\rv{M}_1=m}\left(- \log\left(\pr{\rv{M}_1=m}\right)+4\right)
	\tag{\cref{lemma:RVToIndicator}}\\
	&=\HH(\rv{M}_1)+5.
	\end{align*}
\end{proof}

The proof of \cref{clm:2ndEnd} is complete.
\end{proof}

\subsection{Remarks}

\paragraph{Adaptive Protocols.}
While we believe that the eavesdropper  $\Eve$  we defined above should 
allow us to prove lower bounds for every non-adaptive protocol, 
$\Eve$ will not work for adaptive protocol, even if she can choose the sets adaptively as well. \cref{proto:Adaptive} is an example of a one-message protocol with only $O(\log(\ell))$ communication, but without any heavy query (for every $\delta > 1/\ell$). 
Specifically, $\Eve$ will not make any query,
and can not, therefore, break the protocol.
Notice, however, that every one-message protocol can be broken trivially by simulating $\Bc$, so this protocol is not secure.

\begin{protocol}\label{proto:Adaptive}~
	\begin{description}
		\item[Parameters:]$n$, $\ell=2^{n/2}$
		\item[Common functions:]$f,g:\set{0,1}^n\rightarrow\set{0,1}^n$
		\begin{enumerate}
			\item $\Ac$ choses a random string $x \in \set{0,1}^n$ and queries $x, f(x),...,f^{\ell-1}(x)$ and $g(f^{i-1}(x))$ for a random index $i \in [\ell]$.
			\item $\Bc$ choses a random string $y \in \set{0,1}^n$ and queries $y, f(y),...,f^{\ell-1}(y)$ and\\ $g(y),...,g(f^{\ell-1}(y))$.
			\item $\Ac$ sends $M_1 = g(f^{i-1}(x))$ to $\Bc$, and outputs $f^{i-1}(x)$.
			\item If there is $j\in [\ell]$ so that $g(f^{j-1}(y))=M_1$
			then $\Bc$ outputs $f^{j-1}(y)$. Otherwise, $\Bc$ aborts.
		\end{enumerate}
		
	\end{description}.
\end{protocol}

\paragraph{Constant Rounds Protocols}
We failed to continue the proof for 
multi-message protocol.
The main reason is that we were not able to deal with the dependency caused by $\Eve$'s queries. In two-message protocol, Eve's only asks queries after the first message, which depends only on $\Ac$'s view. We show here that conditioning on Eve's view in this case, cannot add too much dependency between $\Ac$ and $\Bc$. However, in protocols with more messages, the queries of Eve depend on the view of both sides, and conditioning on Eve's view can potentially make the dependency more significant.

\ifdefined\IsAnon
\else
\section*{Acknowledgement} 
We thank Yuval Ishai for challenging us with this intriguing question, and  Omer Rotem for very useful discussions.   
\fi

\bibliographystyle{abbrvnat}
\bibliography{crypto} 
\appendix

\section{Merkle's Puzzles}\label{sec:introMP}
For completeness, we briefly  describe here the Merkle Puzzles protocol \cite{Merkle87}. Let $\cs$ be a set of size $\ell^2$, and $\FFam_\cs=\set{f:\cs\mapsto\set{0,1}^{2\log{\size{\cs}}}}$ be the family of all functions from $\cs$ to binary strings of length $2\log{\size{\cs}}$.

\begin{protocol}[Merkle's Puzzles protocol $\Pi = (\Ac,\Bc)$]\label{proto:Merkle}~
	\item [Oracle:]  $f \in \FFam_\cs$.
	\item[Operation:] ~	
	
	\begin{enumerate}
		
		\item $\Ac$ samples uniformly and independently $\ell$ elements $x_1,...,x_\ell\in \cs$, and sets $a_1=f(x_1),...,a_\ell=f(x_\ell)$.

		$\Bc$ samples uniformly and independently $\ell$ elements  $y_1,...,y_\ell\in \cs$, and set  $b_1=f(y_1),...,b_\ell=f(y_\ell)$. 
		
		\item $\Ac$ sends $a_1,...,a_\ell$ to $\Bc$.
		\item $\Bc$ looks for   indices $i,j\in [\ell]$ with  $a_i = b_i$. If no such indices exists,  it  aborts.
		
		\item $\Bc$ sends $i$ to $\Ac$.
		
		\item $\Ac$ outputs $x_i$ and $\Bc$ outputs $y_j$.
	\end{enumerate}
\end{protocol} 

Since each party samples $\ell = \sqrt{\size{\cs}}$ uniform random element from $\cs$, by the birthday paradox they have a common element (\ie collision) with constant probability. By construction, the parties out  the same collision, if such exists. On the other hand, from an attacker point of view the collision is a random element of $\cs$, and therefore she cannot find it with good probability without querying a constant fraction of the element of $\cs$, namely by making $\Theta(\ell^2)$ queries.  

Note that Merkle Puzzles is non-adaptive, uniform-queries, two-message protocol with near linear communication, and therefore shows that our two lower bounds (\cref{thm:mainUniformIntro,thm:MainArbitraryIntro}) are tight.

\section{Missing Proofs}\label{sec:Appendix}

\begin{proof}[Proof of \cref{lemma:ICondAdd}]

	\begin{align*}
	\MI(\rv{A};\rv{B}|\rv{C},\rv{D})-I(\rv{A};\rv{B}|\rv{C})&\\
	&=\HH(\rv{A}|\rv{C},\rv{D})-\HH(\rv{A}|\rv{B},\rv{C},\rv{D})-[\HH(\rv{A}|\rv{C})-\HH(\rv{A}|\rv{B},\rv{C})]\\
	&=\HH(\rv{A}|\rv{C},\rv{D})-\HH(\rv{A}|\rv{C})-[\HH(\rv{A}|\rv{B},\rv{C},\rv{D})-\HH(\rv{A}|\rv{B},\rv{C})]\\
	&=\MI(\rv{A};\rv{D}|\rv{C},\rv{B})-\MI(\rv{A};\rv{D}|\rv{C})\\
	\end{align*}
	The inequalities hold by the fact that mutual information is always positive.
\end{proof}

\begin{proof}[Proof of \cref{lemma:RVToIndicator}]

	\begin{align*}
	\HH(\rv{J})= &\pr{\rv{J}=1}\log{\frac{1}{\pr{\rv{J}=1}}}+\pr{\rv{J}=0}\log{\frac{1}{\pr{\rv{J}=0}}} \\
	&\leq \pr{\rv{J}=1}\log{\frac{1}{\pr{\rv{J}=1}}}+\log{\frac{1}{1-\pr{\rv{J}=1}}}
	\end{align*}
	Let $f(x) = log{\frac{1}{1-x}} -4x$.
	We need to show that $f(x)\leq 0$  for all $0\leq x\leq 1/2 $. $f(0)=0$, therefore it is enough to show that $f'(x)\leq 0$.  
	\begin{align*}
	f'(x)= &\frac{1}{\ln 2}\frac{1}{1-x}-4\\
	&\leq 2\frac{1}{1-x}-4 \leq 4-4 = 0\tag{$0\leq x\leq 1/2$}
	\end{align*}
\end{proof}

\begin{proof}[Proof of \cref{lemma:IChainRuleIndi}]

	\begin{align*}
	\MI(\rv{A};\rv{B}|\rv{M},\rv{E}_M)
	&=\sum_{m\in \rv{M}}\pr{\rv{M}=m}\MI(\rv{A};\rv{B}|\rv{M}=m,\rv{E}_m)
	\\
	&=
	\sum_{m\in \rv{M}}\pr{\rv{J}_m=1}\MI(\rv{A};\rv{B}|\rv{J}_m=1,\rv{E}_m)
	\\
	&\begin{aligned}
	\leq \sum_{m\in \rv{M}}\big[&\pr{\rv{J}_m=1}\MI(\rv{A};\rv{B}|\rv{J}_m=1,\rv{E}_m)\\
	&+\pr{\rv{J}_m=0}\MI(\rv{A};\rv{B}|\rv{J}_m=0,\rv{E}_m)\big]
	\end{aligned}
	\tag{Because $\MI$ is non-negative}
	\\
	&=
	\sum_{m\in \rv{M}}\MI(\rv{A};\rv{B}|\rv{J}_m,\rv{E}_m)
	\\
	&\leq
	\sum_{m\in \rv{M}}\MI(\rv{A},\rv{J}_m;\rv{B}|\rv{E}_m)
	\tag{Chain rule}
	\\
	&=
	\sum_{m \in \rv{M}}\big[\MI(\rv{A};\rv{B}|\rv{E}_m)+\MI(\rv{J}_m;\rv{B}|\rv{E}_m,\rv{A})\big]
	\tag{Chain rule}
	\end{align*}
\end{proof}

\begin{proof}[Proof of \cref{lemma:SDChainRule}]
	\begin{align*}
	&\ex{c \getsr \rv{C}}{\SDC{\rv{A},\rv{B}|_{\rv{C}=c}}{\rv{A}|_{\rv{C}=c}\times \rv{B}|_{\rv{C}=c}} }\\
	&\leq\ex{c \getsr \rv{C}} { \SDC{\rv{A},\rv{B}|_{\rv{C}=c}}{\rv{A}|_{\rv{C}=c}\times \rv{B}} +\SDC{\rv{A}|_{\rv{C}=c}\times \rv{B}}{\rv{A}|_{\rv{C}=c}\times \rv{B}|_{\rv{C}=c}}}\tag{Triangle inequality}\\
	&=\ex{c \getsr \rv{C}} {\SDC{\rv{A},\rv{B}|_{\rv{C}=c}}{\rv{A}|_{\rv{C}=c}\times \rv{B}} +\SDC{ \rv{B}}{ \rv{B}|_{\rv{C}=c}}}\tag{\cref{fact:SDRemoveProduct}}\\
	&\leq\E_{c \getsr \rv{C}} \left[ \SDC{\rv{A},\rv{B}|_{\rv{C}=c}}{\rv{A}|_{\rv{C}=c}\times \rv{B}} +\SDC{ \rv{A}|_{\rv{C}=c}\times\rv{B}}{ \rv{A},\rv{B}|_{\rv{C}=c}}\right]\tag{Data procesing}\\
	&=2\E_{c \getsr \rv{C}} \left[ \SDC{\rv{A},\rv{B}|_{\rv{C}=c}}{\rv{A}|_{\rv{C}=c}\times \rv{B}} \right]\\
	&=2\ISDC{\rv{A},\rv{C}}{\rv{B}}
	\end{align*} 
\end{proof}

\begin{proof}[Proof of \cref{lemma:SDCondAdd}]
	\begin{align*}
	&\SDC{\rv{M}\times \rv{A}}{\rv{M},\rv{A}}\\
	& \leq \ISDCR{\rv{M},\rv{B}}{\rv{A}} \tag{Data processing}\\
	& =  \ex{b \getsr B}{\SDC{\rv{M}|_{\rv{B}=b} \times \rv{A}}{\rv{A},\rv{M}|_{\rv{B}=b}}}\tag{\cref{fact:SDEx}}\\
	&\leq\ex{b\getsr B}{\SDC{\rv{M}|_{\rv{B}=b} \times \rv{A}}{ \rv{M}|_{\rv{B}=b}\times \rv{A}|_{\rv{B}=b} }+\SDC{\rv{M}|_{\rv{B}=b}\times \rv{A}|_{\rv{B}=b} }{\rv{A},\rv{M}|_{\rv{B}=b}}}\tag{Triangle inequality}\\
	&=\ex{b\getsr B}{\SDC{ \rv{A}}{ \rv{A}|_{\rv{B}=b} }+\SDC{\rv{M}|_{\rv{B}=b}\times \rv{A}|_{\rv{B}=b} }{\rv{A},\rv{M}|_{\rv{B}=b}}}\tag{\cref{fact:SDRemoveProduct}}\\
	&=\SDC{\rv{A} \times \rv{B}}{\rv{A},\rv{B}} +\E_{b\getsr \rv{B}}\SDC{\rv{M}|_{\rv{B}=b}\times \rv{A}|_{\rv{B}=b} }{\rv{A},\rv{M}|_{\rv{B}=b}}\tag{\cref{fact:SDEx}}
	\end{align*}
\end{proof}

\begin{proof}[Proof of \cref{lemma:SDVarRep}]
	\begin{align*}
	&\E_{m\getsr \rv{M}}\big[\SDC{\rv{A},\rv{B}|_{\rv{M}=m}}{ \rv{A}|_{\rv{M}=m} \times \rv{B}|_{\rv{M}=m}}\big]\\
	&=\ex{m,b\getsr \rv{M},\rv{B}}{\big[\SDC{\rv{A}|_{\rv{M}=m,\rv{B}=b}}{ \rv{A}|_{\rv{M}=m}}\big]}\tag{\cref{fact:SDEx}}\\
	&\begin{aligned}
\leq&\E_{m,b\getsr \rv{M},\rv{B}}\big[\SDC{\rv{A}|_{\rv{M}=m,\rv{B}=b}}{ \rv{A}|_{\rv{B}=b}} \\
&+\SDC{ \rv{A}|_{\rv{B}=b} }{\rv{A}}+\SDC{ \rv{A}  }{  \rv{A}|_{\rv{M}=m}}\big]
	\end{aligned}\tag{Triangle inequality}\\
	&\begin{aligned}
	=&\ex{b\getsr \rv{B}}{\SDC{\rv{A},\rv{M}|_{\rv{B}=b}}{ \rv{M}|_{\rv{B}=b}\times \rv{A}|_{\rv{B}=b} }}\\
	&+\SDC{\rv{A},\rv{B}}{\rv{B}\times \rv{A} }+\SDC{\rv{M}\times \rv{A}}{\rv{A},\rv{M}}
	\end{aligned}\tag{\cref{fact:SDEx}}\\
	&\leq2\E_{b\getsr \rv{B}}\big[\SDC{\rv{A},\rv{M}|_{\rv{B}=b}}{\rv{M}|_{\rv{B}=b}\times \rv{A}|_{\rv{B}=b}} \big]+2\SDC{\rv{A},\rv{B}}{\rv{A}\times \rv{B}}\tag{\cref{lemma:SDCondAdd}}\\
	\end{align*}
\end{proof}

\begin{proof}[Proof of \cref{claim:goodK}]

	Let $t\eqdef \ex{(\setx,\sety) \getsr (\setX,\setY)}{\size{\setx \cap \sety}}$ be the expected intersection size. We  show below that
	\begin{align}\label{eq:gapex}
	\sum_{i=0}^{\floor{4t/(1-\alpha-\gamma)}}\ppr{(\setx,\sety) \getsr (\setX, \setY)}{\size{\setx\cap\sety}=i} \cdot \Suc{i}\geq (1-\alpha-\gamma)/4
	\end{align}
	
	It will then follows that  $\exists d \le  4t/(1-\alpha-\gamma)$ such that $\Suc{d}\geq (1-\alpha-\gamma)/4$.
	We conclude the proof by showing that $t=\ell^2/\size{\cs}$, and therefore $d\le 4\ell^2/\size{\cs}(1-\alpha-\gamma)$. By linearity of expectation,
	\begin{align*}
	t& = \ex{(\setx,\sety) \getsr (\setX, \setY)}{\size{\setx\cap\sety}}= \sum_{i=1}^{\ell}\ex{(\setx,\sety) \getsr (\setX, \setY)}{\setx_i\in\sety}=\ell^2/\size{\cs}.
	\end{align*}
	
	So it is left to prove  \cref{eq:gapex}. We   first show that the expected value of $\Suc{i}$ is at least $(1-\alpha-\gamma)/2$.
	\begin{align}\label{eq:expect}
	\lefteqn{\ex{(\setx,\sety) \getsr (\setX, \setY)}{\Suc{\size{\setx\cap\sety}}}}\\
	&\begin{aligned}
	&=\ex{{(\setx,\sety) \getsr (\setX, \setY)}}{\ppr{v\getsr \Pcom(\setX,\setY)}{\out^{\Bcom}(v)=\out^{\Acom}(v)\mid \size{\x(v)\cap\y(v)}=\size{\setx\cap\sety}}}\\
	&\quad -\ex{(\setx,\sety) \getsr (\setX, \setY)}{\ppr{v\getsr \Pdist(\setX,\setY)}{\out^{\Bdist}(v)=\out^{\Adist}(v)\mid \size{\x(v)\cap\y(v)}=\size{\setx\cap\sety}}}
	\end{aligned}\nonumber\\
	&=\ppr{v\getsr \Pcom(\setX,\setY)}{\out^{\Bcom}(v)=\out^{\Acom}(v)}-\ppr{v\getsr \Pdist(\setX,\setY)}{\out^{\Bdist}(v)=\out^{\Adist}(v)\nonumber}
	\\&\geq (1-\alpha-\gamma)/2.\nonumber
	\end{align}
	It follows that 
	
	\begin{align*}
	&\sum_{i=0}^{\floor{4t/(1-\alpha-\gamma)}}\ppr{(\setx,\sety) \getsr (\setX, \setY)}{\size{\setx\cap\sety}=i} \cdot \Suc{i}\\
	&\begin{aligned}
	=&\ex{(\setx,\sety) \getsr (\setX, \setY)}{\Suc{\size{\setx\cap\sety}}}\\
	&-\sum_{i=\floor{4t/(1-\alpha-\gamma)}+1}^\ell\ppr{(\setx,\sety) \getsr (\setX, \setY)}{\size{\setx\cap\sety}=i} \cdot \Suc{i}
	\end{aligned}\\\tag{\cref{eq:expect}}
	&\geq (1-\alpha-\gamma)/2-\sum_{i=\floor{4t/(1-\alpha-\gamma)}+1}^\ell\ppr{(\setx,\sety) \getsr (\setX, \setY)}{\size{\setx\cap\sety}=i} \cdot \Suc{i}\\\tag{$\Suc{i}\leq 1$}
	&\geq (1-\alpha-\gamma)/2-\sum_{i=\floor{4t/(1-\alpha-\gamma)}+1}^\ell\ppr{(\setx,\sety) \getsr (\setX, \setY)}{\size{\setx\cap\sety}=i}\\
	&\geq (1-\alpha-\gamma)/2-\ppr{\setx\getsr \setX,\sety\getsr \setY}{\size{\setx\cap\sety}\geq 4t/(1-\alpha-\gamma)}\\
	&\geq (1-\alpha-\gamma)/4.\tag{Markov inequality},
	\end{align*}
	and the the proof of the claim follows.
\end{proof}

\begin{proof}[Proof of \cref{lemma:E0}]

	For a mapping $R : \cE_0 \rightarrow \set{0,1}^n$ representing the answers to the queries in $\cE_0$,
	let
	\begin{equation*}
	\FFam^R = \set{f \in \FFam_n\st f\big|_{\cE_0} =R} .
	\end{equation*}
	In words, it is the set of oracles whose answers on $\cE_0$ agree with $R$.
	
	We show that there is $R$ so that the protocol  $\Pi^{\FFam^R}$, where the answers to $\cE_0$ are fixed to agree with $R$, is a $(q-|\cE_0|,\alpha,\gamma)$-key agreement protocol.
	We then define $\Theta$ to be the simulation of $\Pi^{\FFam^R}$ where for each query in $\cE_0$, instead of querying the oracle the players use the answer from $R$.
	
	In $\Theta$, the queries in $\cE_0$ are never asked, so they are no longer heavy.
	Moreover, no new heavy queries are created, because 
	the protocol is non-adaptive;
	the queries $\rv{X},\rv{Y}$ asked by the players 
	do not change when we fix the answers in $\cE_0$.
	
	
	Now let us choose $R$.
	First, observe that consistency is maintained for \emph{any} setting of $R$:
	for each $f \in \FFam_n$,
	\begin{equation*}
	\ppr{v\getsr \Pi^f}{\out^\Ac(v)=\out^\Bc(v)}\geq 1-\alpha .
	\end{equation*}
	In particular this holds for $f \in \FFam^R$ for any $R$. 

	As for secrecy, assume for the sake of contradiction that there is no $R$ under which $\Pi$ is $(q-|\cE_0|,\gamma)$-secure \wrt $\FFam^R$;
	that is,
	for each $R : \cE_0 \rightarrow \set{0,1}^n$
	there exists an attacker $\Eve_R$ that asks $q-|\cE_0|$ queries
	such that 
	\begin{equation*}
	\ppr{f \getsr \FFam^R, v\getsr \Pi^f}{\Eve_R^f(trans(v))=\out^\Ac(v)}\geq \gamma.
	\end{equation*}	
	Define an attacker $\Eve$ that breaks the original protocol $\Pi$ as follows:
	First, $\Eve$ queries $\cE_0$; let $R$ be the answers she receives.
	Next, $\Eve$ simply runs $\Eve_R$.
	We have:
	\begin{align*}
	&\ppr{\rv{f} \getsr \FFam_n, v\getsr \Pi^f}{\Eve^f(trans(v))=\out^\Ac(v)}
	\geq \gamma.
	\end{align*}
	This contradicts the secrecy of $\Pi$.

\end{proof}

\begin{proof}[Proof of \cref{lemma:Y1}]

	In $\Theta$, the players execute the original protocol $\Pi$, but with the following changes:
	\begin{itemize}
		\item In the beginning of the protocol, $\Bc$ asks one additional query $\rv{Y}_{\ell+1}$.
		This query is chosen uniformly at random and independently of his other queries (and is not used by $\Pi$).
		\item $\Ac$ then sends her message $\rv{M}_1$ just as she would under $\Pi$, and $\Bc$ computes his message $\rv{M}_2$ under $\Pi$,
		and the secret key $\out^{\Bc}$ that he would output in $\Pi$.
		\item $\Bc$ sends $\Ac$ the message $\rv{M}_2, \rv{b}$,
		where $\rv{b} = \out^{\Bc} \xor (\rv{Y}_{\ell+1})_1$ is an additional bit $\Bc$ appends to the message.
		\item $\Bc$ outputs $(\rv{Y}_{\ell+1})_1$ as his secret key.
		\item $\Ac$ computes $\out^{\Ac}$ as in $\Pi$, and outputs 
		$\out^{\Ac} \xor \rv{b}$.
	\end{itemize}
	Whenever $\out^{\Ac} = \out^{\Bc}$, $\Ac$'s output agrees with $\Bc$'s.
	The consistency of the new protocol, therefore, is the same as $\Pi$'s.
	
	For secrecy, let $\rv{F}$ be the random oracle, and assume there is $\Eve^\rv{F}$ that breaks the secrecy of $\Theta$. Namely, $\Eve^\rv{F}$ can guess the output of $\Ac$ with probability at least $\gamma$. Note that $(\rv{Y}_{\ell+1})_1$ is a uniform random bit independent of $\rv{M}_1, \rv{M}_2$ and $\rv{F}$. Thus, we can think that in $\Theta$, $\Bc$ chooses the value of $\out^\Bc\oplus(\rv{Y}_{\ell+1})_1$ \emph{after} $\rv{M}_2$ was sent.
	\begin{itemize}
		\item Given a transcript $\rv{M}_1$ and $\rv{M}_2$,
		the eavesdropper $\widehat{\Eve}^\rv{F}$ chooses a uniform random bit $\rv{b}$.
		\item $\widehat{\Eve}^\rv{F}$ runs $\Eve^\rv{F}(\rv{M}_1,\rv{M}_2,\rv{b})$. Let $\out^\Eve$ be $\Eve^\rv{F}$'s output.
		\item $\widehat{\Eve}^\rv{F}$ outputs $\rv{b}\oplus \out^\Eve$.
	\end{itemize}
	Since $\rv{M}_1,\rv{M}_2,\rv{b}$ are distributed exactly as in $\Theta$, we have that $\widehat{\Eve}^\rv{F}$ breaks $\Pi$ with the same probability $\Eve^\rv{F}$ does, and with the same number of queries. 
\end{proof}

\begin{proof}[Proof of \cref{lemma:Hybrid}]

	For $z\in \rv{Z}$, let $(\rv{T'}|_z)$ be distributed as the marginal distribution of $(\rv{T}|_{Z=z})$. From the triangle inequality for statistical distance, we get:
	\begin{align*}                                                   
	&\ex{z,a_{g(z)} \getsr \rv{Z},\rv{A}_{g(z)}}{\SDC{\rv{A}_\rv{T},\rv{T},\rv{B}|_{z,a_{g(z)}}}{(\rv{A}_\rv{T},\rv{T})|_{z,a_{g(z)}}\times \rv{B}|_{z,a_{g(z)}}}}\\
	&\begin{aligned}
	\leq&\ex{z,a_{g(z)} \getsr \rv{Z},\rv{A}_{g(z)}}{\SDC{(\rv{A}_\rv{T},\rv{T},\rv{B})|_{z,a_{g(z)}}}{(\rv{A}_{\rv{T}'}, \rv{T}',\rv{B})|_{z,a_{g(z)}}}}\\
	&+\ex{z,a_{g(z)} \getsr \rv{Z},\rv{A}_{g(z)}}{\SDC{(\rv{A}_\rv{T'},\rv{T'},\rv{B})|_{z,a_{g(z)}}}{(\rv{A}_{\rv{T}'},\rv{T}')|_{z,a_{g(z)}}\times \rv{B}|_{z,a_{g(z)}}}}\\
	&+\ex{z,a_{g(z)} \getsr \rv{Z},\rv{A}_{g(z)}}{\SDC{(\rv{A}_{\rv{T}'},\rv{T}')|_{z,a_{g(z)}}\times \rv{B}|_{z,a_{g(z)}}}{(\rv{A}_\rv{T},\rv{T})|_{z,a_{g(z)}}\times \rv{B}|_{z,a_{g(z)}}}}.
	\end{aligned}	
	\end{align*}
	We bound each term above separately:
	the first term is bounded by $\delta$, because by \cref{fact:SDEx} and the data processing inequality, we have
	\begin{align*}
	&\ex{z,a_{g(z)} \getsr \rv{Z},\rv{A}_{g(z)}}{\SDC{(\rv{A}_\rv{T},\rv{T},\rv{B})|_{z,a_{g(z)}}}{(\rv{A}_{\rv{T}'}, \rv{T}',\rv{B})|_{z,a_{g(z)}}}}\\
	&=\ex{z \getsr \rv{Z}}{\SDC{\rv{A}_\rv{T},\rv{A}_{g(z)},\rv{T},\rv{B})|_z}{\rv{A}_{\rv{T}'},\rv{A}_{g(z)}, \rv{T}',\rv{B})|_{z}}}\\
	&\leq\ex{z \getsr \rv{Z}}{\SDC{(\rv{A},\rv{T},\rv{B})|_z}{(\rv{A},\rv{T}',\rv{B})|_z}}\\
	&= \ex{z \getsr \rv{Z}}{\ISD{\rv{A},\rv{B}}{\rv{T}}{z}} = \delta.
	\end{align*}
	Similarly, the third term is also bounded by $\delta$, as by data processing,
	\begin{align*}
	&\ex{z,a_{g(z)} \getsr \rv{Z},\rv{A}_{g(z)}}{\SDC{(\rv{A}_{\rv{T}'},\rv{T}')|_{z,a_{g(z)}}\times \rv{B}|_{z,a_{g(z)}}}{(\rv{A}_\rv{T},\rv{T})|_{z,a_{g(z)}}\times \rv{B}|_{z,a_{g(z)}}}}\\
	&=\ex{z,a_{g(z)} \getsr \rv{Z},\rv{A}_{g(z)}}{\SDC{(\rv{A}_{\rv{T}'},\rv{T}')|_{z,a_{g(z)}}}{(\rv{A}_\rv{T},\rv{T})|_{z,a_{g(z)}}}}\\
	&=
	\ex{z \getsr \rv{Z}}
	{\SDC{(\rv{A}_{\rv{T}'},\rv{A}_{g(z)},\rv{T}')|_{z}}{(\rv{A}_\rv{T},\rv{A}_{g(z)},\rv{T})|_{z}}}\\
	& \leq\ex{z \getsr \rv{Z}}{\ISD{\rv{A},\rv{B}}{\rv{T}}{z}}
	=\delta.
	\end{align*}
	Finally, for the second term, we can write
	\begin{align*}
	&\ex{z,a_{g(z)} \getsr \rv{Z},\rv{A}_{g(z)}}{\SDC{(\rv{A}_\rv{T'},\rv{T'},\rv{B})|_{z,a_{g(z)}}}{(\rv{A}_{\rv{T}'},\rv{T}')|_{z,a_{g(z)}}\times \rv{B}|_{a_{g(z)},z}}}\\
	&= \ex{z\getsr \rv{Z}, a_g(z)\getsr \rv{A}_g(z)}{\ex{t \getsr \rv{T}|_{z}}{\ISD{\rv{A}_{t}}{\rv{B}}{{a_{g(z)},z}}}}\tag{\cref{fact:SDEx}}\\
	&\leq\ex{z\getsr \rv{Z}, a_g(z)\getsr \rv{A}_g(z)}{\ex{t \getsr \rv{T}|_{z}}{2\sqrt{\MI(\rv{A}_t;\rv{B}|{z,a_g(z))}}}}
	\tag{\cref{fact:Pinsker}}\\
	&\leq 2\sqrt{\ex{z \getsr \rv{Z}}{\ex{t \getsr T|_{z}}{\MI(\rv{A}_t;\rv{B}|{z,\rv{A}_{g(z)})}}}}
	\tag{\cref{fact:Jensen}}\\
	& = 2\sqrt{\epsilon}.
	\end{align*}
\end{proof}

\begin{proof}[Proof of \cref{claim:deltaM1}]

	Recall that we denote by $X_{t,<i}$
	the restriction of $X$ to coordinates in $t$ that are less than $i$.
	Write
	\begin{align*}
	&
	\MI(\rv{M}_1;\rv{F}(\rv{X}_\rv{T})|\rv{T},\rv{X})
	\\
	&=
	\ex{x\getsr\rv{X}}{\ex{t\getsr \rv{T}\mid \rv{X}=x}{\MI(\rv{M}_1;\rv{F}(X_t)|\rv{T}=t,\rv{X}=x)}}
	\\
	&=
	\ex{x\getsr\rv{X}}{\ex{t\getsr \rv{T}\mid \rv{X}=x}{\sum_{i\in t}\MI(\rv{M}_1;\rv{F}(X_i)|\rv{T}=t,\rv{X}=x,\rv{F}(X_{t, <i}))} }. \tag{Chain rule}
	\end{align*}
	For fixed $x,t,i$, by the chain rule,
	\begin{align*}
	&  
	\MI(\rv{M}_1 ;\rv{F}(X_i)|\rv{T}=t,\rv{X}=x,\rv{F}(X_{t , <i})) \\
	& \leq	\MI(\rv{M}_1, \rv{F}(X_{\{1,\ldots,i-1\} \setminus t}) ;\rv{F}(X_i)|\rv{T}=t,\rv{X}=x,\rv{F}(X_{t , < i})) \\
	&\begin{aligned}
	= &\MI(\rv{F}(X_{\{1,\ldots,i-1\} \setminus t}) ;\rv{F}(X_i)|\rv{T}=t,\rv{X}=x,\rv{F}(X_{t , <i}))\\
	& +	\MI(\rv{M}_1;\rv{F}(X_i)|\rv{T}=t,\rv{X}=x,\rv{F}(X_{<i})) 
	\end{aligned}\\
	& 	= 0 + 	\MI(\rv{M}_1;\rv{F}(X_i)|\rv{T}=t,\rv{X}=x,\rv{F}(X_{<i})).
	\end{align*}
	Conditioned on $\rv{X}$, $\Ac$'s message $\rv{M}_1$ and the oracle $\rv{F}$ are independent of $\Bc$'s queries $\rv{Y}$
	and therefore also from the intersection $\rv{T}$.
	Therefore,
	
	\begin{align*}
	\MI(\rv{M}_1;\rv{F}(\rv{X}_\rv{T})|\rv{T},\rv{X})
	&	\leq 	\ex{x\getsr\rv{X}}{\ex{t\getsr \rv{T}\mid \rv{X}=x}{\sum_{i\in t}\MI(\rv{M}_1;\rv{F}(X_i)|\rv{T}=t,\rv{X}=x,\rv{F}(X_{<i}))}} \\
	&	= 	\ex{x\getsr\rv{X}}{\ex{t\getsr \rv{T}\mid \rv{X}=x}{\sum_{i\in t}\MI(\rv{M}_1;\rv{F}(X_i)|\rv{X}=x,\rv{F}(X_{<i}))}} \\
	&	= 	\ex{x\getsr\rv{X}}{\sum_i \pr{i \in T\mid \rv{X}=x} \MI(\rv{M}_1;\rv{F}(X_i)|\rv{X}=x,\rv{F}(X_{<i}))} .
	\end{align*}
	From the assumption that no queries are heavy a priori, 
	$\pr{i \in T\mid \rv{X}=x} \leq \delta$ for all $i$.
	Finally,
	\begin{align*}
	\MI(\rv{M}_1;\rv{F}(\rv{X}_\rv{T})|\rv{T},\rv{X})
	&	\leq \delta \sum_i  \MI(\rv{M}_1;\rv{F}(X_i)|\rv{X},\rv{F}(X_{<i})) \\
	&	= \delta \MI(\rv{M}_1;\rv{F}(X)|\rv{X}) \tag{Chain rule} \\
	& \leq \delta |M_1|. 	\tag{\cref{fact:IBounds}}
	\end{align*}
\end{proof}

\begin{proof}[Proof of \cref{claim:M2ReduceToIntersecton}]
	 From \cref{claim:oneMessage,cor:SDVarRep} we get that:
\begin{align*}
	&\ex{ v_E \getsr \Pi_{\Eve}^{\rv{F}} }{\ISD{\rv{X},\rv{F}(\rv{X})}{\rv{Y}}{v_E}}
	\\
	&\begin{aligned}
		\leq 2\E_{m_1,f(e_1),y\getsr\rv{M}_1, \rv{F}(\rv{\cE}_1),\rv{Y}}\SDL{\rv{X},\rv{F}(\rv{X}),\rv{M}_2|_{m_1,f(e_1),y}}{(\rv{X},\rv{F}(\rv{X})|_{m_1,f(e_1),y})\times (\rv{M}_2|_{m_1,f(e_1),y})}.
	\end{aligned}
	\end{align*}
	For every $b_E=(y,m_1,f(e_1\cap y))$,
	\begin{align*}
	&\E_{f(e_1)\getsr \rv{F}(e_1)|_{\rv{B}_E=b_E}}\SDC{\rv{X},\rv{F}(\rv{X}),\rv{M}_2|_{b_E,f(e_1)}}{(\rv{X},\rv{F}(\rv{X})\times \rv{M}_2)|_{b_E,f(e_1)}}\\
	&\leq 2\SDC{\rv{X},\rv{F}(\rv{X}),\rv{F}(e_1),\rv{M}_2|_{b_E}}{\rv{X},\rv{F}(\rv{X}),\rv{F}(e_1)|_{b_E}\times \rv{M}_2|_{b_E}}
	\tag{\cref{lemma:SDChainRule}}\\
	&= 2\ex{x,f(x)\getsr\rv{X},\rv{F}(\rv{X})|_{b_E}}{\SDC{\rv{F}(e_1),\rv{M}_2|_{b_E,x,f(x)}}{\rv{F}(e_1)|_{b_E,x,f(x)}\times \rv{M}_2|_{b_E}}}\tag{\cref{fact:SDEx}}
	\end{align*}
	Alice's message $\rv{M}_1$ is only a function of $\rv{X},\rv{F}(\rv{X}$),
	and Eve's queries $\cE_1$ are a function of $\rv{M}_1$.
	Thus, since $\rv{F}$ is a random function,
	$\rv{F}(\cE_1\setminus\rv{Y})$ is independent from $\rv{F}(\rv{Y}\setminus\cE_1)$ conditioned on $\rv{M}_1,\rv{Y}, \rv{F}(\cE_1\cap\rv{Y}), \rv{X},\rv{F}(\rv{X})$.
	Next, because $\rv{M}_2$ is a function of $\rv{M}_1$,$\rv{Y}$ and $\rv{F}(\rv{Y})$,
	we have that $\rv{M}_2$ is independent from  $\rv{F}(\cE_1\setminus\rv{Y})$ under the same conditioning.
	
	We get that the distribution $(\rv{F}(e_1),\rv{M}_2|_{b_E,x,f(x))}$ is equal to 
	\begin{align*}
	\rv{F}(e_1)|_{b_E,x,f(x)}\times \rv{M}_2|_{b_E,x,f(x)},
	\end{align*}
	and therefore,
	\begin{align*}
	&\begin{aligned} \ex{x,f(x)\getsr\rv{X},\rv{F}(\rv{X})|_{b_E}}{\SDC{\rv{F}(e_1),\rv{M}_2|_{b_E,x,f(x)}}{\rv{F}(e_1)|_{b_E,x,f(x)}\times \rv{M}_2|_{b_E}}}
	\end{aligned}\\
	&\begin{aligned} =\ex{x,f(x)\getsr\rv{X},\rv{F}(\rv{X})|_{b_E}}{\SDC{\rv{F}(e_1)|_{b_E,x,f(x)}\times \rv{M}_2|_{b_E,x,f(x)}}{\rv{F}(e_1)|_{b_E,x,f(x)}\times \rv{M}_2|_{b_E}}}
	\end{aligned}\\
	&\begin{aligned} =\ex{x,f(x)\getsr\rv{X},\rv{F}(\rv{X})|_{b_E}}{\SDC{ \rv{M}_2|_{b_E,x,f(x)}}{\rv{M}_2|_{b_E}}}
	\end{aligned}\tag{\cref{fact:SDRemoveProduct}}\\
	&\begin{aligned} =\SDC{\rv{X},\rv{F}(\rv{X}),\rv{M}_2|_{b_E}}{(\rv{X},\rv{F}(\rv{X})\times \rv{M}_2)|_{b_E}}.
	\end{aligned}\tag{\cref{fact:SDEx}}\\
	\end{align*}

	Now we can show all the dependence comes from the intersection. Since $\rv{T}_1$ is a function of $\rv{Y}$, $\rv{X}$ and $\cE_1$, and $\cE_1$ is a function of $\rv{M}_1$, we get that
	\begin{align*}
	&\begin{aligned} \SDC{\rv{X},\rv{F}(\rv{X}),\rv{M}_2|_{b_E}}{(\rv{X},\rv{F}(\rv{X})\times \rv{M}_2)|_{b_E}}
	\end{aligned}\\
	&\begin{aligned}
	=\SDC{\rv{X},\rv{T}_1,\rv{F}(y_{\rv{T}_1}),\rv{F}(\rv{X}),\rv{M}_2|_{b_E}}{
		\rv{X},\rv{T}_1,\rv{F}(y_{\rv{T}_1}),\rv{F}(\rv{X})|_{b_E}\times \rv{M}_2|_{b_E}}
	\end{aligned}\\
	&\begin{aligned}
	=\ex{t,f(y_t)\getsr\rv{T}_1,\rv{F}(y_{\rv{T}_1})|_{b_E}}{\SDC{\rv{X},\rv{F}(\rv{X}),\rv{M}_2|_{b_E,t,f(y_t)}}{
			\rv{X},\rv{F}(\rv{X})|_{b_E,t,f(y_t)}\times \rv{M}_2|_{b_E}}}
	\end{aligned}\tag{\cref{fact:SDEx}}\\
	\end{align*}
	Again, $\rv{M}_2$ is a function of $\rv{Y},\rv{F}(\rv{Y})$ and $\rv{M}_1$, and $\rv{X}, \rv{F}(\rv{X})$ are independent from $\rv{F}(\rv{Y})$ conditioned on $\rv{M}_1, \rv{Y}, \rv{F}(\cE\cap\rv{Y}), \rv{T}_1, \rv{F}(\rv{Y}_{\rv{T}_1})$.
	Thus, the distribution\\ $(\rv{X},\rv{F}(\rv{X}),\rv{M}_2|_{b_E,t,f(y_t))}$ is equal to
	\begin{align*}
	\rv{X},\rv{F}(\rv{X})|_{b_E,t,f(y_t)}\times \rv{M}_2|_{b_E,t,f(y_t)},
	\end{align*}
	and we get:
	\begin{align*}
	&\begin{aligned}
	\ex{t,f(y_t)\getsr\rv{T}_1,\rv{F}(y_{\rv{T}_1})|_{b_E}}{\SDC{\rv{X},\rv{F}(\rv{X}),\rv{M}_2|_{b_E,t,f(y_t)}}{
			\rv{X},\rv{F}(\rv{X})|_{b_E,t,f(y_t)}\times \rv{M}_2|_{b_E}}}
	\end{aligned}\\
	&\begin{aligned}
	=\E_{t,f(y_t)\getsr\rv{T}_1,\rv{F}(y_{\rv{T}_1})|_{b_E}}\big[\SDL{\rv{X},\rv{F}(\rv{X})|_{b_E,t,f(y_t)}\times \rv{M}_2|_{b_E,t,f(y_t)}}{ \rv{X},\rv{F}(\rv{X})|_{b_E,t,f(y_t)}\times \rv{M}_2|_{b_E}}\big]
	\end{aligned}\\
	&\begin{aligned}
	=\ex{t,f(y_t)\getsr\rv{T}_1,\rv{F}(y_{\rv{T}_1})|_{b_E}}{\SDC{\rv{M}_2|_{b_E,t,f(y_t)}}{ \rv{M}_2|_{b_E}}}
	\end{aligned}\tag{\cref{fact:SDRemoveProduct}}\\
	&=\SDC{\rv{T}_1,\rv{F}(y_{\rv{T}_1}),\rv{M}_2|_{b_E}}{
		(\rv{T}_1,\rv{F}(y_{\rv{T}_1})\times \rv{M}_2)|_{b_E}}.\tag{\cref{fact:SDEx}}
	\end{align*}
	
	To conclude the proof, we take the expectation over $\rv{B}_E$, and the claim follows by the monotonicity of expectation.
\end{proof}

\begin{proof}[Proof of \cref{claim:deltaM2}]
	\begin{align*}
	&
	\E_{m_1,y\getsr \rv{M}_1,\rv{Y}}\E_{t\getsr \rv{T}_1|_{m_1,y}}[\MI(\rv{F}(y_t);\rv{M}_2|m_1,y,\rv{F}(e_1\cap y))]
	\\
	&=
	\E_{m_1,y\getsr \rv{M}_1,\rv{Y}}\E_{t\getsr \rv{T}_1|_{m_1,y}}\bigg[\sum_{i \in t}\MI(\rv{F}(y_i);\rv{M}_2|m_1,y,\rv{F}(e_1\cap y),\rv{F}(y_{t, <i}))\bigg]\tag{Chain rule}
	\\
	&\begin{aligned}
	=
	\E_{m_1,y\getsr \rv{M}_1,\rv{Y}}&\ex{t\getsr \rv{T}_1|_{m_1,y}}{\sum_{i \in t}\MI(\rv{F}(y_i);\rv{M}_2|m_1,y, \rv{F}(e_1\cap y),\rv{F}(y_{<i}))}\\
	& + \E_{m_1,y\getsr \rv{M}_1,\rv{Y}}\E_{t\getsr \rv{T}_1|_{m_1,y}}\bigg[\sum_{i \in t}\MI(\rv{F}(y_i);\rv{M}_2|m_1,y, \rv{F}(e_1\cap y),\rv{F}(y_{t,<i}))\\
	&-\MI(\rv{F}(y_i);\rv{M}_2|m_1,y, \rv{F}(e_1\cap y),\rv{F}(y_{<i}))\bigg]
	\end{aligned}
	\\
	&
	\begin{aligned}
	\leq
	\E_{m_1,y\getsr \rv{M}_1,\rv{Y}}&\E_{t\getsr \rv{T}_1|_{m_1,y}}\bigg[\sum_{i \in t}\MI(\rv{F}(y_i);\rv{M}_2|m_1,y,\rv{F}(e_1\cap y),\rv{F}(y_{<i}))\bigg]\\
	&+\E_{m_1,y\getsr \rv{M}_1,\rv{Y}}\E_{t\getsr \rv{T}_1|_{m_1,y}}\bigg[\sum_{i \in t}\MI(\rv{F}(y_i);\rv{F}(y_{<i})|m_1,y, \rv{F}(e_1\cap y))\bigg]
	\end{aligned}\tag{\cref{lemma:ICondAdd}}
	\\
	&\begin{aligned}
	=
	\E_{m_1,y\getsr \rv{M}_1,\rv{Y}}&\E_{t\getsr \rv{T}_1|_{m_1,y}}\sum_{i \in t}\bigg[\MI(\rv{F}(y_i);\rv{M}_2|m_1,y, \rv{F}(e_1\cap y),\rv{F}(y_{<i}))\\
	&+\MI(\rv{F}(y_i);\rv{F}(y_{<i})|m_1,y, \rv{F}(e_1\cap y))\bigg]
	\end{aligned}\\
	&\begin{aligned}
	=
	\E_{m_1,y\getsr \rv{M}_1,\rv{Y}}&\sum_{i\in [\ell]}\pr{i \in \rv{T}_1|{m_1,y}}\bigg[\MI(\rv{F}(y_i);\rv{M}_2|m_1,y, \rv{F}(e_1\cap y),\rv{F}(y_{<i}))\\
	&+\MI(\rv{F}(y_i);\rv{F}(y_{<i})|m_1,y, \rv{F}(e_1\cap y))\bigg]
	\end{aligned}
	\end{align*}
	Since we excluded the heavy queries $\cE_1$ from $\rv{T}_1$, and $y_i$ is some fixed query, and since $\rv{X}$ is independent from $\rv{Y}$ conditioned on $\rv{M}_1$ we have 
	\begin{align*}
	\pr{i\in \rv{T}_1|{m_1,y}} = \pr{y_i\in (\rv{X} \setminus \cE_1)|{m_1,y}}\leq\pr{y_i\in (\rv{X} \setminus \cE_1)|{m_1}}\\
	\leq \pr{y_i\in ((\rv{X}\cup \rv{Y}) \setminus \cE_1)|{m_1}}\leq \delta.
	\end{align*}
	Therefore,
	\begin{align*}
	&\begin{aligned}\E_{m_1,y\getsr \rv{M}_1,\rv{Y}}&\sum_{i\in [\ell]}\pr{i \in \rv{T}_1|_{m_1,y}}\bigg[\MI(\rv{F}(y_i);\rv{M}_2|m_1,y, \rv{F}(e_1\cap y),\rv{F}(y_{<i}))\\
	&+\MI(\rv{F}(y_i);\rv{F}(y_{<i})|m_1,y, \rv{F}(e_1\cap y))\bigg]
	\end{aligned}\\
	&\begin{aligned}
	\leq
	\E_{m_1,y\getsr \rv{M}_1,\rv{Y}}&\sum_{i}\delta\bigg[\MI(\rv{F}(y_i);\rv{M}_2|m_1,y, \rv{F}(e_1\cap y),\rv{F}(y_{<i}))\\
	&+\MI(\rv{F}(y_i);\rv{F}(y_{<i})|m_1,y, \rv{F}(e_1\cap y))\bigg]
	\end{aligned}
	\\
	&\begin{aligned}
	\leq
	\delta\E_{y\getsr \rv{Y}}\bigg[&\MI(\rv{F}(y);\rv{M}_2|\rv{M}_1,y,\rv{F}(\rv{\cE}_1\cap y))\\
	&+\sum_{i}\MI(\rv{F}(y_i);\rv{F}(y_{<i})|\rv{M}_1,y, \rv{F}(\rv{\cE}_1\cap y))\bigg]		
	\end{aligned}\tag{Chain rule}
	\\
	&\leq\delta\E_{y\getsr \rv{Y}}\bigg[ |\rv{M}_2|+\sum_{i}\MI(\rv{F}(y_i);\rv{F}(y_{<i})|\rv{M}_1,y,\rv{F}(\rv{\cE}_1\cap y))\bigg]\tag{\cref{fact:IBounds}}\\
	\end{align*}	
\end{proof}

\end{document}